\newcommand{\details}[1]{{}}
\newcommand{\tpl}[1]{(#1)}
\newcommand{\len}[1]{{\mathsf{len}_{#1}}}
\newcommand{\DefinedAs}{\ensuremath{\,\stackrel{\text{\textup{def}}}{=}\,}}
\newcommand{\Nat}{{\mathbb{N}}}
\newcommand{\INT}{{\mathbb{Z}}}
\newcommand{\NatP}{{\mathbb{N}_+}}
\newcommand{\Logic}{{\mathfrak{F}}}
\newcommand{\Prop}{\mathcal{AP}}
\newcommand{\Lang}{{\mathcal{L}}}
\newcommand{\LO}{\mathbb{U}}
\newcommand{\LOSup}{U}
\newcommand{\Intvs}{\mathbb{I}}
\newcommand{\IS}{\mathcal{S}}
\def\Bool{{\mathbb{B}}}
 \newcommand{\Au}{\ensuremath{\mathcal{A}}}
\newcommand{\Ku}{\ensuremath{\mathcal{K}}}
\newcommand{\Lab}{\mathit{Lab}}
\newcommand{\HS}{\text{\sffamily HS}}
\newcommand{\HL}{\text{\sffamily HL}}
\newcommand{\CHL}{\text{\sffamily CHL}}
\newcommand{\MCHL}{\text{\sffamily MCHL}}
\newcommand{\SHL}{\text{\sffamily SHL}}
\newcommand{\TwoAWA}{\text{\sffamily 2AWA}}
\newcommand{\SCHL}{\text{\sffamily SCHL}}
 \newcommand{\DHS}{\text{\sffamily DHS}}
\newcommand{\DHSS}{\text{\sffamily DHS}_{simple}}
\newcommand{\DHSFrag}{\text{\sffamily D}(\mathsf{X_1\ldots X_n})}
\newcommand{\DHSSFrag}{\text{\sffamily D}_{simple}(\mathsf{X_1\ldots X_n})}
\newcommand\DHSSF[1]{\text{\sffamily D}_{simple}(#1)}
\newcommand\DHSF[1]{\text{\sffamily D}(#1)}
\newcommand{\MTL}{\text{\sffamily MTL}}
\newcommand{\CTL}{\text{\sffamily CTL}}
\newcommand{\CTLStar}{\text{\sffamily CTL$^{*}$}}
\newcommand{\LTL}{\text{\sffamily LTL}}
\newcommand{\NWA}{\text{\sffamily NWA}}
\newcommand{\FO}{\text{\sffamily FO}}
\newcommand{\Frag}{\ensuremath{\mathcal{F}}}
\newcommand{\RelA}{\ensuremath{\mathcal{R}_A}}
\newcommand{\RelL}{\ensuremath{\mathcal{R}_L}}
\newcommand{\RelB}{\ensuremath{\mathcal{R}_B}}
\newcommand{\RelO}{\ensuremath{\mathcal{R}_O}}
\newcommand{\RelE}{\ensuremath{\mathcal{R}_E}}
\newcommand{\RelD}{\ensuremath{\mathcal{R}_D}}
\newcommand{\RelX}{\ensuremath{\mathcal{R}_X}}
\newcommand{\RelXt}{\ensuremath{\mathcal{R}_{\overline{X}}}}
\newcommand{\RelBt}{\ensuremath{\mathcal{R}_{\overline{B}}}}
 \newcommand{\RelXP}[1]{\ensuremath{\mathcal{R}_{X_{#1}}}}
\DeclareMathOperator{\hsX}{\langle X\rangle}
\DeclareMathOperator{\hsA}{\langle A\rangle}
\DeclareMathOperator{\hsL}{\langle L\rangle}
\DeclareMathOperator{\hsB}{\langle B\rangle}
\DeclareMathOperator{\hsE}{\langle E\rangle}
\DeclareMathOperator{\hsD}{\langle D\rangle}
\DeclareMathOperator{\hsO}{\langle O\rangle}
\DeclareMathOperator{\hsAt}{\langle \overline{A}\rangle}
\DeclareMathOperator{\hsBt}{\langle \overline{B}\rangle}
\DeclareMathOperator{\hsEt}{\langle \overline{E}\rangle}
\DeclareMathOperator{\hsDt}{\langle \overline{D}\rangle}
\DeclareMathOperator{\hsUX}{[X]}
\DeclareMathOperator{\hsUA}{[A]}
\DeclareMathOperator{\hsUB}{[B]}
\DeclareMathOperator{\hsUE}{[E]}
\DeclareMathOperator{\hsUD}{[D]}
\DeclareMathOperator{\hsUO}{[O]}
\newcommand{\AB}{\mathsf{AB}}
\newcommand{\ABE}{\mathsf{ABE}}
\newcommand{\BEBbarEbar}{\mathsf{BE\overline{B}\overline{E}}}
\newcommand{\BDEBbarDbarEbar}{\mathsf{BED\overline{B}\overline{D}\overline{E}}}
\newcommand{\B}{\mathsf{B}}
\newcommand{\BE}{\mathsf{BE}}
\newcommand{\ABBbar}{\mathsf{AB\overline{B}}}
\newcommand{\Always}{\textsf{G}}
\newcommand{\PAlways}{\textsf{H}}
\newcommand{\Eventually}{\textsf{F}}
\newcommand{\PEventually}{\textsf{P}}
\newcommand{\Swap}{\textsf{swap}}
\newcommand{\DBinder}{\text{$\downarrow$$x$}}
\newcommand{\Binder}[1]{\text{$\downarrow$$#1$}}
\def\NLOGSPACE{{\sc NLogspace}}
\def\PSPACE{{\sc Pspace}}
\def\EXPSPACE{{\sc Expspace}}
\def\TWOEXPSPACE{{\sc 2Expspace}}
\def\NEXPSPACE{{\sc NExpspace}}
\newcommand{\PTIME}{\mathbf{P}}
\newcommand{\NP}{\mathbf{NP}}
\newcommand{\coNP}{\mathbf{co-NP}}
\newcommand{\rec}{{\textit{rec}}}
\newcommand{\init}{{\textit{init}}}
\newcommand{\Inst}{{\textit{L}}}
\newcommand{\Inc}{{\textit{Inc}}}
\newcommand{\Inter}{{\textit{Int}}}
\newcommand{\Dec}{{\textit{Dec}}}
\newcommand{\Zero}{{\textit{Zero}}}
\newcommand{\LR}{{\mathsf{lr}}}
\newcommand{\inc}{{\mathsf{inc}}}
\newcommand{\dec}{{\mathsf{dec}}}
\newcommand{\zero}{{\mathsf{if\_zero}}}
\newcommand{\instr}{{\textit{op}}}
\newcommand{\con}{\mathsf{con}}
\newcommand{\dom}{\mathsf{dom}}
\newcommand{\MAX}{\textit{Max}}
\newcommand{\Left}{\textit{left}}
\newcommand{\Right}{\textit{right}}
\newcommand{\LNext}{\textit{left\_next}}
\newcommand{\RNext}{\textit{right\_next}}
\newcommand{\Instance}{\mathcal{I}}
\newcommand{\Init}{\textit{in}}
\newcommand{\acc}{\textit{acc}}
 \newcommand{\Down}{\textit{down}}
\newcommand{\Up}{\textit{up}}
\newcommand{\Succ}{\mathsf{Succ}}
\newcommand{\Atoms}{\mathsf{Atoms}}
\newcommand{\False}{\mathsf{false}}
\newcommand{\True}{\mathsf{true}}
\newcommand{\MNF}{\textit{MNF}}
\newcommand{\OP}{\mathsf{O}}
\title{A quantitative extension of Interval Temporal Logic over infinite words}
\titlerunning{A quantitative extension of Interval Temporal Logic over infinite words}
\author{Laura Bozzelli}{University of Napoli ``Federico II'', Napoli, Italy}{}{}{}{}
\author{Adriano Peron}{University of Napoli ``Federico II'', Napoli, Italy}{}{}{}{}
 \authorrunning{ L.\ Bozzelli  and A.\ Peron}
\keywords{Interval temporal logic, Model checking, etc.}
\begin{document}

%
%

\maketitle              

\begin{abstract}
	Model checking (MC) for Halpern and Shoham's interval temporal logic \HS\ has been recently investigated in a systematic way, and it is known to be decidable under three distinct semantics (state-based, trace-based and tree-based semantics), all of them assuming \emph{homogeneity} in the propositional valuation.
Here, we focus on the \emph{trace-based semantics}, where the main semantic entities are the  infinite execution paths (traces) of the given Kripke structure. 
We introduce a quantitative extension of $\HS$ over traces, called \emph{Difference} $\HS$ ($\DHS$), allowing one to express timing constraints on the difference among interval lengths (\emph{durations}). 
We  show that MC and satisfiability of full $\DHS$ are in general undecidable, so, we investigate the decidability border for these 
problems by considering natural syntactical fragments of $\DHS$. In particular, we identify a maximal decidable fragment $\DHSS$ of $\DHS$ proving in addition that the considered problems for this fragment are at least  \TWOEXPSPACE-hard. Moreover, by exploiting new results on linear-time hybrid logics, we show that for an equally expressive fragment of
$\DHSS$, the problems are  \EXPSPACE-complete. Finally, we provide a characterization of $\HS$ over traces by means of the one-variable fragment of a novel hybrid logic.
\end{abstract}
\section{Introduction}


\emph{Interval Temporal Logics} (ITLs, see~\cite{HS91,DigitalCircuitsThesis,Ven90}) provide an alternative setting for reasoning about time with respect to the more popular \emph{Point-based} Temporal Logics (PTLs) whose most known representatives are  the linear-time temporal logic $\LTL$~\cite{pnueli1977temporal} and the branching-time temporal logics $\CTL$ and $\CTLStar$~\cite{emerson1986sometimes}.
ITLs assume intervals, instead of points, as their primitive temporal entities
allowing one to specify temporal properties that involve, e.g., actions with duration, accomplishments, and temporal aggregations, which are inherently ``interval-based'', and thus cannot be naturally expressed by PTLs. 
The most prominent example of ITLs is \emph{Halpern and Shoham's modal logic of time intervals} (\HS)~\cite{HS91}
 featuring modalities for any Allen's
	relation~\cite{All83} (apart from equality).
The satisfiability problem for \HS\ turns out to be highly undecidable for all interesting (classes of) linear orders~\cite{HS91} both for the full logic and  most of its fragments~\cite{BresolinMGMS14,Lod00,MM14}.

Model checking (MC) of (finite) Kripke structures against \HS\ has been  investigated in recent papers~\cite{LM13,LM14,LM16,MolinariMMPP16,BozzelliMMPS18,BozzelliMMPS19,BozzelliMMPS19b,MolinariMP18,BozzelliMMP20,BozzelliMMPS22} which provide more encouraging results.
In the model checking setting, each finite path of a Kripke structure is an \emph{interval} having a labelling derived from the labelling of the component states: a proposition letter holds over an interval if and only if it holds over each component state (\emph{homogeneity assumption}~\cite{Roe80}).
Most of the results have been obtained by adopting the so-called \emph{state-based semantics}~\cite{MolinariMMPP16}:
intervals/paths are ``forgetful" of the history leading to their starting state, and time branches both in the future and in the past. In this setting, MC of full $\HS$ is decidable: the problem  is at least
\EXPSPACE-hard~\cite{BMMPS16}, while the only known upper bound is non-elementary~\cite{MolinariMMPP16}.
The known complexity bounds  for full $\HS$ coincide with those for the linear-time fragment $\BE$ of $\HS$ which features modalities $\hsB$  and $\hsE$ for prefixes and suffixes.
 Whether or not model checking for $\BE$ can be solved elementarily is a difficult open question.
On the other hand, in the state-based setting, the exact complexity of MC for many meaningful (linear-time or branching-time) syntactic
 fragments of $\HS$, which ranges from $\coNP$ to $\PTIME^{\NP}$, \PSPACE, and beyond, has been determined in a series of papers~\cite{BozzelliMMPS18,BozzelliMMPS19b,BozzelliMP21,BozzelliMPS21,BozzelliMMPS22}.
\newline
The expressiveness of $\HS$ with the state-based semantics  has been studied in~\cite{BozzelliMMPS19}, together  with other two decidable variants:
	the \emph{computation-tree-based semantics}   and the \emph{traces-based} one. For the first variant, past is linear: each
interval may have several possible future, but only a unique past. Moreover, past is finite and cumulative, and is never forgotten.
The trace-based approach instead relies on a linear-time setting, where the infinite paths (traces) of the given Kripke structure are the main semantic entities.
It is known that the computation-tree-based variant of $\HS$ is expressively equivalent to finitary  $\CTLStar$ (the variant of $\CTLStar$ with quantification over finite paths), while the trace-based variant is equivalent to $\LTL$~\cite{BozzelliMMPS19}. The state-based variant is more expressive than the computation-tree-based variant and expressively incomparable with both $\LTL$ and $\CTLStar$~\cite{BozzelliMMPS19}.  
\newline
\hspace{0.2cm} In this paper, we introduce a quantitative extension of the interval temporal logic $\HS$ under the \emph{trace-based semantics}, called
\emph{Difference} $\HS$ ($\DHS$). The extension is obtained by means of equality and inequality constraints on the temporal modalities  which allow  to specify integer bounds on the difference between the durations (lengths) of the current interval and the interval selected by the modality. The logic $\DHS$ can also encode in a succinct way constraints on the duration of the  current interval. 
Thus, the considered framework non-trivially generalizes well-known discrete-time quantitative extensions of standard $\LTL$, such as  Metric Temporal Logic (\MTL)~\cite{Koymans90}, where one can essentially  express integer bounds on the duration of the interval having as endpoints the current position and the one selected by the temporal modality. 
\newline
We prove that MC and satisfiability of full $\DHS$ are in general undecidable. 
Thus, we investigate the decidability border of these problems by considering the syntactical fragments of $\DHS$ obtained by restricting the set of allowed constrained modalities.  In particular, we prove that for the syntactical fragment, namely $\DHSS$, whose constrained temporal modalities are associated with the  Allen's relations  subsuming the subset relation or its inverse,
the  problems are  decidable, though at least \TWOEXPSPACE-hard. On the other hand,
 we show that any constrained modality not supported by 
 $\DHSS$ is inherently problematic, since the addition to $\HS$ of such a modality  leads to undecidability. These results are a little surprising since it is well-known that under the adopted strict semantics admitting singleton intervals, all temporal modalities in $\HS$ can be expressed in terms of the ones associated with the Allen's relations subsuming the subset relation or its inverse.
Additionally, we identify an expressively complete fragment of $\DHSS$ for which satisfiability and model checking are shown to be \EXPSPACE-complete. The upper bound in \EXPSPACE\ is obtained by an elegant automaton-theoretic approach which exploits as a preliminary
step a linear-time translation of the  fragment of $\DHSS$ into a quantitative extension of the one-variable fragment of \emph{linear-time hybrid logic} $\HL$~\cite{FRS03,SW07,BozzelliL10}. Finally,  we provide a characterization of $\HS$ over traces in terms of a novel hybrid logic, namely $\SHL_1$, which lies between the one-variable and the two-variable fragment of
$\HL$. We prove that there are linear-time translations from $\HS$ formulas  into equivalent formulas of $\SHL_1$, and vice versa.

\section{Preliminaries}\label{sec:preliminary}

We fix the following notation. Let $\INT$ be the set of integers, $\Nat$ the set of natural numbers, and $\NatP\DefinedAs \Nat\setminus \{0\}$.
For a  finite or infinite word $w$ over some alphabet, $|w|$ denotes the length of $w$
($|w|=\infty$ if $w$ is infinite) and for all $0\leq i<|w|$, $w(i)$ is the
$(i+1)$-th letter of $w$.

  We fix a finite set $\Prop$ of atomic propositions. A \emph{trace} is an infinite word over $2^{\Prop}$.

Let $\Logic$ and $\Logic'$ be two logics interpreted over traces.
For a formula $\varphi\in\Logic$,  $\Lang(\varphi)$ denotes the
set of traces satisfying $\varphi$. 
Given  $\varphi\in\Logic$ and  $\varphi'\in\Logic'$,   $\varphi$ and $\varphi'$ are \emph{equivalent} if $\Lang(\varphi)=\Lang(\varphi')$.  The satisfiability problem for $\Logic$ is checking for a given  formula
$\varphi\in\Logic$, whether $\Lang(\varphi)\neq \emptyset$.\vspace{0.1cm}

\noindent\textbf{Kripke Structures.}
\details{In the context of model checking, finite state systems are usually modelled as finite Kripke  structures over a finite set \Prop$ of atomic
propositions which represent predicates over the states of the system.}
A \emph{(finite) Kripke structure} over   $\Prop$   is a tuple  $\Ku=\tpl{\Prop,S, E,\Lab,s_0}$, where  $S$ is a finite set of states,
$E\subseteq S\times S$ is a left-total transition relation, $\Lab:S\mapsto 2^{\Prop}$ is a labelling function assigning to each state $s$ the set of propositions that hold over it, and $s_0\in S$ is the initial state.
 An infinite  path $\pi$ of $\Ku$ is an infinite word over $S$ such that $\pi(0)=s_0$  and $(\pi(i),\pi(i+1))\in E$ for all $i\geq 0$. A finite path of $\Ku$ is a non-empty infix of some infinite path of $\Ku$.
An infinite  path $\pi$ induces the  trace given by
$\Lab(\pi(0))\Lab(\pi(1))\ldots$. We denote  by $\Lang(\Ku)$ the set of traces associated with the infinite paths of $\Ku$.
For a logic $\Logic$ interpreted  over traces, the \emph{model checking} (MC) \emph{problem against $\Logic$} is checking for a given  Kripke structure $\Ku$ and a formula $\varphi\in \Logic$, whether $\Lang(\Ku)\subseteq \Lang(\varphi)$.

\subsection{Allen's relations and Interval Temporal Logic $\HS$}

An interval algebra to reason about intervals and their relative orders was proposed by Allen in~\cite{All83}, while a systematic logical study of interval representation and reasoning was done a few years later by Halpern and Shoham, who introduced the interval temporal logic $\HS$ featuring one modality for each Allen
relation, but equality~\cite{HS91}.

Let $\LO = \tpl{\LOSup,<}$ be a  linear order over the nonempty set $\LOSup\neq \emptyset$, and $\leq$ be the reflexive closure of $<$. Given two elements $x,y\in \LOSup$ such that $x\leq y$, we denote by $[x,y]$
the (non-empty closed) \emph{interval} over $\LOSup$ given by the set of elements $z\in \LOSup$ such that $x\leq z$ and $z\leq y$.
We denote the set of all intervals over
$\LO$ by $\Intvs(\LO)$. Table~\ref{allen} gives a graphical representation of the Allen's relations $\RelA$, $\RelL$, $\RelB$, $\RelE$,
$\RelD$, and $\RelO$ for the given linear order together with the corresponding $\HS$ (existential) modalities.
For each $X\in \{A,L,B,E,D,O\}$, the Allen's relation $\RelXt$ is defined as the inverse of relation $\RelX$, i.e.
      $[x, y]\,\RelXt\,  [v, z]$ if $[v, z]\RelX   [x, y]$.
\details{
 We now recall the 13 Allen's relations but the equality over intervals of the given linear order $\LO = \tpl{\LOSup,<}$:
\begin{enumerate}
  \item the \emph{meet} relation $\RelA$, defined
    by $[x, y]\,\RelA\,  [v, z]$ if $y=v$ (i.e., the start-point of the second interval coincides
    with the end-point of the first interval);
      \item the \emph{before} relation $\RelL$, defined
    by $[x, y]\,\RelL \, [v, z]$ if $y<v$ (i.e., the start-point of the second interval strictly follows
   the end-point of the first interval);
    \item the \emph{started-by} relation $\RelB$, defined
    by $[x, y]\,\RelB\,  [v, z]$ if $x =v$ and $z<y$ (i.e., the  second interval is a proper prefix
   of the first interval);
   \item the \emph{finished-by} relation $\RelE$, defined
    by $[x, y]\,\RelE\,  [v, z]$ if $y =z$ and $x<v$ (i.e., the  second interval is a proper suffix
   of the first interval);
    \item the \emph{contains} relation $\RelD$, defined
    by $[x, y]\,\RelD\,  [v, z]$ if $x< v$ and $z<y$ (i.e., the  second interval is contained in the internal of the first interval);
       \item the \emph{overlaps} relation $\RelO$, defined
    by $[x, y]\,\RelO\,  [v, z]$ if $x<v<y<z$ (i.e., the second interval overlaps at the right the first interval);
    \item for each $X\in \{A,L,B,E,D,O\}$ the relation $\RelXt$, defined as the inverse of relation $\RelX$, i.e.
      $[x, y]\,\RelXt\,  [v, z]$ if $[v, z]\RelX   [x, y]$.
\end{enumerate}

Table~\ref{allen} gives a graphical representation of the Allen's relations $\RelA$, $\RelL$, $\RelB$, $\RelE$,
$\RelD$, and $\RelO$ together with the corresponding $\HS$ (existential) modalities.
}
\begin{table}[tb]
\centering
\caption{Allen's relations and corresponding $\HS$ modalities.}\label{allen}
\vspace*{-0.1cm}
\resizebox{\width}{0.8\height}
{
\begin{tabular}{cclc}
\hline
\rule[-1ex]{0pt}{3.5ex} Allen relation & $\HS$ & Definition w.r.t. interval structures &  Example\\
\hline

&   &   & \multirow{7}{*}{\begin{tikzpicture}[scale=0.785]
\draw[draw=none,use as bounding box](-0.3,0.2) rectangle (3.3,-3.1);
\coordinate [label=left:\textcolor{red}{$x$}] (A0) at (0,0);
\coordinate [label=right:\textcolor{red}{$y$}] (B0) at (1.5,0);
\draw[red] (A0) -- (B0);
\fill [red] (A0) circle (2pt);
\fill [red] (B0) circle (2pt);

\coordinate [label=left:$v$] (A) at (1.5,-0.5);
\coordinate [label=right:$z$] (B) at (2.5,-0.5);
\draw[black] (A) -- (B);
\fill [black] (A) circle (2pt);
\fill [black] (B) circle (2pt);

\coordinate [label=left:$v$] (A) at (2,-1);
\coordinate [label=right:$z$] (B) at (3,-1);
\draw[black] (A) -- (B);
\fill [black] (A) circle (2pt);
\fill [black] (B) circle (2pt);

\coordinate [label=left:$v$] (A) at (0,-1.5);
\coordinate [label=right:$z$] (B) at (1,-1.5);
\draw[black] (A) -- (B);
\fill [black] (A) circle (2pt);
\fill [black] (B) circle (2pt);

\coordinate [label=left:$v$] (A) at (0.5,-2);
\coordinate [label=right:$z$] (B) at (1.5,-2);
\draw[black] (A) -- (B);
\fill [black] (A) circle (2pt);
\fill [black] (B) circle (2pt);

\coordinate [label=left:$v$] (A) at (0.5,-2.5);
\coordinate [label=right:$z$] (B) at (1,-2.5);
\draw[black] (A) -- (B);
\fill [black] (A) circle (2pt);
\fill [black] (B) circle (2pt);

\coordinate [label=left:$v$] (A) at (1.3,-3);
\coordinate [label=right:$z$] (B) at (2.3,-3);
\draw[black] (A) -- (B);
\fill [black] (A) circle (2pt);
\fill [black] (B) circle (2pt);

\coordinate (A1) at (0,-3);
\coordinate (B1) at (1.5,-3);
\draw[dotted] (A0) -- (A1);
\draw[dotted] (B0) -- (B1);
\end{tikzpicture}}\\

\textsc{meets} & $\hsA$ & $[x,y]\,\RelA\,[v,z]\iff y=v$ &\\

\textsc{before} & $\hsL$ & $[x,y]\,\RelL\,[v,z]\iff y<v$ &\\

\textsc{started-by} & $\hsB$ & $[x,y]\,\RelB\,[v,z]\iff x=v\wedge z<y$ &\\

\textsc{finished-by} & $\hsE$ & $[x,y]\,\RelE\,[v,z]\iff y=z\wedge x<v$ &\\

\textsc{contains} & $\hsD$ & $[x,y]\,\RelD\,[v,z]\iff x<v\wedge z<y$ &\\

\textsc{overlaps} & $\hsO$ & $[x,y]\,\RelO\,[v,z]\iff x<v<y<z$ &\\

\hline
\end{tabular}}\vspace{-0.1cm}
\end{table}

$\HS$ formulas  $\varphi$ over $\Prop$
are defined as follows:\vspace{0.1cm}

$
   \varphi ::= \top \;\vert\;   p \;\vert\; \neg\varphi \;\vert\; \varphi \wedge \varphi \;\vert\; \hsX \varphi
$\vspace{0.1cm}

\noindent  where $p\in\Prop$ and $\hsX$ is the existential temporal modality  for the   (non-trivial)
Allen's relation $\RelX$, where $X\in\{A,L,B,E,D,O,\overline{A},\overline{L},\overline{B},\overline{E},\overline{D},\overline{O}\}$.
The size $|\varphi|$ of a formula $\varphi$ is the number of distinct subformulas of $\varphi$.
We  also exploit the standard logical connectives 
$\vee$   and $\rightarrow$   as abbreviations,
and for any temporal  modality $\hsX$, the dual universal modality $\hsUX$  defined as: $\hsUX\psi\DefinedAs \neg\hsX\neg\psi$.
Given any subset of Allen's relations $\{\RelXP{1},..,\RelXP{n}\}$, we denote by $\mathsf{X_1 \cdots X_n}$ the \HS\ fragment featuring temporal modalities for $\RelXP{1},..,\RelXP{n}$ only.

The   logic $\HS$ is interpreted on \emph{interval structures} $\IS=\tpl{\Prop,\LO,\Lab}$, which are linear orders $\LO$ equipped with a labelling function $\Lab: \Intvs(\LO) \to 2^{\Prop}$ assigning to each interval the set of propositions that hold over it. Given an $\HS$ formula $\varphi$ and an interval $I  \in \Intvs(\LO)$, the satisfaction relation $I\models_\IS \varphi$, meaning that $\varphi$ holds at the interval $I$ of $\IS$, is inductively defined as follows (we omit the semantics of the Boolean connectives which is standard):\vspace{0.1cm}

 $
 \begin{array}{ll}
I \models_\IS  p  &  \Leftrightarrow  p\in \Lab(I)  \\
I\models_\IS     \hsX \varphi  &  \Leftrightarrow \text{there is an interval $J\in \Intvs(\LO)$ such that $I\, \RelX \,J$ and }   J\models_\IS \varphi
\end{array}
 $\vspace{0.1cm}

\noindent It is worth noting that  we assume the \emph{non-strict semantics of $\HS$},
which admits intervals  consisting of a single point. Under such an assumption, all $\HS$-temporal modalities  can be expressed in terms of $\hsB, \hsE, \hsBt$, and $\hsEt$~\cite{Ven90}.
As an example,  $\hsD\varphi$ can be expressed in terms of $\hsB$ and $\hsE$ as $ \hsB\hsE\varphi$, while  $\hsA\varphi$ can be expressed in terms of $\hsE$ and $\hsBt$ as
$ (\hsUE\,\neg\top \wedge (\varphi \vee \hsBt \varphi)) \vee \hsE (\hsUE\,\neg\top \wedge (\varphi \vee \hsBt \varphi))
 $.\vspace{0.1cm}

\noindent \textbf{Interpretation of $\HS$ over traces.}
In this paper, we focus on interval structures $\IS=\tpl{\Prop,(\Nat,<),\Lab}$ over the standard linear order on $\Nat$ ($\Nat$-interval structures for short) satisfying the \emph{homogeneity principle}: a proposition holds over an interval if
and only if it holds over all its subintervals. Formally, $\IS$ is \emph{homogeneous} if
 for every interval $[i,j]$ over $\Nat$ and every $p \in \Prop$,
it holds that $p\in \Lab([i,j])$ if and only if $p\in \Lab([h,h])$
for every $h\in [i,j]$.
Note that homogeneous $\Nat$-interval structures over $\Prop$ correspond to traces where, intuitively, each interval is mapped to an infix of the trace. Formally, each trace $w$  induces the homogeneous $\Nat$-interval structure $\IS(w)$ whose labeling function $\Lab_w$ is defined as follows:
for all $i,j\in\Nat$ with $i\leq j$ and $p\in\Prop$,  $p\in \Lab_w([i,j])$ if and only if $p\in w(h)$ for all $h\in [i,j]$. 
This mapping from traces to homogeneous $\Nat$-interval structures over $\Prop$ is evidently a bijection.
For a trace $w$, an interval $I$ over $\Nat$, and an $\HS$ formula $\varphi$, we write $I\models_w \varphi$ to mean that $I\models_{\IS(w)} \varphi$. The trace $w$ satisfies $\varphi$, written $w\models \varphi$, if $[0,0]\models_w \varphi$.
For an interval $I=[i,j]$ over $\Nat$, we denote by $|I|$ the length of $I$, given by $j-i+1$.

It is known that $\HS$ over traces has the same expressiveness as standard $\LTL$~\cite{BozzelliMMPS19}, where the latter is expressively complete
for standard first-order logic $\FO$ over traces~\cite{kamp1968tense}. In particular, the fragment $\AB$ of $\HS$ is sufficient for capturing full $\LTL$~\cite{BozzelliMMPS19}: given an $\LTL$ formula, one can construct in linear-time an equivalent
$\AB$ formula~\cite{BozzelliMMPS19}.  Note that when interpreted on infinite words $w$, modality
 $\hsB$   allows to select proper non-empty prefixes   of the current infix subword of $w$, while modality
  $\hsA$   allows to select subwords whose first position coincides with the last position of the current interval. For each $k\geq 1$, we denote by $\len{k}$  the $\B$ formula capturing the intervals of length $k$: $\len{k}\DefinedAs (\underbrace{\hsB \ldots \hsB}_{\text{$k-1$ times}}\top)\wedge (\underbrace{\hsUB \ldots \hsUB}_{\text{$k$ times}}\neg\top)$. \vspace{-0.2cm} 
\section{Difference Interval Temporal Logic}\label{sec:definitionDHS}

 In this section, we introduce a quantitative extension of the  logic $\HS$ under the trace-based semantics, we call \emph{Difference $\HS$} ($\DHS$ for short). The extension is obtained by means of equality and inequality constraints on the temporal modalities of $\HS$ which allow  to compare the difference between the length of the interval selected by the temporal modality and the length of the current interval with an integer constant.

The set of $\DHS$ formulas $\varphi$ over  $\Prop$  is inductively defined as follows:\vspace{0.1cm}

$
\varphi ::= \top  \;\vert\; p \;\vert\; \neg\varphi \;\vert\; \varphi \wedge \varphi \;\vert\; \hsX \varphi \;\vert\; \hsX_{\Delta \sim c} \varphi
$\vspace{0.1cm}

\noindent where $p\in\Prop$, $\sim \in \{<,\leq,=,>,\geq\}$, $c\in \INT$, and $\hsX_{\Delta\sim c}$ is the  existential \emph{constrained} temporal modality  for the
Allen's relation  $\RelX$ where $X\in\{A,L,B,E,D,O,\overline{A},\overline{L},\overline{B},\overline{E},\overline{D},\overline{O}\}$.
We exploit the symbol $\Delta$ in $\hsX_{\Delta\sim c}$ to emphasize that the constraint $\sim c$ refer to the difference between the lengths of two intervals, the one selected by the modality $\hsX$ and the current one. A formula $\varphi$ is \emph{monotonic} if it does not use equality constraints $\Delta= c$ as subscripts of the temporal modalities.
For any constrained  modality $\hsX_{\Delta\sim c}$, the dual universal modality $\hsUX_{\Delta\sim c}$  is an abbreviation for $\neg\hsX_{\Delta\sim c}\neg\varphi$. We assume that the constants $c$ in the difference constraints are encoded in binary. Thus, the size $|\varphi|$ of a $\DHS$ formula $\varphi$ is defined as the number of distinct subformulas of $\varphi$ multiplied the number of bits for encoding the maximal constant occurring in $\varphi$.  The semantics of the constrained modalities is as follows:
\begin{compactitem}
\item $I\models_w \hsX_{\Delta\sim c}\varphi$ \,$ \Leftrightarrow$\, for some interval $J$  such that $I\, \RelX \,J$ and  $|J|-|I|\sim c$: $ J\models_w \varphi$.
\end{compactitem}\vspace{0.2cm}

\noindent We consider the following fragments of $\DHS$:
\begin{compactitem}
  \item The fragment $\DHSS$ which disallows constrained modalities for the Allen's relations $\RelA$, $\RelL$, $\RelO$, and their inverses, and for any Allen's relation $\RelX$, the fragment $\DHS_X$ allowing constrained modalities for the Allen's relation
  $\RelX$ only.
  \item For any subset of Allen's relations  $\{\RelXP{1},..,\RelXP{n}\}$, the fragment $\DHSFrag$
featuring   temporal modalities for $\RelXP{1},..,\RelXP{n}$  only, and the common fragment of
$\DHSFrag$ and $\DHSS$, denoted by $\DHSSFrag$.
\end{compactitem}\vspace{0.1cm}

\noindent \textbf{Expressiveness issues.} As mentioned in Section~\ref{sec:preliminary}, all the temporal modalities of $\HS$ can be expressed in terms of  $\hsB, \hsE, \hsBt$, and $\hsEt$.
In the considered quantitative setting, these  interdefinability results cannot be generalized to the constrained versions of the temporal modalities. In particular, we will show in
Section~\ref{DecisionProcedures} that the fragment $\DHSS$ of $\DHS$, featuring constrained modalities only for the Allen's relations
$\RelB$, $\RelD$, $\RelE$, and their inverses, is not more expressive than $\HS$. On the other hand, we will establish in Section~\ref{sec:undecidabilityDHS}  that the fragments $\DHS_X$, where  $X\in\{A,L,O,\overline{A},\overline{L},\overline{O}\}$,
are highly undecidable.

Unlike $\HS$, in $\DHSS$  we can succinctly express that an arbitrary $\HS$ property $\varphi$ holds 
in the \emph{maximal proper sub-intervals} of the current non-singleton interval by the formula 
$(\hsE_{\Delta \geq -1}\varphi)\wedge (\hsB_{\Delta \geq -1}\varphi)$.
Moreover, we can succinctly encode constraints on the length of the current interval.
For  an integer $n>0$, the  $\DHSS$ formula $\hsB_{\Delta \leq -n+1}\top$ (resp., $\neg\hsB_{\Delta \leq -n}\top$)
 characterizes the intervals of length at least (resp., at most) $n$.

  \begin{figure}
 
\centering
\vspace*{-1.2cm}
\resizebox{0.8\width}{0.8\height}{
\begin{tikzpicture}[->,>=stealth',shorten >=1pt,auto,node distance=2.2cm,thick,main node/.style={circle,draw}]

  \node[main node,style={double}] (1) {$\stackrel{v_0}{p^i_I}$};
  \node[main node,fill=gray!35] (3) [below=0.7cm of 1] {$\stackrel{v_1}{p^i_R}$};
  \node[main node,fill=gray!35] (6) [below of=3] {$\stackrel{v_2}{p^i_U}$};
\node (8b) [above=0.2cm of 1,draw = none] {};
\node (8) [left=0.3cm of 8b] {$\Ku_{P_i}$};

  \path[every node/.style={font=\small}]
    (1) edge node {} (3)
    (3) edge node {} (6)
        edge [bend left] node[right] {} (1)
    (6) edge [bend left] node[right] {} (1)
     %
     (1) edge [out=90,in=0,looseness=5] node[very near end,left] {} (1)
     (3) edge [out=90,in=0,looseness=5] node[very near end,left] {} (3)
     (6) edge [out=90,in=0,looseness=5] node[very near end,left] {} (6)
    ;
%
\node[main node,fill=gray!50] (10) [right=2cm of 3] {$\stackrel{w_1}{q_{U_1}}$};
\node (12) [right of=10] {\ldots};
 \node[main node,style={double}] (11)[above=1.1cm of 12] {$\stackrel{w_0}{q_I}$};
\node[main node,fill=gray!50] (13) [right of=12] {$\stackrel{w_N}{q_{U_N}}$};
\node (14b) [above=0.2cm of 11,draw = none] {};
\node (14) [left=0.3cm of 14b] {$\Ku_H$};
\path[every node/.style={font=\small}]
    (11) edge node {} (10)
         edge node {} (13)
     (11) edge [out=90,in=0,looseness=5] node[very near end,left] {} (11)
     (13) edge [out=-90,in=0,looseness=5] node[very near end,right] {} (13)
     (10) edge [out=-90,in=0,looseness=5] node[very near end,right] {} (10)
     (10) edge [bend left] node[right] {} (11)
     (13) edge [bend left] node[right] {} (11)
;
%
\node[main node,fill=gray!50] (20) [right=2cm of 13] {$v_1v_0w_0$};
\node[main node,fill=gray!50] (21) [right of=20] {$v_0v_1w_0$};
\node (25) [right of=21] {\ldots};
 \node[main node,style={double}] (22)[above=0.7cm of 21] {$v_0v_0w_0$};
 \node[main node,fill=gray!50] (23) [below of=20] {$v_2v_0w_1$};
 \node[main node,fill=gray!50] (24) [below of=21] {$v_0v_2w_2$};
 \node (26) [right of=24] {\ldots};
 \node (27b) [above=0.2cm of 22,draw = none] {};
 \node (27) [left=0.3cm of 27b] {$\Ku_{Sched}$};
 \path[every node/.style={font=\small}]
    (22) edge node {} (20)
         edge node {} (21)
    (20) edge node {} (23)
    (21) edge node {} (24)
         edge node {} (26)
    (22) edge [out=90,in=0,looseness=5] node[very near end,left] {} (22)
     (20) edge [out=90,in=0,looseness=5] node[very near end,left] {} (20)
     (21) edge [out=90,in=0,looseness=5] node[very near end,right] {} (21)
      (23) edge [out=90,in=0,looseness=5] node[very near end,right] {} (23)
(24) edge [out=90,in=0,looseness=5] node[very near end,right] {} (24)
(24) edge [bend left] node[right] {} (22)
(23) edge [bend left = 70] node[right] {} (22)
(21) edge [bend left] node[right] {} (22)
(20) edge [bend left] node[right] {} (22)
;
\end{tikzpicture}}
\caption{The Kripke structure $\Ku_{Sched}$ for two processes.}\label{KSched}
\vspace{-0.4cm}
\end{figure}
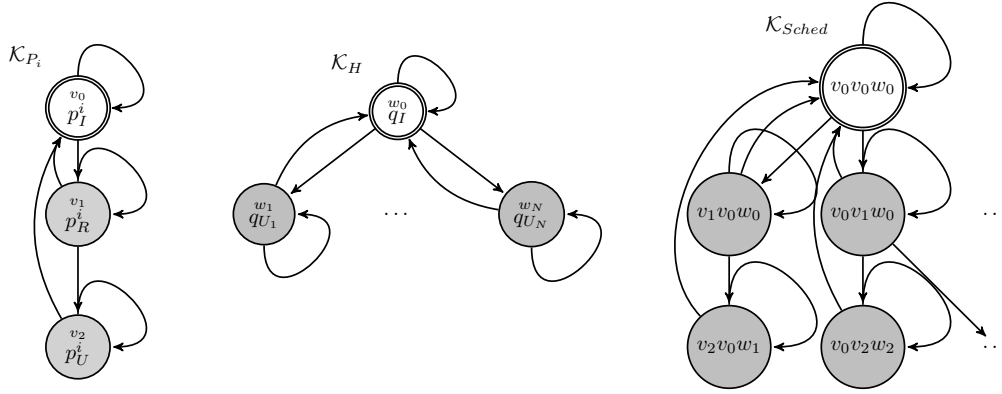 

\begin{example}\label{ex:krypke}
We consider the behaviour of a scheduler serving $N$ processes which continuously request 
the use of a common resource. The behaviour of each process $P_i$, with $1\leq i \leq N$, is represented by the Kripke structure $\Ku_{P_i}$, depicted in Figure~\ref{KSched},
whose atomic propositions $p^i_I$, $p^i_R$, and $p^i_U$ label the states where the process is idling, requests the resource, and uses the resource, respectively. The behaviour of the scheduler $H$ is modeled by the Kripke structure $\Ku_{H}$ in Figure~\ref{KSched} whose propositions $q_I$, $q_{U_1},\ldots q_{U_N}$ 
label the states  where $H$ is idling or assigns the resource to the $i$-th process (proposition $q_{U_i}$). The considered Kripke structure $\Ku_{Sched}$, depicted in Figure~\ref{KSched} for $N=2$, is the cartesian product of the Kripke structures $\Ku_{P_1},\ldots,\Ku_{P_N}, \Ku_{H}$ 
with the additional requirement that the scheduler is in state $w_i$ \emph{iff} the $i$-th process is in state $v_2$. The set of atomic propositions labelling each compound state is the union of the sets of propositions labelling the component states. 

As an example of specification, we consider the requirement that the $i$-th process can unsuccessfully iterate a request (i.e., without finally having the resource granted) for an interval of at least $m$ and at most $M$ time units. This can be expressed in
 $\DHSS$ as:\vspace{0.1cm} 

$\hsUA  \hsUA  [(\MAX_{p^i_{R}} \wedge \hsBt_{\Delta \leq 1} \hsE p^i_I)  \,\rightarrow\, (\hsB_{\Delta \leq -m+1}\top\wedge \neg\hsB_{\Delta \leq -M}\top)]$\vspace{0.1cm}

\noindent where for a proposition $p$, $\MAX_{p}\DefinedAs p \wedge (\neg \hsBt p) \wedge (\neg \hsEt p) $ captures the maximal length intervals where $p$ homogeneously holds. Note  that $\hsBt_{\Delta \leq 1} \hsE p^i_I$ ensures that the maximal homogeneous interval where $p^i_{R}$ holds is followed by a $p^i_I$-state.

\details{
In order to check properties of $\Ku_{Sched}$, 
it is crucial to capture maximal length intervals where a designated atomic proposition $p$ homogeneously holds (for instance, the interval during which a process requires or uses the resource).  \emph{Maximal $p$-homogeneous intervals} can be easily characterized
in  $\DHSS$ by the following formula $\MAX_p$: \vspace{0.1cm}
$
\MAX_p \DefinedAs p \wedge (\hsBt_{\Delta \leq 1} \neg p) \wedge (\hsEt_{\Delta \leq 1} \neg p)
$\vspace{0.1cm}

\noindent For  an integer $n>0$, the monotonic $\DHSS$ formula $\hsB_{\Delta \leq -n+1}\top$ (resp., $\neg\hsB_{\Delta \leq -n}\top$)
 succinctly  captures the intervals of length at least (resp., at most) $n$.
%
The previous  formulas can be used to require
that the $i$-th process can unsuccessfully iterate a request (i.e., without finally having the resource granted) for an interval of at least $m$ and at most $M$ time units: $\hsUD [(\MAX_{p^i_{R}} \wedge \hsBt_{\Delta = 1} \hsE p^i_I)  \,\rightarrow\, (\hsB_{\Delta \leq -m+1}\top\wedge \neg\hsB_{\Delta \leq -M}\top)]$. 
Note  that $\hsBt_{\Delta = 1} \hsE p^i_I$ ensures that the maximal homogeneous interval where $p^i_{R}$ holds is followed by a
$p^i_I$-state.} 
\end{example}

\section{Undecidability of $\DHS$}\label{sec:undecidabilityDHS}

In this section, we establish that model checking and satisfiability for the novel logic $\DHS$ are highly  undecidable even
for the fragments $\DHS_X$, where  $X\in \{A,L,O,\overline{A},\overline{L},\overline{O}\}$.

\begin{theorem}\label{theorem:undecidabilityDHS}
Model checking and satisfiability for the fragment  $\DHS_X$ of $\DHS$, where $X\in \{A,L,O,\overline{A},\overline{L},\overline{O}\}$,  are
$\Sigma_{1}^{1}$-hard even if the unique constant used in the constraints is $0$, and in case
$X\in \{A, O,\overline{A}, \overline{O}\}$ even if the unique exploited constraint is $\geq 0$ (or, dually, $\leq 0$).
\end{theorem}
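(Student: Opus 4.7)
The plan is to reduce from the $\Sigma_1^1$-complete \emph{recurrence problem} for nondeterministic $2$-counter Minsky machines: given such a machine $\mathcal{M}$ with two counters $c_1,c_2$ and a designated state $q_{\rec}$, does $\mathcal{M}$ admit an infinite computation visiting $q_{\rec}$ infinitely often? I will produce, for each of the six choices of $X$, a $\DHS_X$ formula $\Phi_{\mathcal{M}}$ using only the constant $0$ (and only the sign $\geq$ in the cases $X\in\{A,O,\overline{A},\overline{O}\}$) such that $\Phi_{\mathcal{M}}$ is satisfiable iff $\mathcal{M}$ has a recurring run; for MC, I take a universal Kripke structure $\Ku_{\text{univ}}$ over the relevant propositions and ask whether $\Ku_{\text{univ}}\not\models \neg \Phi_{\mathcal{M}}$.

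First I fix an encoding of runs. A configuration $(q,n_1,n_2)$ is encoded by a \emph{block} of the trace, delimited by a fresh proposition $\#$, of the form $\# \cdot q \cdot a_1^{n_1+1} \cdot \$_1 \cdot a_2^{n_2+1} \cdot \$_2$, and the trace is the concatenation of the blocks along the computation. On top of this, I add homogeneous ``shadow'' propositions that mark, inside each block, auxiliary subintervals spanning the $a_1$- and $a_2$-subblocks both inclusively and with a shift by one position on the right or on the left; these duplicated markers will be the hook for the one-sided quantitative comparisons. Because $\HS$ over traces captures full $\LTL$ via its $\AB$-fragment, I can write an (unconstrained) $\HS$ formula $\varphi_{\mathrm{syn}}$ that (i) forces the above block structure, (ii) forces the discrete control to respect $\delta$ (including zero-tests, which correspond locally to having an $a_j$-subblock of length exactly $1$, detectable by $\len{1}$-style formulas), and (iii) expresses the recurrence condition $\Box\Diamond q_{\rec}$ by nesting universal $\hsA$/$\hsUA$ over consecutive blocks.

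The quantitative work is to enforce, for each pair of consecutive blocks $\mathbb{B}_i,\mathbb{B}_{i+1}$, the relation $n_j^{(i+1)}\in\{n_j^{(i)},n_j^{(i)}\pm 1\}$ prescribed by the instruction read from the control. The key observation is that each of the Allen relations $A,L,O$ (and their inverses), restricted to suitable pairs of shadow-marked subintervals in $\mathbb{B}_i$ and $\mathbb{B}_{i+1}$, is witnessed by a \emph{unique} pair $(I,J)$, whose lengths are $n_j^{(i)}+k$ and $n_j^{(i+1)}+k$ for an explicit constant $k\in\{0,1\}$ depending on the instruction. Hence for $X\in\{L,\overline{L}\}$ and $X=A,\overline{A},O,\overline{O}$ with the equality constraint $\Delta=0$, a single occurrence of $\hsX_{\Delta=0}$ (together with $\hsUX_{\Delta=0}$) pins down equality/successor on the counter values. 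In particular, using only $c=0$ and combining inequality and equality constraints, the simulation is faithful.

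The main technical hurdle is the stronger claim for $X\in\{A,O,\overline{A},\overline{O}\}$ with only $\Delta\geq 0$ (and dually $\Delta\leq 0$) available: I must force an \emph{equality} of lengths from a \emph{one-sided} inequality. Here I use the two shifted shadow markers: the $a_j$-subblock of $\mathbb{B}_{i+1}$ is accessed by two Allen-related intervals, one whose length is $n_j^{(i+1)}+k$ and one whose length is $n_j^{(i+1)}+k-1$, and symmetrically for $\mathbb{B}_i$. The conjunction of two instances of $\hsX_{\Delta\geq 0}\top$-style universal constraints, applied along the forward and the ``skew'' witness, gives both $n_j^{(i+1)}\geq n_j^{(i)}$ and $n_j^{(i)}\geq n_j^{(i+1)}-1$, pinning $n_j^{(i+1)}\in\{n_j^{(i)},n_j^{(i)}+1\}$, which together with local $\HS$ control of inc/dec/nop is enough to simulate $\mathcal M$. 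The dual $\leq 0$ case is symmetric. Designing the shadow markers and checking that every Allen relation $X$ in the list admits such a one-sided equality gadget is the delicate part; once it is in place, the reduction goes through and $\Sigma_1^1$-hardness of both MC and satisfiability follows immediately from the $\Sigma_1^1$-hardness of recurrence in $\mathcal{M}$.
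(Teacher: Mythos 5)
Your overall strategy (reduction from the recurrence problem for nondeterministic Minsky machines, blocks encoding configurations, length comparisons between adjacent blocks via constrained modalities) matches the paper for $X\in\{A,O,\overline{A},\overline{O}\}$ with equality constraints, but there are two genuine gaps. The first concerns $X\in\{L,\overline{L}\}$: your claim that the relevant $\RelL$-pair of marked intervals is \emph{unique} is false, and the local adjacent-block scheme cannot work for $L$ at all. The modality $\hsL_{\Delta\sim 0}$ quantifies over intervals lying strictly to the right \emph{anywhere} in the trace, and the formula evaluated at the witness cannot refer back to the position of the current interval (there are no binders in $\DHS$), so "the marked interval of the \emph{immediately following} block" is not definable: the existential witness may sit in any later block, while the universal dual constrains \emph{all} later marked intervals, which is far too strong (it would freeze the counter forever). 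This is exactly why the paper handles $L$ by a completely different, global encoding — well-formed $M$-sequences in which $c$-increments and $c$-decrements carry monotonically increasing values and the counter discipline is enforced by global conditions (increment/decrement progression, increment domination, zero-test checking), in the style of the register-LTL undecidability proofs. Nothing in your proposal plays this role, so the $L$ and $\overline{L}$ cases are unproved. (Relatedly, even for $A$/$O$ your non-mirrored block layout $\#\, q\, a_1^{n_1+1}\$_1 a_2^{n_2+1}\$_2$ does not make the two $a_1$-subblocks of consecutive configurations adjacent to a common comparison point; shifted "shadow" propositions do not fix this geometry, which is why the paper uses a palindromic left/right re-encoding of the counters inside each configuration.)

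The second gap is the $\Delta\geq 0$-only strengthening, which is precisely the delicate part and where your gadget does not deliver. As you state it, the skew-marker trick only yields $n_j^{(i)}\leq n_j^{(i+1)}\leq n_j^{(i)}+1$: two lower-bound (existential) constraints can never produce an upper bound, and with this one-unit slack the traces satisfying your formula encode "gainy" pseudo-runs in which counters may drift upward, zero tests are no longer faithful, and satisfiability of $\Phi_{\mathcal M}$ no longer implies the existence of a genuine recurrent computation — the reduction breaks. The correct way to squeeze \emph{exact} equality out of $\geq 0$ alone is to pair the existential constrained modality with its \emph{universal dual} on the complementary marker property; e.g., in the paper $\hsA_{\Delta=0}(\Right(2)\wedge\neg\Inter(2))$ is replaced by $\hsA_{\Delta\geq 0}(\Right(2)\wedge\neg\Inter(2))\wedge\hsUA_{\Delta\geq 0}(\Right(2)\vee\Inter(2))$: the existential conjunct forces the interval ending at the marker to be at least as long as the current one, and the universal conjunct forces every at-least-as-long $A$-successor to reach the marker, giving the converse inequality with no slack. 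Finally, a smaller point: proving MC-hardness by taking a universal Kripke structure and asking whether $\Ku_{\text{univ}}\not\models\neg\Phi_{\mathcal M}$ reduces recurrence to the \emph{complement} of model checking, i.e.\ it shows $\Pi^1_1$-hardness of MC, not the claimed $\Sigma^1_1$-hardness; the reduction has to be adapted to MC directly, as the paper does.
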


We prove Theorem~\ref{theorem:undecidabilityDHS} for the part concerning the satisfiability problem for the fragments $\DHS_A$, $\DHS_L$, and $\DHS_O$
 (the parts  for the model checking problem and for the fragments $\DHS_{\overline{A}}$, $\DHS_{\overline{L}}$, and $\DHS_{\overline{O}}$ being similar). We provide polynomial-time reductions 
 from the \emph{recurrence problem} of
\emph{non-deterministic  Minsky $2$-counter machines}~\cite{Harel86}.
Fix such a machine which is a tuple $M = \tpl{Q,\Delta,\delta_\init,\delta_\rec}$,
where  $Q$ is a finite set of (control) locations,
 $\Delta \subseteq Q\times \Inst \times Q$ is a transition relation over the instruction set $\Inst= \{\inc,\dec,\zero\}\times \{1,2\}$, and $\delta_\init\in \Delta$ and $\delta_\rec\in \Delta$ are two designated transitions, the initial and the recurrent one.
For each counter $c\in\{1,2\}$, let $\Inc(c)$ (resp., $\Dec(c)$, resp., $\Zero(c)$) be the
set of transitions $\delta\in\Delta$ whose instruction is $(\inc,c)$ (resp., $(\dec,c)$, resp., $(\zero,c)$).

An $M$-configuration is a pair $(\delta,\nu)$ consisting of a transition $\delta\in \Delta$ and a counter valuation $\nu: \{1,2\}\to \Nat$. A  computation of $M$ is an \emph{infinite} sequence of configurations of the form $((q_0,(\instr_0,c_0),q_1),\nu_0),((q_1,(\instr_1,c_1),q_2),\nu_1),\ldots$  such that for each $i\geq 0$:
\begin{inparaenum}[(i)]
  \item    $\nu_{i+1}(3-c_i)= \nu_i(3-c_i)$;
  \item  $\nu_{i+1}(c_i)= \nu_i(c_i) +1$ if $\instr_i=\inc$;
  \item $\nu_{i+1}(c_i)= \nu_i(c_i) -1$ if $\instr_i=\dec$; and 
   \item  $\nu_{i+1}(c_i)= \nu_i(c_i)=0$ if $\instr_i= \zero$.
\end{inparaenum}
A \emph{recurrent computation} is a computation starting at the initial configuration $(\delta_\init,\nu_0)$, where $\nu_0(c)=0$ for each $c\in \{1,2\}$, which visits the 
transition $\delta_\rec$ infinitely often.
 The \emph{recurrence problem} is to decide whether for the given machine $M$, there is a recurrent computation.  This problem is known to be $\Sigma_{1}^{1}$-complete~\cite{Harel86}.


For each $X\in \{A,L,O\}$, we construct a  $\DHS_X$ formula
$\varphi_{M,X}$   such that $M$ has a recurrent computation \emph{iff} $\varphi_M$ is satisfiable.
The reduction for the fragment  $\DHS_L$, given in the following,   is quite different from the ones for the fragments
$\DHS_A$ and  $\DHS_O$  which are given in Appendix~\ref{APP:UndecisabilityDHSForAO}. Indeed, while the quantitative versions of   modalities $\hsA$ and $\hsO$ allow to impose quantitative constraints on adjacent encodings of $M$-configurations, this is not possible for the quantitative version  of modality $\hsL$  whose semantics is not ``local", and for this modality, a different encoding of the computations of $M$ is required.

We exploit some auxiliary $\DHS$ formulas, where $\psi$ is an arbitrary $\DHS$ formula.
Formulas $\Left(\psi)$ and $\Right(\psi)$ assert that $\psi$ holds at the singular intervals corresponding to the left and right endpoints, respectively, of the current  interval.\vspace{0.1cm}

$
 \Left(\psi) \DefinedAs  (\len{1} \wedge \psi)\vee \hsB (\len{1}\wedge \psi) \quad\quad
 \Right(\psi) \DefinedAs \hsA(\len{1} \wedge \psi)
$\vspace{0.1cm}

\noindent For the current interval $[i,j]$, $\RNext(\psi)$ (resp., $\LNext(\psi)$) asserts that
$\psi$ holds at the singleton interval $[j+1,j+1]$ (resp., $[i+1,i+1]$), while $\Inter(\psi)$  requires that  there is an internal position $i<h<j$ such that $\psi$ holds at the singleton interval $[h,h]$.\vspace{0.1cm}
 
$
 \RNext(\psi) \DefinedAs  \hsA(\len{2} \wedge \hsA(\len{1}\wedge \psi)) \quad\quad
 \LNext(\psi) \DefinedAs \Left(\RNext( \psi))
$\vspace{0.1cm}
 
$
\Inter(\psi) \DefinedAs \hsB(\neg\len{1}\wedge \Right(\psi))
$\vspace{0.1cm}

\noindent\textbf{Reduction from the recurrence problem for $\DHS_L$.}
 Some ideas in the proposed reduction for the logic $\DHS_L$ are taken from~\cite{DemriLS10}, where it is shown that model checking one-counter automata against $\LTL$ with registers in undecidable.

We first provide a characterization of the recurrent computations of $M$. Let $\xi=\delta_0,\delta_1,\ldots$ be an infinite
sequence of $M$-transitions. We say that $\xi$ satisfy the \emph{consecution requirement} if (i) $\delta_0=\delta_\init$, (ii)
for all $i\geq 0$, $\delta_i$ is of the form $(q_i,\instr_i,q_{i+1})$, and (iii) for infinitely many $j\geq 0$, it holds that $\delta_j=\delta_\rec$. 
In order to characterize the sequences $\xi$ for which there exists a corresponding computation of $M$, we associate a positive natural number (called \emph{value}) to each
transition $\delta_i$ along $\xi$. For each counter $c\in \{1,2\}$, we require that the value associated to a transition $\delta_i$ of $\xi$ which increments  counter $c$ is obtained by incrementing the natural number associated to the previous incrementation of counter $c$, if any, along $\xi$. A similar requirement is imposed on the transitions along $\xi$ decrementing counter $c$ except that the values associated to $c$-decrementations must not exceed the values associated to previous $c$-incrementations. Intuitively, this ensures that at each position $i\geq 0$ along $\xi$, the value of counter $c$ is never negative. In order to simulate the zero-test, we require that for each transition $\delta_i$ associated to a zero-test for $c$, the previous values associated to $c$-incrementations correspond
to previous values associated to $c$-decrementions.

Formally, a \emph{flat configuration} is a pair $(\delta,n)$ consisting of a transition $\delta\in \Delta$ and a positive natural number $n>0$ such that $n=1$ if $\delta\in Zero(c)$ for some counter $c$. We say that $n$ is the value of $(\delta,n)$.
 A \emph{well-formed $M$-sequence} is an infinite sequence $\rho=(\delta_0,n_0),(\delta_1,n_1),\ldots$ of flat configurations satisfying the following requirements:
\begin{compactitem}
  \item The infinite sequence of transitions $\delta_0,\delta_1,\ldots$ satisfies the consecution requirement.
  \item \emph{Increment progression} (resp., \emph{Decrement progression}): for each counter $c\in \{1,2\}$, let $\xi= (\delta_{i_0},n_{i_0}),(\delta_{i_1},n_{i_1}),\ldots$ be the (possibly empty) ordered sub-sequence of the flat configurations in $\rho$ associated with  incrementation (resp., decrementation) of counter $c$. Then, $n_{i_0}=1$ and $n_{i_h}=n_{i_{h-1}}+1$ for all $0<h<|\xi|$.
  \item \emph{Increment domination}: for each $c\in \{1,2\}$ and $j\geq 0$ such that $ \delta_j\in \Dec(c)$, there is $0\leq h<j$
  such that $ \delta_h \in \Inc(c)$ and $n_h\geq n_j$.
  \item \emph{Zero-test checking}: let $c\in \{1,2\}$ and $j\geq 0$ such that $ \delta_j\in \Zero(c)$ and there are $h<j$ such that $\delta_h$ is a $c$-incrementation or $c$-decrementation. Then, the greatest $h_{\max}$ of such $h$ is associated to a $c$-decrementation and for each $h<h_{\max}$ such that $\delta_h$ is a $c$-incrementation, it holds that $n_h\leq n_{h_{\max}}$.
\end{compactitem}


\newcounter{lemma-WellFormedMSequences}
\setcounter{lemma-WellFormedMSequences}{\value{lemma}}
\newcounter{sec-WellFormedMSequences}
\setcounter{sec-WellFormedMSequences}{\value{section}}

\begin{lemma}\label{sec:WellFormedMSequences} There is a recurrent computation of $M$ \emph{iff} there is a well-formed $M$-sequence.
\end{lemma}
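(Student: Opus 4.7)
\textbf{Proof plan for Lemma~\ref{sec:WellFormedMSequences}.} The plan is to establish a canonical correspondence between recurrent computations of $M$ and well-formed $M$-sequences by encoding the counter values implicitly via indices/ranks. The intuition is that $n_i$ does not represent the value of a counter at step $i$, but rather the ordinal rank of the corresponding increment/decrement action within its own class, so that \emph{increment progression} and \emph{decrement progression} force these ranks to count the prior actions exactly. Non-negativity of the counters is then encoded by \emph{increment domination}, while \emph{zero-test checking} is enforced by matching the ranks of the last decrement with those of all prior increments.

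For the direction from a recurrent computation to a well-formed $M$-sequence, given a computation $((q_0,(\instr_0,c_0),q_1),\nu_0),((q_1,(\instr_1,c_1),q_2),\nu_1),\ldots$, I would define, for each $i\geq 0$, the value $n_i$ as follows: if $\delta_i\in \Inc(c)$ (resp., $\delta_i\in \Dec(c)$), let $n_i$ be the number of indices $h\leq i$ such that $\delta_h\in \Inc(c)$ (resp., $\delta_h\in \Dec(c)$); if $\delta_i\in \Zero(c)$, let $n_i=1$. The consecution requirement is immediate from the definition of computation. Increment and decrement progressions follow directly because each $n_{i_h}$ is by construction the rank of $\delta_{i_h}$ in the $c$-incrementation (resp.\ $c$-decrementation) subsequence. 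Increment domination follows since the counter value $\nu_j(c)\geq 0$ forces the number of prior $c$-incrementations to be at least the number of prior $c$-decrementations, hence there exists a prior $c$-incrementation of rank at least $n_j$. Zero-test checking follows since $\nu_j(c)=0$ forces the numbers of prior $c$-incrementations and prior $c$-decrementations to coincide, and the last prior $c$-operation must be a $c$-decrementation (since the counter reached $0$).

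For the converse, given a well-formed $M$-sequence $(\delta_0,n_0),(\delta_1,n_1),\ldots$, I would define a candidate computation by taking $\nu_0(c)=0$ for $c\in\{1,2\}$ and letting $\nu_{i+1}(c)$ be obtained from $\nu_i(c)$ according to the instruction of $\delta_i$ (consecution gives the transition structure, so the sequence of transitions itself is already a legal abstract run; it only remains to verify that every decrement is applied to a strictly positive counter and that every zero-test is applied to a zero counter). Non-negativity is the content of increment domination: if $\delta_j\in\Dec(c)$, then by decrement progression $n_j$ equals the number of $c$-decrementations in $\delta_0,\ldots,\delta_j$, and increment domination yields a prior $h<j$ with $\delta_h\in\Inc(c)$ and $n_h\geq n_j$, whence by increment progression there have been at least $n_j$ prior $c$-incrementations, so $\nu_j(c)\geq 1$.

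The main obstacle, and the only really delicate step, is the zero-test direction. If $\delta_j\in \Zero(c)$ and some $c$-operation occurred before $j$, let $h_{\max}$ be the largest such index. Zero-test checking forces $\delta_{h_{\max}}\in \Dec(c)$, so by decrement progression $n_{h_{\max}}$ equals the total number of prior $c$-decrementations (counting up to $j$, since no $c$-operation occurs strictly between $h_{\max}$ and $j$). Applying increment domination to $\delta_{h_{\max}}$ yields some prior $c$-incrementation $\delta_{h'}$ with $n_{h'}\geq n_{h_{\max}}$, while zero-test checking gives $n_{h'}\leq n_{h_{\max}}$; hence $n_{h'}=n_{h_{\max}}$, so by increment progression there have been exactly $n_{h_{\max}}$ prior $c$-incrementations. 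Thus the numbers of prior $c$-increments and $c$-decrements match, giving $\nu_j(c)=0$. If instead no prior $c$-operation exists, then $\nu_j(c)=\nu_0(c)=0$ trivially. This closes the correspondence, and since the consecution requirement guarantees $\delta_\rec$ occurs infinitely often in both objects, the equivalence is preserved between recurrence of the computation and well-formedness of the sequence.
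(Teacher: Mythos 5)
Your proposal is correct and follows essentially the same route as the paper's proof: assign to each increment/decrement its rank within its class (forcing the progression conditions), and use the counting identities between numbers of prior $c$-incrementations and $c$-decrementations to relate non-negativity of counters to increment domination and zero-valued counters at zero-tests to the zero-test checking condition. The only difference is presentational: you spell out explicitly (e.g., the argument at $h_{\max}$ combining domination and zero-test checking) what the paper dismisses as "it easily follows", which is a welcome level of detail but not a different argument.
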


\noindent \textbf{Construction of the $\DHS_L$ formula $\varphi_{L,M}$.} Let $\Prop\DefinedAs \Delta \cup \{1,\#\}$.
A flat configuration $(\delta,n)$ is encoded by the finite word $\{ \delta\}\cdot \{1\}^{n}\cdot \{\#\}$.
A well-formed $M$-sequence $\rho=(\delta_0,n_0),(\delta_1,n_1),\ldots$ is encoded by the trace obtained by concatenating the
codes of the flat configurations visited by $\rho$ starting from the first one. 

We construct a $\DHS_L$ formula $\varphi_{L,M}$ characterizing the well-formed $M$-sequences.\vspace{0.1cm}

$
\varphi_{L,M} \DefinedAs \varphi_{\con}\wedge \varphi_{\inc} \wedge \varphi_{\dec} \wedge \varphi_{\zero} \wedge \varphi_{\dom}
$\vspace{0.1cm}

The conjunct $\varphi_{\con}$ is an $\AB$ formula capturing the traces which are concatenations of codes of flat configurations and satisfy 
the consecution requirement. The construction of $\varphi_{\con}$ is an easy task 
and we omit the details here. The conjunct $\varphi_{\inc}$ (resp., $\varphi_{\dec}$) ensures the increment (resp., decrement) progression requirement. We focus on the formula $\varphi_{\inc}$ (the definition of $\varphi_{\dec}$ being similar) which 
requires that (i) 
the value associated to the first $c$-incrementation, if any, is $1$, and (ii) if a $c$-incrementation $\mathcal{I}$ with value $n_1$ is followed by a $c$-incrementation
with value $n_2$, then $n_2>n_1$ and there is also a $c$-incrementation following $\mathcal{I}$ with value $n_1+1$.
The first requirement can be easily expressed by an $\AB$ formula. The second requirement is captured   by the following $\DHS_L$
formula.\vspace{0.1cm}
 
\noindent $
 \begin{array}{l}
\hspace{-0.2cm}\displaystyle{\bigwedge_{c\in\{1,2\}} \bigwedge_{\delta\in \Inc(c)}} \hsUA\hsUA \Bigl((\Left(\delta)\wedge \Right(\#)\wedge \neg \Inter(\#)) \rightarrow  \Bigl[ \neg \displaystyle{\bigvee_{\delta'\in \Inc(c)}} \hsL_{\Delta\leq 0} (\Left(\delta')\wedge \Right(\#))\,\wedge \vspace{0.1cm} \\
\bigl(  \displaystyle{\bigvee_{\delta'\in \Inc(c)}}\hsA \Right(\delta') \rightarrow \displaystyle{\bigvee_{\delta'\in \Inc(c)}}\hsL_{\Delta= 0}(\Left(\delta')\wedge \neg \Inter(\#)\wedge \RNext(\#))\bigr)\Bigr]\Bigr)
\end{array}
$\vspace{0.1cm}
 
\noindent The conjunct $\varphi_{\zero}$ expresses the zero-test checking requirement. It ensures that for each counter $c$, (i) there is no
$c$-incrementation $\mathcal{I}$ s.t.~the first $c$-operation following $\mathcal{I}$ is a zero-test, and (ii)
there is no $c$-incrementation followed by a $c$-decrementation $\mathcal{D}$ with a smaller value  such that
the first $c$-operation following $\mathcal{D}$ is a zero-test.
The first requirement can be easily expressed by an $\AB$ formula. The second requirement is captured in $\DHS_L$ as follows.\vspace{0.1cm} 
 
\noindent $
 \begin{array}{l}
\neg\displaystyle{\bigvee_{c\in\{1,2\}}\,\bigvee_{\delta_i\in \Inc(c)}\,\bigvee_{\delta_d\in \Dec(c)}\,\bigvee_{\delta_0\in \Zero(c)}} \hsA\hsA \Bigl((\Left(\delta_i)\wedge \Right(\#)\wedge \neg \Inter(\#)) \,\wedge   \vspace{0.1cm} \\
\hsL_{\Delta< 0}[\Left(\delta_d)\wedge \Right(\#)\wedge \hsA(\Right(\delta_0) \wedge \displaystyle{\bigwedge_{\delta \in \Inc(c)\cup\Dec(c)\cup\Zero(c)}}\neg \Inter(\delta))]\Bigr)
\end{array}
$\vspace{0.1cm}
 
\noindent Finally, the conjunct $\varphi_\dom$ characterizes the increment domination requirement.
One can easily check that the following conditions capture increment domination.
\begin{compactitem}
 \item If there is some $c$-decrementation, then there is some $c$-incrementation.
  \item  $c$-incrementations have values greater than previous $c$-decrementations.
  \item If a $c$-incrementation $\mathcal{I}$ with value $n$ is not followed by other $c$-incrementations, then each $c$-decrementation  following $\mathcal{I}$ has a value smaller or equal to $n$.
\end{compactitem}
We focus on the third requirement which can be expressed in $\DHS_L$ as follows (the specification of the first and second requirements are simpler):\vspace{0.1cm}
%
%
 
\noindent$
 \begin{array}{l}
 \displaystyle{\bigwedge_{c\in\{1,2\}}\,\bigwedge_{\delta_i\in \Inc(c)} } \hsUA\hsUA \Bigl(\bigl[\Left(\delta_i)\wedge \Right(\#)\wedge \neg \Inter(\#) \wedge \hsUA \displaystyle{\bigwedge_{\delta\in \Inc(c)}}\neg \Right(\delta)\bigr] \, \longrightarrow \vspace{0.1cm}\\
\neg \hsL_{\Delta>0}\displaystyle{\bigvee_{\delta_d\in \Dec(c)} } [\Left(\delta_d)\wedge \Right(\#)\wedge \neg \Inter(\#)]\Bigr)
\end{array}
$\vspace{0.1cm}

\noindent Note that the unique constant used in the constraints of $\varphi_{L,M}$ is $0$. By construction, the $\DHS_L$ formula $\varphi_{L,M}$ captures the traces encoding the well-formed $M$-sequences. Thus,
by Lemma~\ref{sec:WellFormedMSequences}, $\varphi_{L,M}$ is satisfiable iff $M$ has a recurrent computation. 

\section{Decidable  fragments of \DHS}\label{DecisionProcedures}

In this section,
we show that model checking and satisfiability of $\DHSS$ are decidable though at least \TWOEXPSPACE-hard. Moreover,
by exploiting new results on the  \emph{linear-time hybrid logic} $\HL$~\cite{FRS03,SW07,BozzelliL10}, we show that
for the fragment of $\DHSS$ given by monotonic $\DHSSF{\ABBbar}$, the considered problems are exactly \EXPSPACE-complete.  Note that
$\DHSS$ represents the maximal fragment of $\DHS$ which is not covered by the undecidability results of Section~\ref{sec:undecidabilityDHS}. Additionally, we provide a characterization of $\HS$ in terms of a novel hybrid logic which lies between the one-variable and the two-variable fragment of
$\HL$. We establish that there are linear time translations from $\HS$ formulas  into equivalent formulas of the novel logic, and vice versa. This result is of independent interest since while for the one-variable fragment of $\HL$, model checking and satisfiability are \EXPSPACE-complete~\cite{SW07,BozzelliL10}, for the two-variable fragments of $\HL$, these problems are already non-elementarily
decidable~\cite{SW07,BozzelliL10}.  \vspace{0.1cm}

 \noindent \textbf{Constrained $\HL$.} $\HL$~\cite{FRS03,SW07,BozzelliL10} extends standard  $\LTL$ + past  by first-order concepts.
 Here, we consider a constrained version of $\HL$ ($\CHL$) where the temporal modalities are equipped with timing constraints. Formally,
 $\CHL$ formulas $\varphi$ over $\Prop$ and a set $X$ of (position) variables are defined by the following syntax:\vspace{0.1cm}
%

$
\varphi \DefinedAs \top \ |\ p  \ |\  x  \ |\ \neg\,\varphi \ |\ \varphi\, \wedge\, \varphi \ |\
  \Eventually_{\sim c} \varphi\ |\ \PEventually_{\sim c} \varphi  \ |\ \DBinder. \varphi
$\vspace{0.1cm}

\noindent  where $p \in \Prop$, $x\in X$, $\sim \in \{<,\leq,=,>,\geq\}$, $c\in \INT$,
$\Eventually_{\sim c}$ is the \emph{constrained strict eventually} modality and $\PEventually_{\sim c}$ is its past counterpart,   and  $\DBinder$ is the \emph{downarrow binder} operator which assigns the variable name $x$ to the current position.
A formula is \emph{monotonic} if it does not use equality constraints $= \, c$. We also exploit the constrained modalities $\Always_{\sim c}$ (\emph{always}) and $\PAlways_{\sim c}$ (\emph{past always}) as abbreviations for $\neg\Eventually_{\sim c}\neg\varphi$ and $\neg\PEventually_{\sim c}\neg\varphi$, respectively.
The standard strict eventually (resp., always) modality $\Eventually$ (resp., $\Always$) corresponds to $\Eventually_{>0}$ (resp., $\Always_{>0}$), and its past counterpart $\PEventually$ (resp., $\PAlways$) corresponds to
$\PEventually_{>0}$ (resp., $\PAlways_{>0}$). The logic $\HL$~\cite{FRS03,SW07,BozzelliL10} corresponds to the $\CHL$ fragment using only the temporal modalities $\Eventually$ and $\PEventually$.
 We denote by $\CHL_1$ and $\CHL_2$ (resp., $\HL_1$ and $\HL_2$) the one-variable and two-variable fragments of
$\CHL$ (resp., $\HL$). 
A $\CHL$ sentence is a formula where each variable $x$ is not free (i.e., occurs in the scope of  modality $\DBinder$). The size $|\varphi|$ of a $\CHL$ formula $\varphi$ is the number of distinct subformulas of $\varphi$ multiplied the number of bits for encoding the maximal constant occurring in $\varphi$.

$\CHL$ formulas $\varphi$ are interpreted over traces $w$.
For a position $i\geq 0$ and a \emph{valuation}   $g$ assigning to each variable  a position,
the satisfaction relation
$(w,i,g)\models \varphi$  is 
defined as follows (we omit the semantics of propositions and Boolean connectives which is standard):\vspace{0.1cm}

$
\begin{array}{ll}
 (w,i,g)\models x &  \Leftrightarrow i=g(x)\\
 (w,i,g)\models \Eventually_{\sim c}\,\varphi &  \Leftrightarrow \text{there is } j>i \text{ such that } j-i\sim c \text{ and } (w,j,g)\models \varphi\\
 (w,i,g)\models \PEventually_{\sim c}\,\varphi &  \Leftrightarrow \text{there is } j<i \text{ such that } i-j\sim c \text{ and } (w,j,g)\models \varphi\\
 (w,i,g)\models  \DBinder.\varphi & \Leftrightarrow
                 (w,i,g[x \mapsto i])\models \varphi
\end{array}
$\vspace{0.1cm}

\noindent where $g[x\mapsto i](x)=i$ and $g[x\mapsto i](y)=g(y)$ for $y\neq x$. 
We write $(w,i)\models \varphi$ to mean that
$(w,i,g_0)\models \varphi$, where $g_0$ maps each variable to position 0, and $w\models \varphi$ to mean that $(w,0)\models \varphi$.  \vspace{0.05cm}

\noindent \textbf{From $\DHSSF{\ABBbar}$ to $\CHL_1$.}  We show that (monotonic) $\DHSSF{\ABBbar}$ formulas can be translated in linear time into equivalent (monotonic) $\CHL_1$ sentences. For a constraint $\sim c$, we write $(\sim c)^{-1}$ for
$\sim' -c$, where $\sim'$ is $<$ (resp., $\leq$, $=$, $>$, $\geq$) if $\sim$ is $>$ (resp., $\geq$, $=$, $<$, $\leq$).

 \begin{proposition} \label{prop:FromDABtoCHLone}
 Given a (monotonic) $\DHSSF{\ABBbar}$ formula $\varphi$, one can construct in linear-time an equivalent (monotonic) $\CHL_1$ sentence.
 \end{proposition}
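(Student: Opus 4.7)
The plan is to define a compositional syntax-directed translation $T$ that maps each $\DHSSF{\ABBbar}$ formula $\varphi$ into a $\CHL_1$ formula $T(\varphi)$ with at most the single variable $x$ free, preserving the invariant that the $\DHS$ interval $[i,j]$ corresponds to the $\CHL_1$ configuration with current position $j$ and $g(x)=i$. The resulting $\CHL_1$ sentence is $\DBinder.\,T(\varphi)$, which binds $x$ to position $0$ so that the initial interval $[0,0]$ is correctly encoded. This one-variable encoding is feasible precisely because every modality in $\ABBbar$ either preserves the left endpoint (the cases $\hsB$, $\hsBt$ and their constrained variants) or moves it to the current right endpoint (the case $\hsA$), so a single register for the left endpoint always suffices.

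The clauses are homomorphic on Boolean connectives. Using the homogeneity principle, I set $T(p) \DefinedAs p \wedge \PAlways\bigl((x \vee \PEventually x) \to p\bigr)$, which forces $p$ at the current position and at every past position $\geq g(x)$. For the temporal cases, I take $T(\hsA\psi) \DefinedAs \DBinder.\bigl(T(\psi) \vee \Eventually T(\psi)\bigr)$, where the binder rebinds $x$ to the current right endpoint (the new left endpoint); $T(\hsB\psi) \DefinedAs \PEventually\bigl((x \vee \PEventually x) \wedge T(\psi)\bigr)$, where the guard forces the reached position to lie in $\{g(x),g(x)+1,\ldots\}$; and $T(\hsBt\psi) \DefinedAs \Eventually T(\psi)$. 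For the constrained variants I take $T(\hsB_{\Delta\sim c}\psi) \DefinedAs \PEventually_{(\sim c)^{-1}}\bigl((x \vee \PEventually x) \wedge T(\psi)\bigr)$ and $T(\hsBt_{\Delta\sim c}\psi) \DefinedAs \Eventually_{\sim c} T(\psi)$. The inversion $(\sim c)^{-1}$ arises because $\PEventually_{\sim' c'}$ constrains the positive distance $j-j'$, whereas the $\DHS$ difference $|[i,z]|-|[i,j]|=z-j$ on a prefix has the opposite sign. Since the inversion maps $\{<,\leq,>,\geq\}$ to itself, monotonicity is preserved by the translation.

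Correctness is a routine structural induction establishing that, for every trace $w$, interval $[i,j]$, and valuation $g$ with $g(x)=i$, we have $[i,j] \models_w \psi$ iff $(w,j,g) \models T(\psi)$. The atomic case is exactly the homogeneity principle; Boolean cases are immediate; the temporal cases unfold the $\CHL_1$ semantics together with the Allen-relation definitions, the only delicate point being that the guard $x \vee \PEventually x$ precisely captures the side condition $z \geq i$ needed for $[i,z]$ to be a valid prefix, and that $\Eventually$ and $\PEventually$ in $\CHL$ are strict, matching the strict endpoint inequalities of $\hsB$ and $\hsBt$. Applying the invariant at position $0$ with $g_0(x)=0$ yields $w \models \varphi$ iff $w \models \DBinder.\,T(\varphi)$.

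Linearity of $|T(\varphi)|$ is then immediate: each clause contributes a constant number of $\CHL_1$ connectives (the recurring subformulas $x$ and $\PEventually x$ can be shared across clauses), and the binary-encoded constants of the $\DHS$ constraints are preserved up to sign change. The main conceptual point, and the reason the argument is tailored to $\ABBbar$, is that the single-variable bound is tight: an additional modality such as $\hsE$ would require fixing the right endpoint while rebinding a fresh left endpoint, forcing a second variable and taking us outside $\CHL_1$.
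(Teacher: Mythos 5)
Your translation is essentially identical to the paper's: the same invariant (variable $x$ holds the left endpoint, the current position the right endpoint), the same clauses for $\hsA$, $\hsB_{\Delta\sim c}$, $\hsBt_{\Delta\sim c}$ with the constraint inversion $(\sim c)^{-1}$, an atomic clause that is just the $\PAlways$-form of the paper's $p\wedge\neg\PEventually(\neg p\wedge(x\vee\PEventually x))$, and the same closing sentence $\DBinder.\,T(\varphi)$ proved correct by the same structural induction. The argument is correct, and the extra remarks on monotonicity and linearity are fine.
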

 \begin{proof} Fix a variable  $x$. 
 In the translation, $x$ and the current position  refer  to the left  endpoint and right endpoint  of the current interval in $\Nat$,
 respectively. 
 We can assume that the modalities for the Allen's relations $\RelB$ and $\RelBt$ occur only in a constrained form (for example, $\hsB$ corresponds to $\hsB_{<0}$).
   Formally,  the translation $f: \DHSSF{\ABBbar} \mapsto \CHL_1$ is homomorphic w.r.t.~the Boolean connectives and is inductively defined as follows:\vspace{0.1cm}

$
\begin{array}{rlrl}
\hspace{-0.3cm} f(p) & \DefinedAs  p\wedge \neg\PEventually( \neg p \wedge (x\vee \PEventually x)) \quad\quad \quad \,\,\, f(\hsA \varphi)  &\DefinedAs & \DBinder.\, (f(\varphi)\vee \Eventually f(\varphi))\\
\hspace{-0.3cm} f(\hsB_{\sim c}\varphi) &    \DefinedAs    \PEventually_{(\sim c)^{-1}} (f(\varphi) \wedge (x\vee \PEventually x))\quad\quad
f(\hsBt_{\sim c}\varphi)  &   \DefinedAs &  \Eventually_{ \sim c}  f(\varphi)
\end{array}
$\vspace{0.1cm}

 By a straightforward induction on $\varphi$, we obtain that given a trace $w$, an interval $[i,j]$, a valuation $g$ such that
 $g(x)=i$, it holds that $[i,j]\models_w \varphi$ if and only if $(w,j,g)\models f(\varphi)$. The desired
 $\CHL_1$ sentence $\varphi'$ equivalent to $\varphi$ is then defined as follows: $\varphi'\DefinedAs \DBinder. \,f(\varphi)$.
 \end{proof}\vspace{-0.1cm}

In Section~\ref{sec:MonotonicCHLOne}, we show that model checking and satisfiability of monotonic $\CHL_1$ are \EXPSPACE-complete.  By~\cite{BozMPS2021}, for the logic $\AB$ over traces, the considered problems are already \EXPSPACE-hard. 
 Thus, by Proposition~\ref{prop:FromDABtoCHLone} we obtain the following result.

 \begin{theorem}MC and satisfiability of monotonic $\DHSSF{\ABBbar}$  are   \EXPSPACE-complete.
 \end{theorem}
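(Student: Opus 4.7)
The plan is to combine the linear-time translation from $\DHSSF{\ABBbar}$ to $\CHL_1$ given in Proposition~\ref{prop:FromDABtoCHLone} with the \EXPSPACE\ decision procedures for monotonic $\CHL_1$ developed in Section~\ref{sec:MonotonicCHLOne}, and to match the resulting upper bound against the \EXPSPACE-hardness of plain $\AB$ from~\cite{BozMPS2021}.

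For the upper bound, we argue as follows. Given a monotonic $\DHSSF{\ABBbar}$ formula $\varphi$, we first apply Proposition~\ref{prop:FromDABtoCHLone} to compute in linear time an equivalent $\CHL_1$ sentence $\varphi'$ of size $O(|\varphi|)$. A point that needs explicit verification is that $\varphi'$ is itself monotonic: the translation $f$ defined in that proof sends each constraint $\sim c$ to $(\sim c)^{-1}$, i.e.\ to $\sim'\,-c$, where $\sim'$ is non-equality precisely when $\sim$ is non-equality. Hence no equality constraint is introduced and $\varphi'$ lies in monotonic $\CHL_1$. Since $\Lang(\varphi)=\Lang(\varphi')$, satisfiability of $\varphi$ and the MC instance $\Lang(\Ku)\subseteq \Lang(\varphi)$ reduce in linear time to the corresponding problems for $\varphi'$, which can then be decided in \EXPSPACE\ via the machinery of Section~\ref{sec:MonotonicCHLOne}.

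For the lower bound, observe that every $\AB$ formula is already, as a syntactic object, a (trivially monotonic) $\DHSSF{\ABBbar}$ formula with the same trace semantics: the unconstrained modalities $\hsA$, $\hsB$, $\hsBt$ of $\AB$ are available in $\DHSSF{\ABBbar}$ and no constraint of the form $\Delta \sim c$ is needed. Thus the identity map is a polynomial-time reduction from MC/satisfiability of $\AB$ into MC/satisfiability of monotonic $\DHSSF{\ABBbar}$, and the \EXPSPACE-hardness results of~\cite{BozMPS2021} for $\AB$ transfer immediately.

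The main obstacle is not located in the present theorem, which is essentially an assembly result, but in the \EXPSPACE\ decision procedure for monotonic $\CHL_1$ invoked in the upper bound; that automaton-theoretic construction — developed in Section~\ref{sec:MonotonicCHLOne} — is where the real work lies. Within the assembly argument sketched above, the only delicate bookkeeping is checking preservation of monotonicity under the inversion $(\sim c) \mapsto (\sim c)^{-1}$, which is immediate from the case analysis on $\sim$.
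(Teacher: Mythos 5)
Your proof follows essentially the same route as the paper: the upper bound via the linear-time translation of Proposition~\ref{prop:FromDABtoCHLone} (whose statement already covers preservation of monotonicity, which you verify explicitly through the inversion $(\sim c)^{-1}$) combined with the \EXPSPACE\ procedure for monotonic $\CHL_1$ of Section~\ref{sec:MonotonicCHLOne}, and the lower bound by observing that $\AB$ is a syntactic fragment of monotonic $\DHSSF{\ABBbar}$ so the \EXPSPACE-hardness of~\cite{BozMPS2021} transfers. The argument is correct and matches the paper's assembly of these ingredients.
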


 \noindent \textbf{Decidability of $\DHSS$.} We first introduce a variant of $\CHL$, we call \emph{swap $\CHL$} ($\SCHL$). 
  $\SCHL$ formulas $\varphi$  are defined as follows:
$
\varphi \DefinedAs \top \ |\ p  \ |\  x  \ |\ \neg\,\varphi \ |\ \varphi\, \wedge\, \varphi \ |\
  \Eventually_{\sim c} \, \varphi\ |\ \PEventually_{\sim c} \, \varphi  \ |\ \Swap_x. \varphi.
$

\noindent  The novel modality $\Swap_x$ simultaneously
assigns to $x$ the value of the current position and updates the current position to the value previously referenced by $x$. Formally, its semantics is defined as follows:
$
 (w,i,g)\models \Swap_x . \varphi    \Leftrightarrow
                 (w,g(x),g[x \mapsto i])\models \varphi.
$
%

We are interested in the one-variable fragment $\SCHL_1$ of $\SCHL$, and in the unconstrained version
 $\SHL_1$ of $\SCHL_1$ where the unique temporal modalities are $\Eventually$ and $\PEventually$.  From a succinctness point of view, the fragment $\SCHL_1$  lies between $\CHL_1$ and $\CHL_2$ (a proof is in Appendix~\ref{APP:ComparisonHLandSHL}).

\newcounter{prop-ComparisonHLandSHL}
\setcounter{prop-ComparisonHLandSHL}{\value{proposition}}
\newcounter{sec-ComparisonHLandSHL}
\setcounter{sec-ComparisonHLandSHL}{\value{section}}

\begin{proposition}\label{prop:ComparisonHLandSHL} Given a $\CHL_1$ (resp., $\HL_1$) sentence, one can construct in linear time an equivalent $\SCHL_1$ (resp., $\SHL_1$) sentence. Moreover, given a $\SCHL_1$ (resp., $\SHL_1$) sentence, one can construct in linear time an equivalent $\CHL_2$ (resp., $\HL_2$) sentence.
\end{proposition}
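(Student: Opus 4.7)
The plan is to prove both translations by structural induction, using a small "jump-to-the-position-named-by-a-variable" gadget built from $\Eventually$, $\PEventually$, and the atomic variable test; the same scheme handles the $\HL$-to-$\SHL$ cases since the gadget uses only unconstrained modalities.

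\textbf{From $\CHL_1$ to $\SCHL_1$ (and $\HL_1$ to $\SHL_1$).} The task is to eliminate each $\DBinder$ in favour of $\Swap_x$. Observe that $\Swap_x$ performs two actions at once: it overwrites $x$ with the current position $i$ \emph{and} jumps to the old value $g(x)$. To recover the pure ``store $i$ in $x$'' effect of $\DBinder$, let
\[
J(\psi)\DefinedAs (x\wedge \psi)\,\vee\, \Eventually(x\wedge \psi)\,\vee\, \PEventually(x\wedge \psi),
\]
which forces $\psi$ to be evaluated at the (unique) position referenced by $x$. I translate $\DBinder.\varphi$ as $\Swap_x.J(f(\varphi))$ and propagate the translation $f$ homomorphically over the remaining Boolean connectives and constrained modalities. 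A one-line semantic check shows that after $\Swap_x$ the variable $x$ holds $i$, so $J$ brings us back to position $i$ with the updated valuation, matching the semantics of $\DBinder$. Since the size of a $\CHL$ formula counts distinct subformulas (DAG size), $J$ adds only a constant number of nodes on top of the shared subformula $f(\varphi)$, so the construction is linear.

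\textbf{From $\SCHL_1$ to $\CHL_2$ (and $\SHL_1$ to $\HL_2$).} I would use two variables $x_1,x_2$ and parameterise the translation by a \emph{role variable} $v\in\{x_1,x_2\}$ currently playing the role of the single $\SCHL_1$ variable $x$. The inductive invariant is: for every trace $w$, position $i$, $\SCHL_1$-valuation $g$, and $\CHL_2$-valuation $h$ with $h(v)=g(x)$, one has $(w,i,g)\models\varphi$ iff $(w,i,h)\models f_v(\varphi)$. Define $f_v(x):=v$ and translate Boolean connectives and constrained strict modalities homomorphically, preserving $v$. The only non-trivial clause is
\[
f_v(\Swap_x.\varphi)\DefinedAs \Binder{v'}.\,J_v(f_{v'}(\varphi)),
\]
where $v'$ is the other variable and $J_v(\psi)\DefinedAs (v\wedge \psi)\vee \Eventually(v\wedge \psi)\vee \PEventually(v\wedge \psi)$. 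Here $\Binder{v'}$ records the current position $i$ in $v'$; $J_v$ then jumps to the old value $j$ of $v$; and \emph{switching the role} to $v'$ ensures that references to $x$ inside $\varphi$ resolve to $i$, which is exactly the new value of $x$ after $\Swap_x$. The top-level call is $f_{x_1}(\varphi)$: since $\varphi$ is a sentence and both $\CHL_2$ variables are initialised to $0$ (as is $g(x)$), the invariant holds at the start. DAG-sharing of $f_{v'}(\varphi)$ inside $J_v$ keeps the construction linear.

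The main technical obstacle is the role-swapping in the second direction. One has to state the inductive invariant precisely and verify the $\Swap_x.\varphi$ clause carefully, checking that when the translation passes from $v$ to $v'$ (and back again under nested swaps) the current evaluation position and the variable ``standing for $x$'' are permuted in tandem. All remaining inductive steps, as well as the correctness of the $J$ and $J_v$ gadgets (which rest on the fact that $v$ is true at exactly one position of the trace), are routine.
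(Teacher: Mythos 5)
Your proposal is correct and takes essentially the same route as the paper's proof: in the first direction you translate $\DBinder.\varphi$ as a swap followed by a jump back to the position named by $x$, and in the second direction you use a translation parameterised by which of the two variables currently plays the role of the single $\SCHL_1$ variable, translating $\Swap_x.\varphi$ as a binder on the other variable followed by a jump to the old one --- exactly the paper's scheme, with your three-case gadget $(x\wedge\psi)\vee\Eventually(x\wedge\psi)\vee\PEventually(x\wedge\psi)$ being an equivalent variant of the paper's $\Eventually\PEventually(x\wedge\psi)$. The only cosmetic difference is that the paper wraps the second translation in an outer $\Binder{x_h}$ so that the resulting formula is literally a sentence; you should add this (it changes nothing semantically at position $0$ under the all-zero initial valuation).
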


 We can show that
$\DHSS$ formulas can be converted in exponential time into equivalent $\SCHL_1$ sentences.
Moreover, the logic $\HS$ over traces exactly corresponds to $\SHL_1$, i.e.,
there are linear-time translations from  $\HS$ formulas into equivalent $\SHL_1$ sentences, and vice versa.

\newcounter{prop-FromSimpleDHStoSCHL}
\setcounter{prop-FromSimpleDHStoSCHL}{\value{proposition}}
\newcounter{sec-FromSimpleDHStoSCHL}
\setcounter{sec-FromSimpleDHStoSCHL}{\value{section}}

 \begin{proposition}\label{prop:FromSimpleDHStoSCHL}
 \begin{enumerate}
   \item Given a $\DHSS$ formula $\varphi$, one can construct in singly exponential time
 an equivalent $\SCHL_1$ sentence $\psi$.  Moreover, if $\varphi$ is a $\DHSF{\BEBbarEbar}$ formula, then $\psi$ can be constructed in linear time, and $\psi\in \SHL_1$ if $\varphi\in \HS$.
   \item Given a $\SHL_1$ sentence $\varphi$, one can construct in linear time
 an equivalent $\HS$ formula $\psi$.
 \end{enumerate}
\end{proposition}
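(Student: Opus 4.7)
For part 1, I will define a translation $f\colon \DHSS \to \SCHL_1$ by structural induction using the convention that an interval $[i,j]$ is represented in $\SCHL_1$ by the configuration with $x$ bound to the left endpoint $i$ and the current position equal to the right endpoint $j$; concretely, the invariant will be $[i,j]\models_w \varphi$ iff $(w,j,g[x\mapsto i])\models f(\varphi)$. For a proposition $p$, homogeneity in $[i,j]$ is captured, as in the proof of Proposition~\ref{prop:FromDABtoCHLone}, by $f(p) \DefinedAs p \wedge \neg\PEventually(\neg p \wedge (x \vee \PEventually x))$; the modalities $\hsB_{\sim c}$ and $\hsBt_{\sim c}$ leave the left endpoint intact and reuse verbatim the $\CHL_1$ clauses from Proposition~\ref{prop:FromDABtoCHLone} (no swap needed). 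For modalities that move the left endpoint I will exploit $\Swap_x$: for example, $f(\hsE_{\sim c}\varphi) \DefinedAs \Swap_x.\,\Eventually_{(\sim c)^{-1}}[(x \vee \Eventually x) \wedge \Swap_x.\, f(\varphi)]$, where the first swap exposes the old right endpoint $j$ as $x$, the guard $(x\vee\Eventually x)$ restricts the new left endpoint $v$ to lie in the original interval, and the second swap reinstates the representation invariant. Analogous constant-blowup clauses handle $\hsEt_{\sim c}$, $\hsA$, $\hsAt$, $\hsL$, $\hsLt$, $\hsO$, $\hsOt$ using a bounded number of $\Swap_x$/$\Eventually$/$\PEventually$ nestings and $x$-style guards for the endpoint orderings. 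When $\varphi\in\HS$ no constrained modality is present, so the output uses only $\Eventually$, $\PEventually$, $\Swap_x$ and lies in $\SHL_1$, yielding the linear $\HS$-to-$\SHL_1$ translation required by item 2; likewise, formulas in $\DHSF{\BEBbarEbar}$ only feature modalities preserving one endpoint, so no disjunctive splitting is needed and the translation is linear.

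The main obstacle is the pair $\hsD_{\sim c}$, $\hsDt_{\sim c}$, which simultaneously shifts both endpoints. Writing $d_1=v-i$ and $d_2=j-z$ for the shifts at the endpoints of a contained $[v,z]$ of $[i,j]$, an immediate computation gives $|J|-|I|=-(d_1+d_2)$, so the difference constraint $|J|-|I|\sim c$ becomes the sum constraint $d_1+d_2\sim' -c$ with $\sim'$ the flip of $\sim$, coupling two otherwise independent temporal distances. Since $\SCHL_1$-constraints can bound only one distance at a time, I will break the sum by a disjunction over all splittings $(a,b)$ with $a,b\geq 1$ and $a+b\sim' -c$, each replaced by a pair of constrained $\Eventually_{=a}$ and $\PEventually_{=b}$ steps interleaved with two $\Swap_x$ operations, following the template already developed for $\hsE$ and $\hsB$. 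Since $c$ is encoded in binary, the number of splittings is $O(|c|)$, giving the singly exponential blowup for full $\DHSS$; correctness of the whole translation is then a routine induction on $\varphi$.

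For part 2, I will define two mutually recursive translations $g_R,g_L\colon \SHL_1 \to \HS$ such that $[i,j]\models_w g_R(\psi)$ iff $(w,j,g[x\mapsto i])\models\psi$ and $[i,j]\models_w g_L(\psi)$ iff $(w,i,g[x\mapsto j])\models\psi$. I set $g_R(p) \DefinedAs \Right(p)$, $g_L(p) \DefinedAs \Left(p)$, $g_R(x)=g_L(x) \DefinedAs \len{1}$, handle Booleans homomorphically, and observe that $\Swap_x$ just exchanges the two roles, so $g_R(\Swap_x.\psi)=g_L(\psi)$ and vice versa at no cost. For the future modality, when current is the right endpoint the admissible positions lie strictly beyond the interval and are captured by $\hsBt$, giving $g_R(\Eventually\psi)\DefinedAs \hsBt g_R(\psi)$; when current is the left endpoint the new position either stays inside the interval (use $\hsE$ with $g_L$) or strictly overshoots its right endpoint (use $\hsA$ combined with $\neg\len{1}$ and $g_R$), yielding a constant-size disjunction. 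The $\PEventually$ clauses are symmetric, using $\hsEt$ with $g_L$ when current is the left endpoint, and $\hsB$ with $g_R$ together with $\hsAt$ combined with $\neg\len{1}$ and $g_L$ when current is the right endpoint. The $\HS$ formula equivalent to a $\SHL_1$ sentence $\psi$ is then $g_R(\psi)$, since the initial configuration current $=g(x)=0$ corresponds to the singleton interval $[0,0]$. Because each $\SHL_1$ subformula contributes only a constant number of $\HS$-subformulas to the pair $(g_R,g_L)$ computed simultaneously, the translation is linear in the number of distinct subformulas, and correctness is a routine induction on the formula.
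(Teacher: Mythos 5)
Your proposal is, in substance, the same proof as the paper's. For item~2 your pair $g_R,g_L$ is exactly the paper's translation $f(\cdot,\tau)$ with the three syntactic modes ($x$ before, after, or equal to the current position), the equality mode being folded into your two invariants via the singleton case, and your clauses coincide with the paper's up to this reorganisation. For item~1 you use the same key ingredients: an endpoint-encoding invariant, $\Swap_x$ to handle modalities that move the left endpoint (the paper performs one swap and tracks the resulting mode syntactically, you add a second swap to restore a single canonical mode -- both are fine), and a disjunctive splitting of the sum $d_1+d_2$ for the constrained $\hsD$/$\hsDt$ modalities, which is exactly where the paper's singly exponential blow-up comes from (there the splitting is done at the $\DHS$ level, e.g.\ $\hsDt_{\Delta\geq n}\varphi\equiv\bigvee_{n_1,n_2\geq 0,\,n_1+n_2=n}\hsBt_{\Delta\geq n_1}\hsEt_{\Delta\geq n_2}\varphi$, followed by a linear translation of $\DHSF{\BEBbarEbar}$ into $\SCHL_1$).

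One step needs repair: the splitting ``over all $(a,b)$ with $a,b\geq 1$ and $a+b\sim' -c$, each replaced by $\Eventually_{=a}$ and $\PEventually_{=b}$'' is an \emph{infinite} disjunction whenever the induced constraint on $d_1+d_2$ is a lower bound (e.g.\ $\hsD_{\Delta\leq c}$, which becomes $d_1+d_2\geq -c$), so as written it is not a formula. The fix is the one the paper uses: enumerate only the pairs summing exactly to the threshold and relax the per-coordinate constraints to the corresponding inequalities; this keeps the disjunction of size linear in the value of $c$, hence singly exponential in its binary encoding, and does not affect correctness. Two smaller points: your clauses for the unconstrained modalities $\hsA,\hsL,\hsO$ and their inverses are only asserted, not given (the paper sidesteps this by the standard interdefinability with $\BEBbarEbar$ under the non-strict semantics, which you could equally invoke); and the linear $\HS$-to-$\SHL_1$ translation you extract from item~1 establishes the last claim of item~1, not item~2 -- item~2 is the converse direction, which your $g_R/g_L$ construction does correctly provide.
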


A proof of Proposition~\ref{prop:FromSimpleDHStoSCHL} is  in Appendix~\ref{APP:FromSimpleDHStoSCHL}.
By~\cite{SW07,BozzelliL10},  model checking and satisfiability of $\CHL_2$ are decidable  though with a non-elementary complexity.
We can show that for the logic $\DHSS$, the considered problems are at least \TWOEXPSPACE-hard even for the fragment given by monotonic $\DHSSF{\ABE}$ (for a proof, see Appendix~\ref{APP:LowerBoundeSimpleDHS}). Moreover, note that
 $\CHL$ formulas can be trivially translated into equivalent formulas of first-order logic $\FO$ over traces. Thus, by the first-order expressiveness completeness of the fragment $\AB$ of $\HS$~\cite{BozzelliMMPS19} (under the considered trace-based semantics), and Propositions~\ref{prop:ComparisonHLandSHL}--\ref{prop:FromSimpleDHStoSCHL}, we obtain the following result.

\begin{theorem} Model checking and satisfiability of $\DHSS$ are decidable and at least \TWOEXPSPACE-hard  even for the fragment given by monotonic $\DHSSF{\ABE}$. Moreover, $\DHSS$, monotonic $\DHSSF{\ABBbar}$, and $\HS$ have the same expressiveness.
\end{theorem}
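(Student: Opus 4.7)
The statement is a conjunction of three facts: decidability, a hardness lower bound, and a three-way expressiveness equivalence. I would prove them in this order and independently, each by appealing to results already in place.

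For decidability, the plan is to chain the translations already at hand. Given $\varphi\in\DHSS$, Proposition~\ref{prop:FromSimpleDHStoSCHL}(1) yields, in singly exponential time, an equivalent $\SCHL_1$ sentence, which Proposition~\ref{prop:ComparisonHLandSHL} turns, in linear time, into an equivalent $\CHL_2$ sentence. Decidability of model checking and satisfiability of $\CHL_2$ is provided by~\cite{SW07,BozzelliL10}, so both problems for $\DHSS$ are decidable (with a non-elementary upper bound inherited from $\CHL_2$). The $\TWOEXPSPACE$ lower bound on the monotonic fragment $\DHSSF{\ABE}$ is obtained by an independent reduction carried out in Appendix~\ref{APP:LowerBoundeSimpleDHS} and is simply invoked here.

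For the expressiveness claim, the plan is to close the cycle
\[
\HS \,\preceq\, \text{monotonic }\DHSSF{\ABBbar} \,\preceq\, \DHSS \,\preceq\, \HS.
\]
The first inclusion follows from the first-order expressive completeness of $\AB$ over traces~\cite{BozzelliMMPS19}: since $\HS$ and $\LTL$ coincide over traces, every $\HS$ formula is equivalent to some $\AB$ formula, and $\AB$ embeds syntactically into monotonic $\DHSSF{\ABBbar}$. The second inclusion is a trivial syntactic containment. For the closing inclusion, I apply Proposition~\ref{prop:FromSimpleDHStoSCHL}(1) to pass from $\DHSS$ to $\SCHL_1$, Proposition~\ref{prop:ComparisonHLandSHL} to pass from $\SCHL_1$ to $\CHL_2$, and finally observe, as already noted in the excerpt, that $\CHL$ sentences translate into equivalent $\FO$ sentences over traces (the difference constraints $\sim c$ being encodable by a bounded chain of successor quantifications); by Kamp's theorem~\cite{kamp1968tense} and first-order completeness of $\AB$, one then obtains an equivalent $\HS$ formula.

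The delicate step, and the one I would scrutinise most carefully, is the closing inclusion $\DHSS \preceq \HS$: one must confirm that the cumulative translations $\DHSS \to \SCHL_1 \to \CHL_2 \to \FO \to \AB \subseteq \HS$ are genuinely meaning-preserving, so that the exponential blow-up in the first step, though it would be fatal for a complexity argument, is harmless for an expressiveness equivalence. The key semantic bridges are the identifications $\HS \equiv \LTL$ over traces and the $\FO$-completeness of $\AB$, both established in~\cite{BozzelliMMPS19}, together with the observation that $\CHL$ (even with constants encoded in binary) stays within $\FO$ because integer offsets can be unfolded into finite successor chains definable in $\FO$ over $(\Nat,<)$.
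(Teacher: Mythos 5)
Your proposal is correct and follows essentially the same route as the paper: decidability via the chain $\DHSS \to \SCHL_1 \to \CHL_2$ (Propositions~\ref{prop:FromSimpleDHStoSCHL} and~\ref{prop:ComparisonHLandSHL}) together with decidability of $\CHL_2$ from~\cite{SW07,BozzelliL10}, the \TWOEXPSPACE\ lower bound by the tiling reduction of Appendix~\ref{APP:LowerBoundeSimpleDHS}, and the expressiveness equivalence by closing the cycle through the trivial translation of $\CHL$ into $\FO$ and the $\FO$-completeness of $\AB$ over traces~\cite{BozzelliMMPS19}. Your explicit remark that the exponential blow-up is immaterial for the expressiveness claim is a fair and accurate reading of the same argument.
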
  

\subsection{\EXPSPACE-completeness of monotonic $\CHL_1$}\label{sec:MonotonicCHLOne}

In this section, we describe an asymptotically optimal automata-theoretic approach to solve satisfiability
and model checking of monotonic $\CHL_1$ ($\MCHL_1$ for short), which is based on a direct translation of $\MCHL_1$ sentences into
\emph{two-way alternating finite-state word automata} ($\TwoAWA$) equipped with standard generalized B\"{u}chi acceptance conditions. 

A $\MCHL_1$ formula is in \emph{monotonic normal form} (\MNF) if negation is applied only to atomic propositions and variables, and
the constrained temporal modalities are of the form $\OP_{\leq c}$ with $c\geq 1$ and $\OP\in \{\Eventually,\Always,\PEventually,\PAlways\}$. A $\MCHL_1$ formula $\varphi$ can be easily converted in linear-time into an equivalent $\MCHL_1$ formula in $\MNF$ $\varphi_M$, called the $\MNF$ of $\varphi$ (for details, see Appendix~\ref{APP:MonotonicNormalForm}). The \emph{dual $\widetilde{\varphi_M}$ of $\varphi_M$}
is the $\MNF$ of $\neg\varphi_M$.
\vspace{0.1cm}

\noindent \textbf{Characterization of the satisfaction relation.} We fix a monotonic $\CHL_1$ formula $\varphi$ with variable $x$, where $x$ may occur free. W.l.o.g.~we assume that $\varphi$ is in $\MNF$, and $\Prop$ is the set of atomic propositions occurring in $\varphi$.
First,  we give an operational characterization of the satisfaction relation $w\models \varphi$ which non-trivially generalizes the classical notion of Hintikka-sequence of \LTL.  Essentially, for each  trace $w$ and valuation $g$ of variable $x$, we associate to $w$  and $g$ infinite sequences  $\rho=A_0,A_1,\ldots$  of sets, where for each $i\geq 0$, $A_i$ is an \emph{atom} and intuitively describes a maximal set of  subformulas of  $\varphi$ which can hold at position $i$ along $w$ w.r.t.~the valuation $g$.
As for \LTL, the notion of atom syntactically captures the semantics of Boolean connectives. The fixpoint characterization of the unconstrained temporal modalities and the semantics of the constrained temporal modalities are locally captured by  requiring that consecutive pairs $A_i,A_{i+1}$ along the sequence $\rho$ satisfy determined syntactical constraints. Finally,  the sequence $\rho$ has to satisfy   additional non-local conditions reflecting the liveness requirements $\psi$ in the eventually subformulas $\Eventually\psi$ of $\varphi$, and the semantics of the binder modality $\DBinder$.  Now, we give the technical details.

A formula $\psi$ is  a \emph{first-level subformula} of $\varphi$ if there is an occurrence of $\psi$ in
$\varphi$ which is not in the scope of  modality $\DBinder$. The \emph{closure} $cl(\varphi)$ of $\varphi$
is the smallest set containing (i) $x$, $\top$, the propositions in $\Prop$, formula $\PEventually_{\leq 1}\top$, and (ii)
all the first-level subformulas $\psi$ of $\varphi$ together with $\Eventually_{\leq 1}\psi$ and $\PEventually_{\leq 1}\psi$, and (iii) the duals of the formulas in the points (i) and (ii). Note that $\varphi\in cl(\varphi)$ and $|cl(\varphi)|=O(|\varphi|)$.
Moreover, the set $obl(\varphi)$ of \emph{$\varphi$-obligations}  is the set of pairs of the form
$(\OP_{\leq c}\psi,d)$ such that $\OP\in\{\Eventually,\PEventually,\Always,\PAlways\}$, $\OP_{\leq c}\psi\in cl(\varphi)$, $c>1$,
and $1\leq d\leq c-1$. Intuitively, the obligations are exploited for capturing in a succinct way the semantics of the constrained temporal modalities. In particular, an obligation of the form $(\Eventually_{\leq c}\psi,d)$ asserted at a position $i$ means that there is
$j>i$ such that $\psi$ holds at position $j$, and \emph{$i+d$ is the smallest of such $j$}. Note that two distinct obligations associated
to the same formula $\Eventually_{\leq c}\psi$ cannot hold simultaneously at the same position. Dually, an
obligation of the form  $(\Always_{\leq c}\psi,d)$ asserted at a position $i$ means that there is $j>i$ such that
$\psi$ holds at all positions in $[i+1,j]$, and \emph{$i+d$ is the greatest of such $j$}. The meaning of the obligations associated to the past constrained modalities is similar. Evidently, $|obl(\varphi)|=2^{O(|\varphi|)}$.
A \emph{$\varphi$-atom} $A$ is a subset of $cl(\varphi)\cup obl(\varphi)$ such that $\top\in A$ and the following holds, where $c>1$ and $\OP\in\{\Eventually,\PEventually,\Always,\PAlways\}$:
\begin{compactitem}
  \item for each $\psi\in cl(\varphi)$, $\psi\in A$ iff $\widetilde{\psi}\notin A$;
  \item for each $\psi_1\wedge \psi_2\in cl(\varphi)$, $\psi_1\wedge \psi_2\in A$ iff $\{\psi_1,\psi_2\}\subseteq A$;
  \item for each $\psi_1\vee \psi_2\in cl(\varphi)$, $\psi_1\vee \psi_2\in A$ iff $\{\psi_1,\psi_2\}\cap A\neq \emptyset$;
  \item for each $\OP_{\leq c}\psi\in cl(\varphi)$, 
  \emph{there is at most one obligation} of the form $(\OP_{\leq c}\psi,d)$ in $A$.
\end{compactitem}
We crucially observe that 
the set $\Atoms(\varphi)$ of $\varphi$-atoms has a cardinality which is at most singly exponential in $|\varphi|$, i.e.~$|\Atoms(\varphi)|=2^{O(|\varphi|)}$. A $\varphi$-atom $A$ is \emph{initial} if $A$ does not contain formulas of the form
$\PEventually\psi$ or $\PEventually_{\leq c}\psi$  and obligations of the form $(\OP_{\leq c}\psi,d)$ with $\OP\in\{\PEventually,\PAlways\}$.

We now define  the function $\Succ_\varphi$ which maps each atom $A\in \Atoms(\varphi)$ to a subset of $\Atoms(\varphi)$.
 Intuitively, if $A$ is the atom associated with a position $i$ of the given trace $w$, then $\Succ_\varphi(A)$ contains the set of atoms associable to the next  position $i+1$ (w.r.t.~a given valuation of variable  $x$).
Formally,
$A'\in\Succ_\varphi(A)$ iff $A'$ is not initial and the following holds:\vspace{-0.2cm}

\begin{itemize}
\item \emph{$\Eventually$-requirements}: for all $\Eventually\psi\in cl(\varphi)$, $\Eventually\psi\in A$ $\Leftrightarrow$
$\{\Eventually\psi,\psi\}\cap A'\neq \emptyset$.
\item \emph{$\PEventually$-requirements}: for all $\PEventually\psi\in cl(\varphi)$, $\PEventually\psi\in A'$ $\Leftrightarrow$
$\{\PEventually\psi,\psi\}\cap A\neq \emptyset$.
\item \emph{$\Always$-Requirements}: for all $\Always\psi\in cl(\varphi)$, $\Always\psi\in A$ $\Leftrightarrow$
$\{\Always\psi,\psi\}\subseteq A'$.
\item \emph{$\PAlways$-requirements}: for all $\PAlways\psi\in cl(\varphi)$, $\PAlways\psi\in A'$ $\Leftrightarrow$
$\{\PAlways\psi,\psi\}\subseteq A$.
\item \emph{$\Eventually_{\leq c}$-requirements}: for all $\Eventually_{\leq c}\psi\in cl(\varphi)$,
    \begin{compactitem}
      \item $\Eventually_{\leq c}\psi\in A$ $\Leftrightarrow$ \emph{either}  $\psi\in A'$ \emph{or} $c>1$ and
    $(\Eventually_{\leq c}\psi,d)\in A'$ for some $1\leq d<c$;
      \item for each $1\leq d<c$,
    $(\Eventually_{\leq c}\psi,d)\in A$ $\Leftrightarrow$ \emph{either} $d=1$ and $\psi\in A'$, \emph{or} $d>1$, $\psi\notin A'$, and $(\Eventually_{\leq c}\psi,d-1)\in A'$.
    \end{compactitem}
\item \emph{$\PEventually_{\leq c}$-requirements}: for all $\PEventually_{\leq c}\psi\in cl(\varphi)$,
    \begin{compactitem}
      \item $\PEventually_{\leq c}\psi\in A'$ $\Leftrightarrow$ \emph{either}  $\psi\in A$ \emph{or} $c>1$ and
    $(\PEventually_{\leq c}\psi,d)\in A$ for some $1\leq d<c$;
      \item for each $1\leq d< c$,
    $(\PEventually_{\leq c}\psi,d)\in A'$ $\Leftrightarrow$ \emph{either} $d=1$ and $\psi\in A$, \emph{or} $d>1$, $\psi\notin A$, and $(\PEventually_{\leq c}\psi,d-1)\in A$.
    \end{compactitem}
  \item \emph{$\Always_{\leq c}$-requirements}: for all $\Always_{\leq c}\psi\in cl(\varphi)$,
    \begin{compactitem}
      \item $\Always_{\leq c}\psi\in A$ $\Leftrightarrow$ $\psi\in A'$ and, in case $c>1$, either $\Always_{\leq c}\psi\in A'$
      or  $(\Always_{\leq c}\psi,c-1)\in A'$;
      \item for each $1\leq d<c$,
    $(\Always_{\leq c}\psi,d)\in A$ $\Leftrightarrow$ \emph{either} $d=1$, $\psi\in A'$, and $\Eventually_{\leq 1}\psi\notin A'$, \emph{or}  $d>1$, $\psi \in A'$, and
    $(\Always_{\leq c}\psi,d-1)\in A'$.
    \end{compactitem}
     \item \emph{$\PAlways_{\leq c}$-requirements}: for all $\PAlways_{\leq c}\psi\in cl(\varphi)$,
    \begin{compactitem}
      \item $\PAlways_{\leq c}\psi\in A'$ $\Leftrightarrow$ $\psi\in A$ and, in case $c>1$, either    $\PAlways_{\leq c}\psi\in A$  or  $(\PAlways_{\leq c}\psi,c-1)\in A$;
      \item for each $1\leq d<c$,
    $(\PAlways_{\leq c}\psi,d)\in A'$ $\Leftrightarrow$ \emph{either} $d=1$,  $\psi\in A$,   and $\PEventually_{\leq 1}\psi\notin A$, \emph{or}  $d>1$, $\psi\in A$, and
    $(\PAlways_{\leq c}\psi,d-1)\in A$.
    \end{compactitem}
\end{itemize}\vspace{-0.1cm}

Note that 
$\Succ_\varphi$ captures the semantics of the constrained  modalities in accordance to the intended meaning of the associated obligations. 
Let $w$ be a trace and $\ell\geq 0$. A \emph{$\varphi$-sequence over the pointed trace $(w,\ell)$} is an infinite sequence $\rho=A_0,A_1,\ldots$ of $\varphi$-atoms such that:
\begin{compactitem}
  \item $A_0$ is initial, $x\in A_\ell$, and $x\notin A_i$ for each $i\neq \ell$;
  \item for each $i\geq 0$, $A_i\cap \Prop =w(i)$ (\emph{propositional consistency}), and $A_{i+1}\in\Succ_\varphi(A_i)$;
  \item \emph{Fairness}: for each $\Eventually\psi\in cl(\varphi)$ and for infinitely many $i\geq 0$, either $\psi\in A_i$ or $\Eventually\psi\notin A_i$.
\end{compactitem}
The standard fairness requirement ensures that the  
requirements $\psi$ in the first-level subformulas $\Eventually\psi$
of $\varphi$  are eventually satisfied.
In order to capture the semantics of the  modality $\DBinder$, we now give  the notion of \emph{fulfilling $\varphi$-sequence} by induction on the nesting depth of $\DBinder$. 
Formally, a $\varphi$-sequence $\rho=A_0,A_1,\ldots$ over  $(w,\ell)$ is \emph{fulfilling} if
for all $i\geq 0$ and $\DBinder.\,\psi\in A_i$, there is a fulfilling $\psi$-sequence $\rho'=A'_0,A'_1,\ldots$ over the pointed trace $(w,i)$ such that $\psi\in A'_i$.

The notion of fulfilling $\varphi$-sequence over a pointed trace $(w,\ell)$ provides a characterization
of the satisfaction relation $(w,i,g)\models \varphi$ with $g(x)=\ell$ (a proof is in Appendix~\ref{APP:characterizationCHLOne}). 

\begin{theorem}\label{theo:characterizationCHLOne} Let $\phi$ be a $\MCHL_1$ sentence in $\MNF$. Then, $w\in \Lang(\phi)$
if and only if there exists a fulfilling $\phi$-sequence $\rho=A_0,A_1,\ldots$ over $(w,0)$ such that $\phi\in A_0$.
\end{theorem}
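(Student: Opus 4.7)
The proof proceeds by induction on the nesting depth $n$ of the $\DBinder$ modality in $\phi$, proving simultaneously both implications of the biconditional. At each induction step, the first-level subformulas of the form $\DBinder.\psi$ are treated as atomic (propositional-like) via the inductive hypothesis, so that only the pure temporal and obligation machinery of $\MCHL_1$ without binders needs to be analysed. Since $\phi$ is a sentence in $\MNF$, the occurrence $x\in A_\ell$ with $\ell=0$ forces $g_0$ to be the relevant valuation for the outermost level.

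\textbf{Forward direction $(\Rightarrow)$.} Assume $w\models\phi$ and define the canonical sequence $\rho=A_0,A_1,\ldots$ at each position $i$ as follows. Put in $A_i$ exactly those $\psi\in cl(\phi)$ (including propositions, $x$ when $i=0$, first-level subformulas and their duals) such that $(w,i,g_0)\models \psi$, using the inductive hypothesis for $\DBinder$-subformulas. For obligations, for each $\Eventually_{\leq c}\psi\in cl(\phi)$ with $c>1$, add $(\Eventually_{\leq c}\psi,d)$ to $A_i$ iff the minimal $j>i$ with $(w,j,g_0)\models\psi$ exists and equals $i+d$ with $d<c$; symmetrically for $\PEventually_{\leq c}$; for $\Always_{\leq c}\psi$ add $(\Always_{\leq c}\psi,d)$ iff the maximal $j>i$ with $\psi$ holding at every position in $[i+1,j]$ equals $i+d$ with $d<c$; symmetrically for $\PAlways_{\leq c}$. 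One then verifies atom-consistency, propositional consistency, the $\Succ_\phi$ clauses (a routine case check on the position of $j$), fairness (the minimal-witness choice for $\Eventually\psi$ guarantees $\psi$ eventually enters $A_i$ whenever $\Eventually\psi\in A_i$, a standard Hintikka argument), and the fulfilling clause using the inductive hypothesis applied to the $\psi$-sequence canonically associated with $(w,i)$.

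\textbf{Backward direction $(\Leftarrow)$.} Given a fulfilling $\phi$-sequence $\rho$ over $(w,0)$ with $\phi\in A_0$, prove by structural induction on $\psi\in cl(\phi)$ (with the binder depth inductive hypothesis available for $\DBinder.\chi$ subformulas) that for all $i$, $\psi\in A_i\iff (w,i,g_0)\models\psi$. The Boolean and propositional cases follow from the atom definition; the $\DBinder$ case follows from the fulfilling condition combined with the outer induction on $n$; the unconstrained eventually/always cases follow as in $\LTL$ from the fairness clause via the dual $\widetilde{\psi}$ mechanism in $\MNF$. For the constrained modalities, one argues by an auxiliary induction on $d$ that $(\OP_{\leq c}\psi,d)\in A_i$ forces the truth value of $\psi$ at exactly the distance $d$ dictated by the intended meaning of the obligation; composing with the top-level $\Succ_\phi$ clause yields equivalence between $\OP_{\leq c}\psi\in A_i$ and the corresponding semantic condition.

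\textbf{Main obstacle.} The subtle part is the correctness of the obligation calculus: the clauses in $\Succ_\phi$ force $(\Eventually_{\leq c}\psi,d)$ to track the \emph{minimal} future witness and $(\Always_{\leq c}\psi,d)$ the \emph{maximal} future run, and only one obligation per formula may occur per atom. One must verify that the combined local conditions --- especially the ``$\psi\notin A'$'' side conditions for $\Eventually_{\leq c}$ and the ``$\Eventually_{\leq 1}\psi\notin A'$'' and ``$\PEventually_{\leq 1}\psi\notin A$'' side conditions for $\Always_{\leq c}$ and $\PAlways_{\leq c}$ --- genuinely enforce the ``smallest/greatest $j$'' semantics and rule out inconsistent co-occurrences. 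Once this obligation-tracking lemma is established (by induction on the obligation parameter together with the $\Succ_\phi$ clauses), everything else is a routine adaptation of the classical $\LTL$ Hintikka-sequence argument, with the binder inductive hypothesis plugged into the atomic layer.
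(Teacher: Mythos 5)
Your proposal is correct and takes essentially the same route as the paper's proof: a completeness half that builds the canonical atom sequence directly from the semantics (with obligations recording the minimal/maximal witness distances) and a correctness half by structural induction on the formulas in $cl(\phi)$, with the $\DBinder$ case handled through the fulfilling condition and an induction on binder nesting depth. The only difference is presentational --- the paper splits the two directions into separate lemmas stated for arbitrary positions and valuations $g$ (which is the generalization your sketch implicitly relies on when recursing into $(w,i)$), rather than a simultaneous induction on binder depth.
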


\noindent \textbf{Automata-theoretic approach for  $\MCHL_1$.}
By Theorem~\ref{theo:characterizationCHLOne}, given a   $\MCHL_1$ sentence $\varphi$ in $\MNF$, it is not a difficult
task to construct in singly exponential time a generalized B\"{u}chi $\TwoAWA$ $\Au_\varphi$ accepting $\Lang(\varphi)$.
Given an input trace $w$, $\Au_\varphi$ guesses a $\varphi$-sequence $\rho=A_0,A_1,\ldots$ over $(w,0)$ by simulating it in forward mode along the `main' path of the run-tree.
At the $i^{th}$-node of such a path, $\Au_\varphi$ keeps track in its state of the $\varphi$-atom $A_i$. Moreover, in order to check that $\rho$ is fulfilling, for each binder formula $\DBinder.\,\psi\in A_i$, $\Au_\varphi$ recursively checks the existence
of a fulfilling $\psi$-sequence $\rho'=A'_0,A'_1,\ldots$ over $(w,i)$ by guessing the $\psi$-atom $A'_i$, with $\{x,\psi\}\subseteq A'_i$, and by activating two secondaries copies: the first one moves in backward mode by guessing the finite sequence $A'_{i-1},\ldots,A'_0$, and the second one moves in forward mode by guessing the infinite sequence $A'_{i+1},A'_{i+2},\ldots$.
Details about the construction of $\Au_\varphi$ can be found in Appendix~\ref{APP:automataCHLOne}.
By~\cite{Vardi98,KupfermanPV01},  generalized B\"{u}chi $\TwoAWA$ can be converted on the fly and in singly exponential time into equivalent  B\"{u}chi nondeterministic finite-state automata (B\"{u}chi $\NWA$).  Recall that non-emptiness of  B\"{u}chi $\NWA$ is \NLOGSPACE-complete, and  the standard model checking algorithm consists in checking emptiness of the B\"{u}chi $\NWA$ given by the synchronous product of the given Kripke structure with the B\"{u}chi $\NWA$ associated with the negation of the given formula. Thus, we obtain algorithms for satisfiability and model-checking of monotonic $\CHL_1$ which run in non-deterministic single exponential space. Therefore, since \EXPSPACE\  $=$ \NEXPSPACE, and for the logic $\HL_1$, the considered problems are already \EXPSPACE-hard~\cite{SW07}, we obtain the following result.

\begin{corollary}Model checking and satisfiability of monotonic $\CHL_1$ are  \EXPSPACE-complete.
 \end{corollary}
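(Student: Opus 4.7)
My plan is to establish matching upper and lower bounds, with the upper bound essentially assembling the automaton construction just sketched. For the upper bound, given a monotonic $\CHL_1$ sentence $\varphi$, I would first rewrite it into $\MNF$ in linear time, then build the generalized B\"uchi $\TwoAWA$ $\Au_\varphi$ of singly exponential size described above, whose language equals $\Lang(\varphi)$ by Theorem~\ref{theo:characterizationCHLOne}. Applying the translation of~\cite{Vardi98,KupfermanPV01}, I would convert $\Au_\varphi$ on the fly into an equivalent B\"uchi $\NWA$ whose states are representable in singly exponential space in $|\varphi|$.

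For satisfiability, I would check non-emptiness of this $\NWA$; since emptiness of a B\"uchi $\NWA$ lies in \NLOGSPACE\ in the size of its state space, the overall procedure runs in non-deterministic single exponential space in $|\varphi|$. For model checking $\Ku \models \varphi$, I would apply the same construction to $\widetilde{\varphi}$ (the $\MNF$ of $\neg\varphi$), take the synchronous product of the resulting B\"uchi $\NWA$ with $\Ku$, and check emptiness of the product — again in non-deterministic single exponential space. Since \NEXPSPACE\ $=$ \EXPSPACE\ by Savitch's theorem, both problems land in \EXPSPACE.

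For the matching lower bound, I would simply observe that $\HL_1$ is a syntactic fragment of monotonic $\CHL_1$: it uses only the unconstrained modalities $\Eventually$ and $\PEventually$, which involve no equality constraint and so trivially satisfy the monotonicity restriction. Since satisfiability and model checking of $\HL_1$ are already \EXPSPACE-hard~\cite{SW07,BozzelliL10}, the hardness transfers verbatim to monotonic $\CHL_1$.

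The main technical point to be careful about is the space accounting in the on-the-fly alternation removal: each state of the B\"uchi $\NWA$ encodes a set of configurations of $\Au_\varphi$, and since $\Au_\varphi$ itself has singly exponential size, such sets are representable in singly exponential space in $|\varphi|$, with successor transitions computable locally within the same bound. Given the cited alternation-removal machinery, this is routine bookkeeping rather than a genuine obstacle, but it is the one place where the exponent counts need to be verified rigorously to avoid an inadvertent extra exponential blowup.
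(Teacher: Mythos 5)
Your proposal is correct and follows essentially the same route as the paper: MNF normalization, the singly exponential generalized B\"uchi $\TwoAWA$ from Theorem~\ref{theo:characterizationCHLOne}, on-the-fly alternation removal via~\cite{Vardi98,KupfermanPV01}, nondeterministic single-exponential-space emptiness checks (with the product construction on the negated formula for model checking), \NEXPSPACE\ $=$ \EXPSPACE, and the \EXPSPACE-hardness of $\HL_1$~\cite{SW07} for the matching lower bound. No gaps to report.
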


\vspace{-0.3cm}
\section{Conclusion}

We have investigated decidability and complexity issues for satisfiability and model checking of
a quantitative extension of $\HS$, namely $\DHS$, under the trace-based semantics. The novel logic
provides constrained versions of the $\HS$ temporal modalities which can express bounds on the difference
between the durations of the current interval and the interval selected by the modality.
A different and natural choice would have been to consider constraints on the sum of the durations. In this setting,
one can show that the logic $\HS$ extended with sum constraints is decidable under the trace-based semantics by means
of an exponential-time translation of formulas into equivalent $\HL_2$ sentences. 

\bibliographystyle{plainurl}
\bibliography{bib2}
\newpage
\newenvironment{changemargin}{%
  \begin{list}{}{%
    \setlength{\textheight}{25cm}%
   \setlength{\topmargin}{0.5cm}
     \setlength{\voffset}{-2cm}
      \setlength{\hoffset}{-4cm}
     \setlength{\textwidth}{15cm}%
  }%
  \item[]}{\end{list}}





\begin{changemargin}

\begin{minipage}{16cm}

\newcounter{aux}
\newcounter{auxSec}

\linespread{0.98}

\makeatletter
\edef\thetheorem{\expandafter\noexpand\thesection\@thmcountersep\@thmcounter{theorem}}
\makeatother

\begin{LARGE}
  \noindent\textbf{Appendix}
\end{LARGE}

\section{Proof of Lemma~\ref{sec:WellFormedMSequences}}\label{APP:WellFormedMSequences}

\setcounter{aux}{\value{lemma}}
\setcounter{auxSec}{\value{section}}
\setcounter{section}{\value{sec-WellFormedMSequences}}
\setcounter{lemma}{\value{lemma-WellFormedMSequences}}

\begin{lemma}There is a recurrent computation of $M$ \emph{iff} there is a well-formed $M$ sequence.
\end{lemma}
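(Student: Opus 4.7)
The plan is to prove both directions of the equivalence by translating between computations and flat-configuration sequences via a canonical counting of operations.

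For the $(\Rightarrow)$ direction, given a recurrent computation $((\delta_i,\nu_i))_{i\geq 0}$ of $M$, I would define $n_i$ as follows: if $\delta_i\in \Inc(c)$ (resp., $\delta_i\in \Dec(c)$), set $n_i$ to the number of $c$-increments (resp., $c$-decrements) appearing among $\delta_0,\ldots,\delta_i$; if $\delta_i\in \Zero(c)$, set $n_i=1$. Consecution, the initial transition, and the infinite recurrence of $\delta_\rec$ are inherited directly from the computation, and increment and decrement progressions are immediate from the definition of $n_i$. Increment domination follows from the invariant $\nu_i(c)\geq 0$: whenever $\delta_j\in \Dec(c)$, the prefix $\delta_0,\ldots,\delta_{j-1}$ must contain at least $n_j$ $c$-increments, so the $n_j$-th such increment occurs at some $h<j$ with $n_h = n_j$. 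Zero-test checking follows because at a $c$-zero-test at position $j$, $\nu_j(c)=0$ forces the $c$-increment and $c$-decrement counts on $\delta_0,\ldots,\delta_{j-1}$ to agree, so the most recent prior $c$-operation (if any) is a decrement, whose value equals the total number of earlier $c$-increments and thus upper-bounds every earlier $c$-increment value.

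For the $(\Leftarrow)$ direction, I would start from a well-formed $M$-sequence $((\delta_i,n_i))_{i\geq 0}$ and define counter valuations inductively by $\nu_0\equiv 0$, applying the standard update rules dictated by $\delta_i$ (leaving the untouched counter unchanged). The work is to verify that these applications are legal, i.e., every decrement is applied at a positive value and every zero-test at zero. For a $c$-decrement at position $j$ with value $n_j$, decrement progression gives exactly $n_j-1$ earlier $c$-decrements, while increment domination supplies some $h<j$ with $\delta_h\in\Inc(c)$ and $n_h\geq n_j$; by increment progression this guarantees at least $n_j$ earlier $c$-increments, yielding $\nu_j(c)\geq 1$. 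For a $c$-zero-test at position $j$ with some prior $c$-operation, zero-test checking identifies the most recent such operation as a $c$-decrement at $h_{\max}$ with value $n_{h_{\max}}$ and caps every earlier $c$-increment value by $n_{h_{\max}}$; increment domination applied to this decrement supplies a matching lower bound, so by increment and decrement progressions the $c$-increment and $c$-decrement counts on the prefix up to $j$ both equal $n_{h_{\max}}$, giving $\nu_j(c)=0$. When no prior $c$-operation exists, $\nu_j(c)=0$ trivially.

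The main obstacle is the zero-test case of the $(\Leftarrow)$ direction, where several well-formedness clauses must be combined simultaneously: increment progression to identify increment values with increment counts, zero-test checking to upper-bound that count by $n_{h_{\max}}$, and increment domination at $h_{\max}$ to provide the matching lower bound so that the $c$-increment and $c$-decrement counts on the prefix exactly coincide. All remaining verifications reduce to straightforward counting arguments on the prefixes of the sequence.
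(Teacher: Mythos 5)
Your proof is correct and follows essentially the same route as the paper: the forward direction uses the canonical counting of $c$-increments/$c$-decrements to define the values (the paper's ``unique sequence satisfying increment and decrement progression''), and the backward direction runs the update rules from the zero valuation and verifies nonnegativity at decrements and zero at zero-tests. You merely spell out the prefix-counting arguments that the paper dismisses as ``easily follows,'' which is fine.
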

\begin{proof}
For the right implication, assume that $M$ has a recurrent computation $\pi=(\delta_0,\nu_0),(\delta_1,$ $\nu_1),\ldots$.
Note that there is a unique infinite sequence $\rho$ of flat configurations of the form $\rho=(\delta_0,n_0),(\delta_1,n_1),\ldots$ satisfying the increment progression and decrement progression requirements.
Being $\pi$ a computation of $M$, the infinite sequence of transitions $\xi=\delta_0,\delta_1,\ldots$ satisfies the consecution requirement. Moreover, because $\nu_0(c)= 0$ for each $c\in \{1,2\}$, it holds that for each $i\geq 0$ and for each counter $c$, (i) the number of $c$-incrementations along the prefix $\xi[0,i]$ of $\xi$ is greater or equal to the number of $c$-decrementations along $\xi[0,i]$, and (ii) if $\delta_i$ is a $\zero$ transition for counter $c$, then the number of $c$-incrementations and the number of $c$-decrementations along $\xi[0,i]$ coincide. It easily follows that $\rho$ satisfies the increment domination requirement and the zero-test requirement too. Hence, $\rho$ is well-formed and the result follows.

For the left implication, let $\rho=(\delta_0,n_0),(\delta_1,n_1),\ldots$ be a well-formed $M$-sequence.
A \emph{generalized configuration} is a pair $(\delta,\nu)$ consisting of a transition $\delta$, and a mapping
$\nu: \{1,2\}\mapsto \INT$ assigning to each counter an arbitrary (possibly negative) integer. Let
$\pi$ be the infinite sequence of \emph{generalized configurations} of the form $\pi=(\delta_0,\nu_0),(\delta_1,\nu_1),\ldots$
such that $\nu_0(c)= 0$ for each $c\in \{1,2\}$, and for each $i\geq 0$,  $\nu_{i+1}$ is defined in terms of $\nu_i$ in accordance to the instruction $\instr=(I,c)$ associated with the transition $\delta_i$.\details{: (i) the value of counter $3-c$ does not change, (ii) the value of counter $c$ is incremented (resp., decremented) if $I=\inc$ (resp., $I=\dec$), and (iii) the value  of counter
$c$ does not change if $I=\zero$.} We show that $\pi$ is a recurrent computation of $M$.
Since $\nu_0(c)=0$ for each counter $c$, and $\rho$ is well-formed, it holds that at each position
$i\geq 0$ and for each counter $c$, (i) the number of $c$-incrementations along the prefix $\rho[0,i]$ of $\rho$ is greater or equal to the number of $c$-decrementations along $\rho[0,i]$, and (ii) if $\delta_i$ is a zero-test for counter $c$, then the number
 of $c$-incrementations and the number of $c$-decrementations along  $\rho[0,i]$ coincide. By construction, it follows that
for each $i\geq 0$, the mapping $\nu_i$ assigns to each counter a natural number, and  if $\delta_i$ is a zero-test for counter $c$,
then $\nu_i(c)=0$. Thus, since the infinite sequence of transitions $\delta_0,\delta_1,\ldots$ satisfies the consecution requirement, we conclude that $\pi$ is a recurrent computation.
\end{proof}

 \setcounter{lemma}{\value{aux}}
 \setcounter{section}{\value{auxSec}}

 \section{Reductions from the recurrence problem for $\DHS_A$ and $\DHS_O$}\label{APP:UndecisabilityDHSForAO}

For the fragments $\DHS_A$ and $\DHS_O$, we exploit  the set of propositions  
$\Prop\DefinedAs\Delta\cup\{ 1,2,\#,\$\}$.
 A finite trace is a non-empty finite word over $2^{\Prop}$. A \emph{left counter-code} is a finite trace $w$  of the
form $w= \{2\} \cdot  \emptyset^{n+1} \cdot \{1\}\cdot  \emptyset^{m+1}$ for some $c\in\{1,2\}$ and $m,n\geq 0$. The finite trace $w$ encodes the counter valuation  assigning to counter $2$ the value $n$ and to counter $1$ the value $m$.   A \emph{right counter-code} is a finite trace $w$ whose reverse is a left counter-code: $w$ encodes the counter-valuation encoded by its reverse.
An $M$-configuration $(\delta,\nu)$ is  encoded by the finite trace $w_L\cdot \{\delta,\#\} \cdot w_R$, where $w_L$  is the left counter-code of $\nu$ and $w_R$ is the right counter-code of $\nu$ (hence, $w_R$ is the reverse of $w_L$). A computation $\pi=C_1,C_2,\ldots$ of $M$ is then encoded by the trace $\{\$\}\cdot w_{c_1} \cdot \{\$\}\cdot w_{c_2} \cdot \{\$\}\ldots$, where $w_{c_h}$ is the code of the $M$-configuration $C_h$ for each $h\geq 1$.
 The following figure illustrates
the encoding of a configuration $C=(\delta,\nu)$.

{
\begin{center}
\begin{tikzpicture}[scale=0.97]

\coordinate [label=center:{\footnotesize  $\{2\}\,\cdot$}] (Label2Left) at (0.0,-0.3);
\coordinate [label=center:{\footnotesize  $\emptyset^{\,\nu(2)+1}$}] (Value2Left) at (1.0,-0.27);
\coordinate [label=center:{\footnotesize  $\cdot\,\{1\}\,\cdot$}] (Label1Left) at (2.0,-0.3);
\coordinate [label=center:{\footnotesize  $\emptyset^{\,\nu(1)+1}$}] (Value1Left) at (3.0,-0.27);

\coordinate [label=center:{\footnotesize  $\,\,\cdot\,\,\,\,\{\delta,\#\}\,\,\,\,\cdot\,\,$}] (LabelTrans) at (4.5,-0.3);
\coordinate [label=center:{\footnotesize  $\emptyset^{\,\nu(1)+1}$}] (Value1Right) at (6.0,-0.27);
\coordinate [label=center:{\footnotesize  $\cdot\,\{1\}$}] (Label1Right) at (7.0,-0.3);
\coordinate [label=center:{\footnotesize  $\cdot\,\emptyset^{\,\nu(2)+1}$}] (Value2Right) at (8.0,-0.27);
\coordinate [label=center:{\footnotesize  $\,\cdot \, \{2\}$}] (Label2Right) at (9.0,-0.3);

\coordinate  (OneLeft) at (-0.3,-0.8);
\coordinate  (TwoLeft) at (3.8,-0.8);

\coordinate  (OneRight) at (5.2,-0.8);
\coordinate  (TwoRight) at (9.3,-0.8);
\path[<->, thin,black] (OneLeft) edge node[below] {\footnotesize Left counter-code} (TwoLeft);
\path[<->, thin,black] (OneRight) edge node[below] {\footnotesize Right counter-code} (TwoRight);

\end{tikzpicture}
\end{center}
}

We also exploit the notion of a \emph{pseudo configuration-code} which is a finite trace of the form $w_L\cdot \{\delta,\#\} \cdot w_R$ such that $\delta\in \Delta$ and $w_L$ (resp. $w_R$) is a left counter-code (resp., right counter-code). Note that here we do not require that $w_R$ is the reverse of $w_L$. A \emph{pseudo computation-code} is a trace of the form
$\{\$\}\cdot w_{1} \cdot \{\$\}\cdot w_{2} \cdot \{\$\}\ldots$ such that
\begin{compactitem}
  \item each $w_i$ with $i\geq 1$ is a pseudo-configuration code;
  \item $w_1$ encodes the initial $M$-configuration (both counters have value $0$);
  \item for each $w_i$, let $\delta_i$ be the transition associated with $w_i$. If the instruction of $\delta_i$ is      $(\dec,c)$
  (resp, $(\zero,c)$) for some counter $c$, then the value of counter $c$ in both the left and right part
  of $w_i$ is greater than zero (resp., is equal to zero);
 \item the infinite sequence of transitions $\delta_1,\delta_2,\ldots$, where $\delta_i$ is the transition associated with $w_i$ for each $i\geq 1$, satisfies
the consecution requirement defined in Section~\ref{sec:undecidabilityDHS}.
\end{compactitem}
\end{minipage}

\begin{minipage}{16cm}

\noindent For each $X\in \{A,O\}$, we construct a $\DHS_X$ formula $\varphi_{X,M}$ capturing the codes of recurrent computations of $M$:
$
\varphi_{X,M} \DefinedAs \varphi_{\con}\wedge \varphi_{\LR}^{X} \wedge \displaystyle{\bigwedge_{\delta\in \Delta}} \varphi_{\delta}^{X}.
$

The conjunct $\varphi_{\con}$ is an $\AB$ formula capturing the traces $w$  which are pseudo-computation codes.  The construction of the formula $\AB$ is an easy task and we omit the details here. The conjunct $\varphi_{\LR}^{X}$ requires that for each pseudo-configuration code $w_L\cdot \{\delta,\#\}\cdot w_R$, the right part $w_R$ corresponds to the reverse of the left part $w_L$
(hence, $w_L\cdot \{\delta,\#\}\cdot w_R$ is a configuration code). This is ensured iff (i) $w_L$ and $w_R$ have the same length, and (ii) the suffix of $w_L$ starting at the $\{1\}$-position has the same length as the prefix of $w_R$ leading to the
$\{1\}$-position. Thus, $\varphi_{\LR}^{A}$ and $\varphi_{\LR}^{O}$ are defined as follows.\vspace{-0.2cm}
\[
 \begin{array}{ll}
\varphi_{\LR}^{A}\DefinedAs & \hsUA\hsUA \bigl((\Left(2)\wedge \Right(\#)\wedge \neg \Inter(\#)) \rightarrow
\hsA_{\Delta=0}(\Right(2)\wedge \neg \Inter(2))\bigr)\,\,\,\wedge \\
& \hsUA\hsUA \bigl((\Left(1)\wedge \Right(\#)\wedge \neg \Inter(\#)) \rightarrow
\hsA_{\Delta=0}(\Right(1)\wedge \neg \Inter(1))\bigr)\vspace{-0.1cm}
\end{array}
\]
Note that the two occurrences of $\hsA_{\Delta=0}$ in the previous formula can be replaced by the modalities
 $\hsA_{\Delta \geq 0}$ and $\hsUA_{\Delta \geq 0}$, where only the non-strict constraint $\geq 0$ is exploited. For example, the subformula $\hsA_{\Delta=0}(\Right(2)\wedge \neg \Inter(2))$ can be equivalently replaced with
$\hsA_{\Delta \geq 0}(\Right(2)\wedge \neg \Inter(2))\wedge \hsUA_{\Delta \geq 0}(\Right(2)\vee \Inter(2))$, and similarly, for the subformula $\hsA_{\Delta=0}(\Right(1)\wedge \neg \Inter(1))$. The formula $\varphi_{\LR}^{O}$ is defined as follows.\vspace{-0.2cm}
 \[
 \begin{array}{l}
  \hsUA\hsUA \bigl((\Left(2)\wedge \Right(\#)\wedge \neg \Inter(\#)) \rightarrow
\hsO_{\Delta=0}(\LNext(\#)\wedge \RNext(2)\wedge \neg \Inter(2))\bigr)\,\,\,\wedge \\
  \hsUA\hsUA \bigl((\Left(1)\wedge \Right(\#)\wedge \neg \Inter(\#)) \rightarrow
\hsO_{\Delta=0}(\LNext(\#)\wedge \RNext(1)\wedge \neg \Inter(1))\bigr)\vspace{-0.1cm}
\end{array}
\]
The two occurrences of
$\hsO_{\Delta =0}$ can be replaced by the modalities $\hsO_{\Delta \geq 0}$ and $\hsUO_{\Delta \geq 0}$. For example,
the subformula $\hsO_{\Delta=0}(\LNext(\#)\wedge \RNext(2)\wedge \neg \Inter(2))$ of $\varphi_{\LR}^{O}$ can be equivalently replaced by $\hsO_{\Delta\geq 0}(\LNext(\#)\wedge \RNext(2)\wedge \neg \Inter(2))\wedge \hsUO_{\Delta\geq 0}(\LNext(\#) \rightarrow (\Right(2)\vee \RNext(2) \vee \Inter(2)))$.

\noindent The conjunct $\varphi_{\delta}^{X}$ in the definition of $\varphi_{X,M}$ ensures that  whenever the transition $\delta$ is associated to a configuration code $\rho_C= w_L \cdot \{\delta,\#\}\cdot w_R$ along the given trace, the values of the counters in the configuration code $\rho_{C'}=w'_L \cdot \{\delta',\#\}\cdot w'_R$  following $\rho_C$ are updated consistently with the instruction of $\delta$. In the construction of  $\varphi_{\delta}^{X}$, we crucially exploit the fact that the encoding of the $2$-counter value in the left part $w'_L$ of $\rho_{C'}$ is adjacent to the encoding of the $2$-counter value in the
right part $w_R$ of $\rho_C$, as illustrated in the following figure.
We focus on the case where $\delta$ is  associated to   a decrementation, i.e. $\delta\in \Dec(c)$ for some counter $c\in\{1,2\}$. The cases for increment or zero-test instructions are similar.

{
\begin{center}
\vspace{-0.2cm}
\begin{tikzpicture}[scale=0.97]

\coordinate [label=center:{\footnotesize  $\{\delta,\#\}\,\cdot$}] (TransLeft) at (0.0,-0.3);
\coordinate [label=center:{\footnotesize  $\emptyset^{\,m+1}$}] (Value1Left) at (1.0,-0.27);
\coordinate [label=center:{\footnotesize  $\cdot\,\,\{1\}\,\,\cdot$}] (Label1Left) at (2.0,-0.3);
\coordinate [label=center:{\footnotesize  $\emptyset^{\,n+1}$}] (Value2Left) at (3.0,-0.27);
\coordinate [label=center:{\footnotesize  $\cdot\,\, \{2\}$}] (Label2Left) at (3.8,-0.3);
\path[<->, thin,black] (1.9,0.3) edge node[above] {\footnotesize $\rho_R$} (4.0,0.3);

 \coordinate [label=center:{\footnotesize  $\,\,\cdot\,\,\,\,\{ \$\}\,\,\,\,\cdot\,\,$}] (LabelMiddle) at (4.8,-0.3);
 \coordinate [label=center:{\footnotesize  $ \{2\}\,\,\cdot$}] (Label2Right) at (5.8,-0.3);
 \coordinate [label=center:{\footnotesize  $\emptyset^{\,n'+1}$}] (Value2Right) at (6.8,-0.27);
\coordinate [label=center:{\footnotesize  $\cdot\,\,\{1\}\,\,\cdot$}] (Label1Right) at (7.8,-0.3);
\coordinate [label=center:{\footnotesize  $\emptyset^{\,m'+1}$}] (Value1Right) at (8.8,-0.27);
\coordinate [label=center:{\footnotesize  $\,\cdot\,\,\{\delta',\#\}$}] (TransRight) at (9.8,-0.3);
\path[<->, thin,black] (5.6,0.3) edge node[above] {\footnotesize $\rho'_L$} (8.0,0.3);

\coordinate  (OneLeft) at (-0.3,-0.8);
\coordinate  (TwoLeft) at (4.0,-0.8);

\coordinate  (OneRight) at (5.6,-0.8);
\coordinate  (TwoRight) at (10.3,-0.8);
\path[<->, thin,black] (OneLeft) edge node[below] {\footnotesize $w_R$: right part of $\rho_C$} (TwoLeft);
\path[<->, thin,black] (OneRight) edge node[below] {\footnotesize $w'_L$: left part of $\rho_{C'}$} (TwoRight);

\end{tikzpicture}
\vspace{-0.2cm}
\end{center}
} 

Let $\rho_R$ be the suffix of $w_R$ starting at the $\{1\}$-position, $\rho'_L$ be the prefix of $w'_L$ leading to the $\{1\}$-position, $m$ (resp., $m'$) the value of counter $1$ in $w_R$ (resp., $w'_L$), and  $n$ (resp., $n'$) the value of counter $2$ in $w_R$ (resp., $w'_L$) (see the previous figure). Being $(\dec,c)$ the instruction of $\delta$,  the formula $\varphi_\con$ ensures that $m>0$ if $c=1$, and $n>0$ otherwise.
Additionally, we need to ensure that $m'=m$ and $n'=n-1$ if $c=2$, and $m'=m-1$ and $n'=n$ otherwise. The previous requirements
hold iff (i) $|w'_L|= |w_R|-1$  and (ii) $|\rho'_L|=|\rho_R|-1$ if $c=2$, and $|\rho'_L|=|\rho_R|$ otherwise. Assume that $c=2$ (the case where $c=1$ being similar). Then Requirements~(i) and~(ii) can be expressed in $\DHS_A$ as follows:\vspace{-0.2cm}
\[
 \begin{array}{l}
  \hsUA\hsUA \bigl((\Left(\delta)\wedge \Right(2)\wedge \neg \Inter(2)) \rightarrow
\bigl[  \Inter(\psi_A) \wedge\hsA_{\Delta=0}(\RNext(\#)\wedge   \neg \Inter(\#))\bigr]  \bigr)  \\
\psi_A \DefinedAs \hsA(\Left(1)\wedge \Right(2) \wedge\neg\Inter(2)\wedge \hsA_{\Delta=0}(\RNext(1)\wedge \neg \Inter(1)))\vspace{-0.1cm}
\end{array}
\]
As for the case of the $\DHS_A$ formula $\varphi_{\LR}^{A}$, the two occurrences of the
modality $\hsA_{\Delta=0}$ in the previous formula can be replaced by the modalities $\hsA_{\Delta \geq 0}$ and $\hsUA_{\Delta \geq 0}$. The  $\DHS_O$ formula expressing Requirements~(i) and~(ii) is instead defined as follows.\vspace{-0.2cm}
 \[
 \begin{array}{l}
  \hsUA\hsUA \bigl((\Left(\delta)\wedge \Right(\$)\wedge \neg \Inter(\$)) \rightarrow
\bigl[  \Inter(\psi_O) \wedge\hsO_{\Delta=0}(\LNext(\$)\wedge \Right(\#)\wedge   \neg \Inter(\#))\bigr]  \bigr)  \\
\psi_O \DefinedAs \hsA(\Left(1)\wedge \Right(\$) \wedge\neg\Inter(\$)\wedge \hsO_{\Delta=0}(\LNext(\$)\wedge \Right(1) \wedge \neg \Inter(1)))\vspace{-0.1cm}
\end{array}
\]
Again, as for the case of the $\DHS_O$ formula $\varphi_{\LR}^{O}$, the two occurrences of the
modality $\hsO_{\Delta=0}$ in the previous formula can be replaced by $\hsO_{\Delta \geq 0}$ and $\hsUO_{\Delta \geq 0}$. By construction, the    $\DHS_A$ formula $\varphi_{A,M}$ (resp., $\DHS_O$ formula $\varphi_{O,M}$) captures the traces encoding the recurrent computations of $M$. Hence,
 $\varphi_{A,M}$ (resp., $\varphi_{O,M}$) is satisfiable iff $M$ has a recurrent computation.  

\end{minipage}

\begin{minipage}{16cm}

 \section{Proof of Proposition~\ref{prop:ComparisonHLandSHL}}\label{APP:ComparisonHLandSHL}

\setcounter{aux}{\value{proposition}}
\setcounter{auxSec}{\value{section}}
\setcounter{section}{\value{sec-ComparisonHLandSHL}}
\setcounter{proposition}{\value{prop-ComparisonHLandSHL}}

\begin{proposition} Given a $\CHL_1$ (resp., $\HL_1$) sentence, one can construct in linear time an equivalent $\SCHL_1$ (resp., $\SHL_1$) sentence. Moreover, given a $\SCHL_1$ (resp., $\SHL_1$) sentence, one can construct in linear time an equivalent $\CHL_2$ (resp., $\HL_2$) sentence.
\end{proposition}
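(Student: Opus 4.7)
The plan, in both directions, is to give a compositional translation that is homomorphic on Boolean connectives and on the constrained modalities $\Eventually_{\sim c}, \PEventually_{\sim c}$ (so the unconstrained subcases $\HL_1\to\SHL_1$ and $\SHL_1\to\HL_2$ are immediate), and that handles the binder operator by a small gadget of constant size built from a ``jump to $x$'' macro $\mathsf{atX}(\chi) \DefinedAs (x \wedge \chi) \vee \Eventually(x \wedge \chi) \vee \PEventually(x \wedge \chi)$, which, at a pointed trace $(w,i,g)$, holds iff $\chi$ holds at $(w,g(x),g)$. I shall also use the analogous macro $\mathsf{atY}(\chi)$, obtained by replacing $x$ with a second variable $y$.

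For $\CHL_1\to\SCHL_1$, I would define the translation $f$ homomorphically on Booleans and temporal modalities, and set $f(\DBinder.\psi) \DefinedAs \Swap_x.\,\mathsf{atX}(f(\psi))$. The point is that $\Swap_x$ maps $(w,i,g)$ to $(w,g(x),g[x \mapsto i])$ and then $\mathsf{atX}$ jumps back to the position now stored in $x$, which is $i$, giving the pointed trace $(w,i,g[x \mapsto i])$, exactly the semantics of $\DBinder.\psi$. Correctness, $(w,i,g)\models \psi \iff (w,i,g)\models f(\psi)$, will then follow by a routine structural induction. Because $|\varphi|$ is measured as the number of distinct subformulas, each $\DBinder$ contributes only a constant number of new subformulas (the $\Swap_x$-prefix, three disjuncts and two modal nodes inside $\mathsf{atX}$, with $f(\psi)$ shared), so the translation is linear.

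For $\SCHL_1\to\CHL_2$, I would fix two distinct $\CHL_2$-variables $x$ and $y$ and define two mutually recursive translations $\mathsf{tr}_0,\mathsf{tr}_1$, each homomorphic on Booleans and on $\Eventually_{\sim c},\PEventually_{\sim c}$, with $\mathsf{tr}_0(x) \DefinedAs x$, $\mathsf{tr}_1(x) \DefinedAs y$, $\mathsf{tr}_0(\Swap_x.\psi) \DefinedAs \Binder{y}.\,\mathsf{atX}(\mathsf{tr}_1(\psi))$ and $\mathsf{tr}_1(\Swap_x.\psi) \DefinedAs \Binder{x}.\,\mathsf{atY}(\mathsf{tr}_0(\psi))$. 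The bookkeeping is that $\mathsf{tr}_b$ is applied whenever the $\CHL_2$-variable currently representing the $\SCHL_1$-variable $x$ is $x$ (for $b=0$) or $y$ (for $b=1$): at each $\Swap_x$ we stash the current position in the other variable via a $\DBinder$, jump to the position held by the active variable via $\mathsf{atX}$ or $\mathsf{atY}$, and recurse with the roles of $x$ and $y$ exchanged. The correctness invariant, proved by structural induction, is that for $v \in \{x,y\}$ the active variable, $(w,i,g) \models \psi$ in $\SCHL_1$ iff $(w,i,g') \models \mathsf{tr}_b(\psi)$ in $\CHL_2$ whenever $g'(v)=g(x)$. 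Linearity of the output size will hold by the same DAG-size argument as in the first direction.

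The main conceptual obstacle is the asymmetry between $\DBinder$ (which only updates the valuation) and $\Swap_x$ (which simultaneously updates the position and the valuation): once one observes that ``$\DBinder$ $=$ swap and then jump back via $x$'', the first direction is immediate, while the second is possible with two target variables precisely because the alternating translations $\mathsf{tr}_0,\mathsf{tr}_1$ let the single $\SCHL_1$-variable $x$ take turns being represented by $x$ and by $y$, one side of a $\Swap_x$ at a time. The unconstrained subcases $\HL_1\to\SHL_1$ and $\SHL_1\to\HL_2$ come for free, since $\mathsf{atX}$ and $\mathsf{atY}$ only use the unconstrained $\Eventually,\PEventually$.
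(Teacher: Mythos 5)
Your proposal is correct and follows essentially the same route as the paper: the paper also translates $\DBinder.\psi$ as $\Swap_x.\,\Eventually\PEventually(x\wedge f(\psi))$ (your $\mathsf{atX}$ gadget is just an explicit spelling of this ``jump to $x$'' trick), and for the converse it uses the same two-variable, role-alternating recursion $F(\Swap_x.\,\psi,x_h)\DefinedAs \Binder{x_{3-h}}.\,\Eventually\PEventually(x_h\wedge F(\psi,x_{3-h}))$ with the same invariant. The only detail to add is the final step of prefixing the result with a top-level binder (as in the paper's $\Binder{x_h}.\,F(\varphi,x_h)$) so that the output is actually a $\CHL_2$ sentence, since the outermost occurrence of the active variable in your translation is otherwise free.
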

\begin{proof} The translation function $f:\CHL_1 \mapsto \SCHL_1$ from $\CHL_1$ formulas to $\SCHL_1$ formulas is homomorphic w.r.t.~proposition, variables, Boolean connectives and temporal modalities. Moreover, for a $\CHL_1$ formula $\varphi$ using  variable $x$, $f(\DBinder.\,\varphi)$
is defined as $\Swap_x.\,\Eventually\PEventually (x\wedge f(\varphi))$. 

For the second part of Proposition~\ref{prop:ComparisonHLandSHL}, let $\varphi$ be a  $\SCHL_1$ formula using variable $x$, and let $x_1$ and $x_2$ be two distinct variables. For each $h=1,2$, we  define a $\CHL_2$ formula $F(\varphi,x_h)$ using only variables $x_1$ and $x_2$ and such that only $x_h$ can occur  free in $F(\varphi,x_h)$. The mapping $F$ is homomorphic w.r.t.~propositions, Boolean connectives and temporal modalities, and is defined as follows for variable $x$ and the swap modality: $F(x,x_h)\DefinedAs x_h$ and $F(\Swap_x.\, \varphi,x_h)\DefinedAs \Binder{x_{3-h}}.\,\Eventually\PEventually (x_h\wedge F(\varphi,x_{3-h}))$. By a straightforward induction on the structure of the $\SCHL_1$ formula $\varphi$,  
 given a trace $w$, $h=1,2$, two positions $i,j\geq 0$, a valuation $g$ s.t.~$g(x)=j$, and a valuation $g'$ s.t.~$g'(x_h)=j$, it holds that $(w,i,g)\models \varphi$ iff $(w,i,g')\models F(\varphi,x_h)$.  Hence, if $\varphi$ is a $\SCHL_1$ sentence, then $\Binder{x_h}.\,F(\varphi,x_h)$ is a $\CHL_2$ sentence equivalent to $\varphi$. 
\end{proof}

 \setcounter{proposition}{\value{aux}}
 \setcounter{section}{\value{auxSec}}

\section{Proof of Proposition~\ref{prop:FromSimpleDHStoSCHL}}\label{APP:FromSimpleDHStoSCHL}

\setcounter{aux}{\value{proposition}}
\setcounter{auxSec}{\value{section}}
\setcounter{section}{\value{sec-FromSimpleDHStoSCHL}}
\setcounter{proposition}{\value{prop-FromSimpleDHStoSCHL}}

 \begin{proposition}
 \begin{enumerate}
   \item Given a $\DHSS$ formula $\varphi$, one can construct in singly exponential time
 an equivalent $\SCHL_1$ sentence $\psi$.  Moreover, if $\varphi$ is a $\DHSF{\BEBbarEbar}$ formula, then $\psi$ can be constructed in linear time, and $\psi\in \SHL_1$ if $\varphi\in \HS$.
   \item Given a $\SHL_1$ sentence $\varphi$, one can construct in linear time
 an equivalent $\HS$ formula $\psi$.
 \end{enumerate}
\end{proposition}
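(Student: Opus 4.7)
The plan is to handle the two parts separately. For Part~(1), I will extend the linear translation of Proposition~3 by (a) using the $\Swap_x$ operator of $\SCHL_1$ to handle the ``left-moving'' suffix modalities $\hsE_{\sim c}$ and $\hsEt_{\sim c}$, (b) rewriting the unconstrained $\HS$ modalities $\hsA,\hsL,\hsO$ and their inverses in terms of $\hsB,\hsE,\hsBt,\hsEt$ by the standard $\HS$-interdefinabilities recalled in Section~2, and (c) for $\hsD_{\sim c}$ and $\hsDt_{\sim c}$, enumerating over the decomposition $\hsD\equiv\hsB\hsE$ with constants that sum to the required total shift. For Part~(2), I will define two mutually recursive $\HS$-translations of each $\SHL_1$ subformula, one for each relative ordering of the current position and the value of $x$.

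For Part~(1), keep the convention of Proposition~3 that $x$ tags the left endpoint $i$ of the current interval $[i,j]$ while the current position is $j$, and keep the translations of $\hsB_{\sim c},\hsBt_{\sim c}$ from there. For the suffix modalities, the length arithmetic gives $i'-i$ satisfying $(\sim c)^{-1}$, so one swaps to put the current position on $i$, navigates to $i'$, enforces $i'\le j$ via a reference to the saved right endpoint, and swaps back:
\[
f(\hsE_{\sim c}\varphi) \DefinedAs \Swap_x.\,\Eventually_{(\sim c)^{-1}}\bigl((x \vee \Eventually\, x) \wedge \Swap_x.\, f(\varphi)\bigr),\qquad
f(\hsEt_{\sim c}\varphi) \DefinedAs \Swap_x.\,\PEventually_{\sim c}\bigl(\Swap_x.\, f(\varphi)\bigr),
\]
the past-strict semantics of $\PEventually$ already ensuring $0\le i'<i$. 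For $\hsD_{\sim c}\varphi$, note that shrinking the right endpoint by $k_1\ge1$ and then the left endpoint by $k_2\ge1$ yields length shift $-(k_1+k_2)$, so
\[
\hsD_{\sim c}\varphi \;\equiv\; \textstyle\bigvee_{k_1} \hsB_{=\,-k_1}\,\hsE_{\sim_{k_1} d_{k_1}}\varphi
\]
where $k_1$ ranges up to the least value beyond which the residual constraint on $k_2$ becomes vacuous (hence at most $O(|c|)$ genuine disjuncts plus one aggregated tail), and symmetrically for $\hsDt_{\sim c}$. Counting size by distinct subformulas (as in the paper's convention) and sharing $f(\varphi)$ across disjuncts gives the singly exponential bound. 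If $\varphi\in\DHSSF{\BEBbarEbar}$ no enumeration is needed and the translation is linear; if $\varphi\in\HS$ no constrained modalities arise in any clause and the image lies in $\SHL_1$.

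For Part~(2), define $F_l(\varphi),F_r(\varphi)\in\HS$ by induction on $\varphi$, with the invariant that for any trace $w$, positions $i,\ell$ and valuation $g$ with $g(x)=\ell$,
\[
(w,i,g)\models \varphi \;\Longleftrightarrow\; \begin{cases} [i,\ell]\models_w F_l(\varphi) & \text{if } i\le \ell,\\ [\ell,i]\models_w F_r(\varphi) & \text{if } i\ge \ell. \end{cases}
\]
Base cases: $F_l(p)\DefinedAs\Left(p)$, $F_r(p)\DefinedAs\Right(p)$, $F_l(x)\DefinedAs F_r(x)\DefinedAs \len{1}$, $F_l(\top)\DefinedAs F_r(\top)\DefinedAs \top$, with Boolean cases homomorphic. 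The temporal cases case-split on whether the moved position stays on the same side of $\ell$ or crosses it:
\[
F_l(\Eventually\psi) \DefinedAs \hsE\, F_l(\psi) \,\vee\, \hsA\bigl(\neg\len{1}\wedge F_r(\psi)\bigr),\qquad F_r(\Eventually\psi) \DefinedAs \hsBt\, F_r(\psi),
\]
\[
F_r(\PEventually\psi) \DefinedAs \hsB\, F_r(\psi) \,\vee\, \hsAt\bigl(\neg\len{1}\wedge F_l(\psi)\bigr),\qquad F_l(\PEventually\psi) \DefinedAs \hsEt\, F_l(\psi).
\]
For the swap, which preserves the underlying interval while exchanging the roles of its endpoints, simply toggle the mode: $F_l(\Swap_x.\psi)\DefinedAs F_r(\psi)$ and $F_r(\Swap_x.\psi)\DefinedAs F_l(\psi)$. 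The sentence $\varphi$ is evaluated at position~$0$ with $g_0(x)=0$, so the desired $\HS$ formula is $\psi\DefinedAs F_l(\varphi)$, interpreted at the initial singular interval $[0,0]$. Each recursion step introduces only constantly many $\HS$ connectives above the $F_l/F_r$ translations of immediate subformulas, so with shared subformula representation the total size is $O(|\varphi|)$.

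The main obstacle is Part~(1)'s treatment of $\hsD_{\sim c}$: while $\hsD\equiv\hsB\hsE$ in unconstrained $\HS$, a constraint on the \emph{sum} of the two shifts cannot be decomposed into independent constraints on the individual $\hsB$ and $\hsE$ steps because $\SCHL_1$-modalities constrain only a single distance. Enumeration over $O(|c|)$ values of the first shift produces an exponential blow-up in the binary encoding of $c$; it is essential that size is measured by number of distinct subformulas and that the inner translation $f(\varphi)$ is shared across the $O(|c|)$ disjuncts, so that nested $\hsD/\hsDt$ occurrences yield an additive rather than multiplicative blow-up and the overall translation stays singly exponential. A minor subtlety in Part~(2) is the boundary case in the $\Eventually/\PEventually$ clauses where the moved position coincides with $\ell$: the $\neg\len{1}$ guards inside the $\hsA$ and $\hsAt$ disjuncts precisely exclude the degenerate duplication between the two disjuncts while preserving the strict-future/strict-past semantics of $\Eventually$ and $\PEventually$.
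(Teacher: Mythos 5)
Your proposal is correct and follows essentially the same route as the paper: Part~(1) reduces $\DHSS$ to $\DHSF{\BEBbarEbar}$ by eliminating the unconstrained $\hsA,\hsL,\hsO$ (and inverse) modalities and expanding constrained $\hsD/\hsDt$ via an enumeration of the two shifts (exponential only in the binary-encoded constants), then translates structurally into $\SCHL_1$ using $\Swap_x$, and Part~(2) is a mode-based mutual recursion into $\HS$. Your bookkeeping differs only cosmetically from the paper's -- a single left-endpoint invariant with a swap-back in the $\hsE/\hsEt$ clauses, and two modes $F_l/F_r$ instead of the paper's three modes tracking whether $x$ is below, above, or equal to the current position -- and both variants check out, including the boundary cases where the two endpoints coincide.
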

\setcounter{proposition}{\value{aux}}
 \setcounter{section}{\value{auxSec}}

\noindent  \textbf{Proof of Statement 1 in Proposition~\ref{prop:FromSimpleDHStoSCHL}.} First note that $\DHSS$ corresponds to $\DHSF{\BDEBbarDbarEbar}$ since the unconstrained modalities for the Allen's relations
$\RelA$, $\RelL$, and $\RelO$ and their inverses can be expressed in linear time into $\BEBbarEbar$. Moreover, the constrained versions
of the modalities $\hsD$ and $\hsDt$ can be easily expressed in $\DHSF{\BEBbarEbar}$ though with a singly exponential blow-up.
For example, for $n>0$, $\hsDt_{\geq  n}\varphi$ is equivalent to $\displaystyle{\bigvee_{n_1\geq 0,n_2\geq 0: n_1+n_2=n}} \hsBt_{\geq n_1}\hsEt_{\geq n_2}\varphi$. Thus, it suffices to show that a $\DHSF{\BEBbarEbar}$ formula can be converted in linear time into an equivalent $\SCHL_1$  sentence.
Let $x$ be a position variable. We use the expression $x< cur$ to indicate that the position referenced by variable $x$ is smaller than the current position.  The meaning of the expression $x>cur$ (resp., $x=cur$) is similar.
Given a  $\DHSF{\BEBbarEbar}$ formula $\varphi$ and $\tau\in \{x<cur,x>cur,x=cur\}$, we inductively define an $\SCHL_1$ formula $f(\varphi,\tau)$ using variable $x$. Intuitively, the position referenced by $x$ and the current position represent the endpoints of the interval on which $\varphi$ is currently evaluated.
We can assume that $\varphi$ contains only constrained temporal modalities. Indeed,  the modalities in  $\BEBbarEbar$  can be trivially converted into equivalent constrained versions. 
The mapping $f$ is homomorphic w.r.t.~the Boolean connectives and is inductively defined as follows:
  \begin{compactitem}
  \item $f(p,x<cur)\DefinedAs  p\wedge \neg \PEventually (\neg p \wedge (x\vee \PEventually x))$.
    \item $f(\hsB_{\Delta \sim c}\varphi,x<cur)\DefinedAs   \PEventually_{(\sim c)^{-1}}[(f(\varphi,x=cur)\wedge x)\vee (f(\varphi,x<cur)\wedge \PEventually x)]$.
  \item $f(\hsBt_{\Delta \sim c}\varphi,x<cur)\DefinedAs   \Eventually_{\sim c}\, f(\varphi,x<cur)$.
   \item $f(\hsE_{\Delta \sim c}\varphi,x<cur)\DefinedAs  \Swap_x. \, \Eventually_{(\sim c)^{-1}}[(f(\varphi,x=cur)\wedge x)\vee (f(\varphi,x>cur)\wedge \Eventually x)]$.
    \item $f(\hsEt_{\Delta \sim c}\varphi,x<cur)\DefinedAs \Swap_x. \,   \PEventually_{\sim c}\, f(\varphi,x>cur)$.
  \item $f(p,x>cur)\DefinedAs  p\wedge \neg \Eventually (\neg p \wedge (x\vee \Eventually x))$.
    \item $f(\hsB_{\Delta \sim c}\varphi,x>cur)\DefinedAs \Swap_x. \,  \PEventually_{(\sim c)^{-1}}[(f(\varphi,x=cur)\wedge x)\vee (f(\varphi,x<cur)\wedge \PEventually x)]$.
  \item $f(\hsBt_{\Delta \sim c}\varphi,x>cur)\DefinedAs \Swap_x. \,  \Eventually_{\sim c}\, f(\varphi,x<cur)$.
   \item $f(\hsE_{\Delta \sim c}\varphi,x>cur)\DefinedAs   \Eventually_{(\sim c)^{-1}}[(f(\varphi,x=cur)\wedge x)\vee (f(\varphi,x>cur)\wedge \Eventually x)]$.
    \item $f(\hsEt_{\Delta \sim c}\varphi,x>cur )\DefinedAs  \PEventually_{\sim c}\, f(\varphi,x>cur)$.
     \item $f(p,x=cur)\DefinedAs  p$.
 \item $f(\hsX_{\Delta \sim c}\varphi,x=cur)\DefinedAs  \neg \top$ for each $X\in\{B,E\}$.
  \item $f(\hsBt_{\Delta \sim c}\varphi,x=cur)\DefinedAs   \Eventually_{\sim c}\, f(\varphi,x<cur)$.
\item $f(\hsEt_{\Delta \sim c}\varphi,x=cur)\DefinedAs   \PEventually_{\sim c}\, f(\varphi,x>cur)$.
 \end{compactitem}
 \end{minipage}

\begin{minipage}{16cm}
 By a straightforward induction on the structure of $\varphi$, we obtain that given a trace $w$, a position $j\geq 0$, and a valuation $g$ such that $g(x)=j$, the following holds:
 \begin{compactitem}
   \item $[j,j]\models_w \varphi$ iff $(w,j,g)\models f(\varphi,x=cur)$.
   \item For each $i>j$, $[j,i]\models_w \varphi$ iff $(w,i,g)\models f(\varphi,x<cur)$.
   \item For each $i<j$, $[i,j]\models_w \varphi$ iff $(w,i,g)\models f(\varphi,x>cur)$.
 \end{compactitem}

 It follows that the $\SCHL_1$ sentence $\Swap_x.\,f(\varphi,x=cur )$ is equivalent to $\varphi$. Note that the previous sentence is in $\SHL_1$ if $\varphi\in\HS$. \qed \vspace{0.1cm}

 \noindent \textbf{Proof of Statement 2 in Proposition~\ref{prop:FromSimpleDHStoSCHL}.}
Given a  $\SHL_1$ formula $\varphi$ using variable $x$  and $\tau\in \{x<cur,x>cur,x=cur\}$, we 
define an $\HS$ formula $f(\varphi,\tau)$. Intuitively, the position referenced by $x$ and the current position in the valuation of $\varphi$ represent the endpoints of the interval on which $f(\varphi,\tau)$ is currently evaluated. The mapping $f$, homomorphic w.r.t.~the Boolean connectives, is inductively defined as follows:
  \begin{compactitem}
  \item $f(x,\tau)\DefinedAs  \neg \top$ for each $\tau\in \{x<cur ,x>cur \}$.
    \item $f(x,x=cur)\DefinedAs  \top$.
  \item $f(p,x<cur)\DefinedAs  \hsE (\len{1}\wedge p)$.
  \item $f(p,x>cur)\DefinedAs  \hsB (\len{1}\wedge p)$.
  \item $f(p,x=cur)\DefinedAs  p$.
  \item $f(\Swap_x.\varphi,x<cur) \DefinedAs f(\varphi,x>cur)$.
  \item $f(\Swap_x.\varphi,x>cur) \DefinedAs f(\varphi,x<cur)$.
  \item $f(\Swap_x.\varphi,x=cur) \DefinedAs f(\varphi,x=cur)$.
  \item $f(\Eventually \varphi,\tau) \DefinedAs \hsBt f(\varphi,x<cur )$ for each $\tau\in \{x<cur ,x=cur \}$.
    \item $f(\Eventually \varphi,x>cur ) \DefinedAs \hsE (\neg\len{1}\wedge f(\varphi,x>cur )) \vee
     \hsE (\len{1}\wedge f(\varphi,x=cur )) \vee \hsA (\neg\len{1}\wedge f(\varphi,x<cur ))$.
  \item $f(\PEventually \varphi,\tau) \DefinedAs \hsEt f(\varphi,x>cur )$ for each $\tau\in \{x>cur ,x=cur \}$.
        \item $f(\PEventually \varphi,x<cur ) \DefinedAs \hsB (\neg\len{1}\wedge f(\varphi,x<cur )) \vee
     \hsB (\len{1}\wedge f(\varphi,x=cur )) \vee \hsAt (\neg\len{1}\wedge f(\varphi,x>cur ))$.
  \end{compactitem}\vspace{0.1cm}

\noindent By a straightforward induction on the structure of $\varphi$, we obtain that given a trace $w$, a position $j\geq 0$, and a valuation $g$ such that $g(x)=j$, the following holds:
 \begin{compactitem}
   \item  $(w,j,g)\models \varphi$ iff $[j,j]\models_w f(\varphi,x=cur )$.
   \item For each $i>j$, $(w,i,g)\models \varphi$ iff  $[j,i]\models_w f(\varphi,x<cur )$.
   \item For each $i<j$, $(w,i,g)\models \varphi$ iff  $[i,j]\models_w f(\varphi,x>cur )$.
 \end{compactitem}
 Hence, the $\HS$ formula $f(\varphi,x=cur )$ is equivalent to $\varphi$.  This concludes the proof of Proposition~\ref{prop:FromSimpleDHStoSCHL}.\qed

\section{\TWOEXPSPACE-hardness of $\DHSS$}\label{APP:LowerBoundeSimpleDHS}

In this section, we establish the following result.

\begin{theorem}\label{theo:LowerBoundeSimpleDHS}
Model checking and satisfiability
of $\DHSS$ are at least \TWOEXPSPACE-hard even for the fragment given by monotonic
$\DHSSF{\ABE}$.
\end{theorem}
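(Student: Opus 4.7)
The plan is to reduce from the acceptance problem for deterministic $2^{2^n}$-space-bounded Turing machines, a standard \TWOEXPSPACE-complete problem. Given such a machine $M$ and an input $x$ of length $n$, I will construct in polynomial time a monotonic $\DHSSF{\ABE}$ formula $\varphi_{M,x}$ (together with a fixed Kripke structure for the model checking variant) such that $\varphi_{M,x}$ is satisfiable iff $M$ accepts $x$.

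The encoding is driven by a ``block size'' $N \DefinedAs 2^n$. Since constants in the constrained modalities are binary-encoded, the formula $\hsB_{\Delta \leq -N+1}\top \wedge \neg \hsB_{\Delta \leq -N}\top$ characterises the intervals of length exactly $N$ by a polynomial-size formula, a level of succinctness unavailable to plain $\HS$. Each tape cell of $M$ is encoded as a length-$N$ block whose first $N-1$ positions spell out an $(N-1)$-bit binary cell address, while the last position carries the tape symbol and, if applicable, the control state. A single configuration is the concatenation of $2^{N-1} \approx 2^{2^n}$ such blocks, and a computation is a sequence of such configuration codes separated by a fresh marker proposition.

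The formula $\varphi_{M,x}$ then conjoins clauses enforcing this block structure, the bit-by-bit incrementation of the $(N-1)$-bit address between consecutive blocks, the encoding of $x$ in the initial configuration, an acceptance clause, and a transition-consistency clause asserting that, for each cell at address $a$ in each configuration, the cell at the same address in the next configuration agrees with $M$'s transition function applied to the cells at addresses $a-1$, $a$, $a+1$. The key subroutine underlying all of these is a polynomial-size ``address-matching gadget'' that, given two cell blocks possibly lying far apart in the trace, checks bit-by-bit equality or incrementation of their $(N-1)$-bit addresses. The gadget combines $\hsA$ to navigate from one block to the other, $\hsB$ and $\hsE$ to select the appropriate bit positions inside each length-$N$ block, and the length primitives above to align block boundaries; its polynomial size crucially relies on the \EXPSPACE-expressive bit-comparison power of the fragment $\AB$ already exploited in the lower bound of~\cite{BozMPS2021}.

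The main obstacle will be designing this matching gadget so that it stays within the monotonic fragment (no equality constraints on the quantitative modalities) and remains polynomial in $n$ despite having to coordinate three layers: the length-$N$ block structure, the $(N-1)$-bit inner counter for intra-configuration addresses, and the inter-configuration alignment of corresponding cells. The underlying intuition is that the binary length constants give one exponential of succinctness (length-$2^n$ blocks by polynomial formulas) and the $\AB$-based bit-comparison machinery gives a second exponential (matching $n$-bit-indexed bits across distance $2^n$); composing them yields a $2^{2^n}$-addressable tape, and the \TWOEXPSPACE-hardness of the reduction follows. Since the Kripke structure used in the model checking variant can be taken trivial (a universal labelling over $\Prop$), the same construction yields the lower bound for both satisfiability and model checking.
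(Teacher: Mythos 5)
There is a genuine gap, and it sits exactly at the place you yourself flag as ``the main obstacle'': the address-matching gadget is not constructed, and with the encoding you chose it cannot be made polynomial by the means you invoke. In your layout a cell address is a \emph{raw} string of $N-1=2^n-1$ bits, one bit per position of a length-$N$ block, with no further structure. To enforce transition consistency you must relate, bit by bit, the addresses of two cells lying in consecutive configurations, i.e.\ at distance roughly $2^{2^n}\cdot 2^n$. Neither of the two available mechanisms suffices: (i) addressing the $k$-th bit of a block individually needs $\len{k}$ (or a distinct constrained constant per offset), and there are $2^n$ offsets, so enumerating them is exponential; (ii) pinning ``same offset'' by a fixed distance would require a constraint whose constant is one configuration length, about $2^{2^n}\cdot 2^n$, whose binary encoding has $\approx 2^n$ bits --- and the paper's size measure counts the bits of the largest constant, so such a formula is already exponential. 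The appeal to the $\AB$ bit-comparison machinery of~\cite{BozMPS2021} does not rescue this: that machinery compares $n$-bit addresses whose bits sit at \emph{polynomially bounded} offsets from a block start, which buys one exponential only; it says nothing about comparing $2^n$-bit strings whose individual bits are not polynomially addressable.

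The missing idea is a two-level counter encoding together with a specific use of the constrained modalities. In the paper's proof (a reduction from the $2^{2^n}$-width tiling problem of~\cite{Boas97}, inessentially different from your TM reduction), each of the $2^n$ bits of a cell address is packaged in a small block of length $n+1$ that carries the bit \emph{and its own $n$-bit index}; a cell code thus has length $c=(n+1)\cdot 2^n+1$. Corresponding bits of two far-apart cells are then identified not by their offsets but by equality of their $n$-bit index tags, which is a polynomial-size $\BE$ check since the tag bits are at offsets at most $n+1$. The column/transition matching is expressed by universally quantifying over pairs (block in the first cell, block in the second cell) via the compound modality $\hsUB_{\Delta\geq -c}\hsUE_{\Delta\geq -c}$: the singly exponential, binary-encoded bound $c$ is what anchors the universally chosen blocks inside the two cells being compared, and this is precisely where the quantitative constraints of monotonic $\DHSSF{\ABE}$ are indispensable (your use of binary constants only to define blocks of length exactly $2^n$ does not exploit this). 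Without the inner index tags and this bounded universal selection, your clauses for inter-configuration consistency (and, as written, even the intra-configuration address incrementation) have no polynomial-size realization, so the reduction as proposed does not go through.
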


We focus on \TWOEXPSPACE-hardness of the satisfiability problem for monotonic $\DHSSF{\ABE}$ (the result for model checking is similar). 
The proof is by a polynomial-time reduction from a domino-tiling problem for grids with rows of length
$2^{2^{n}}$~\cite{Boas97}, where $n$ is an input
parameter. Formally, an instance $\Instance$ of this problem is a
tuple $\Instance =\tpl{C,\Delta,n,d_\Init,d_\acc}$, where $C$ is a
finite set of colors, $\Delta \subseteq C^{4}$ is a set of tuples
$\tpl{c_\Down,c_\Left,c_\Up,c_\Right}$ of four colours, called
\emph{domino-types}, $n>0$ is a natural number encoded in
\emph{unary}, and $d_\Init,d_\acc\in\Delta$ are two designated domino-types.  A  \emph{grid of $\Instance$}  is a mapping
$f:\Nat\times [0,2^{2^{n}}-1] \rightarrow \Delta$. Thus a grid has an infinite number of rows where
each row consists of $2^{2^{n}}$ cells, and each cell contains a
domino type.  A \emph{tiling  of $\Instance$} is a grid
$f$ satisfying the following additional constraints:\vspace{-0.2cm}

\begin{description}
 \item [Initialization:]
$f(0,0)=d_\Init$
     \item [Row adjacency:]  two adjacent cells in a row
       have the same color on the shared edge: for all
       $(i,j)\in \Nat\times [0,2^{2^{n}}-2]$,
       $[f(i,j)]_{\Right}=[f(i,j+1)]_{\Left}$.
  \item [Column adjacency:]  two adjacent cells in a column have the same color on the shared edge: for all $(i,j)\in \Nat\times [0,2^{2^{n}}-1]$,
   $[f(i,j)]_{\Up}=[f(i+1,j)]_{\Down}$.
 \item[Acceptance:] for infinitely many $\ell\in \Nat$, $f(\ell,0)=d_\acc$.\vspace{-0.2cm}
\end{description}

The  problem of checking the existence  of a  tiling for $\Instance$ is  \TWOEXPSPACE-complete~\cite{Boas97}.
We build, in time polynomial in
 the size of 
 $\Instance$, a monotonic $\DHSSF{\ABE}$ formula $\varphi_\Instance$ s.t.~$\Instance$ has a tiling
 \emph{iff} $\varphi_\Instance$ is satisfiable.
\end{minipage}

\begin{minipage}{16cm}
 \noindent \textbf{Encoding of grids.}
 First, we define a suitable encoding of the grids of $\Instance$ by traces over the set
$\Prop$ of atomic propositions given by $\Prop\DefinedAs \Delta\cup\{\#_1,\#_2, 0,1 \}$,
 Essentially, the  code of a grid $f$ is obtained by
concatenating the codes of the rows of $f$ starting from the first row. The code of
a row is in turn obtained by concatenating the codes of the row's cells starting from the first cell.
In the encoding of a cell of a grid, we keep track of
the content of the cell  together with a suitable encoding of the cell
number which is an integer in $[0,2^{2^{n}}-1]$.
We use a $2^{n}$-bit counter for expressing integers in the range
$[0,2^{2^{n}}-1]$ and an $n$-bit counter for keeping track of the position (index) $i\in [0,2^{n}-1]$ of the $(i+1)^{th}$-bit of each valuation $v$ of the $2^{n}$-bit counter. In particular, such a valuation $v\in [0,2^{2^{n}}-1]$ is encoded by a sequence of $2^{n}$ blocks of length $n+1$ where
for each $i\in [0,2^{n}-1]$, the $(i+1)^{th}$ block encodes both the value and the index of the $(i+1)^{th}$-bit in the binary representation of $v$.

Formally, a \emph{block} is a finite word $bl$ over $2^{\Prop}$ of length $n+1$ of the form $bl =\{\#_1,bit\}\{bit_1\},\ldots,\{bit_n\}$ where $bit,bit_1,\ldots,bit_n\in \{0,1\}$. The \emph{content} of $bl$ is $bit$, and the \emph{index} of $bl$ is the number in $[0,2^{n}-1]$ whose binary code is
$bit_1,\ldots,bit_n$. A \emph{cell code} is a finite word $\nu$ of length $(n+1)*2^{n}+1$ of the
form $\nu= \{d\}\cdot bl_0\cdot \ldots \cdot bl_{2^{n}-1}$, where $d\in \Delta$, and for each $i\in [0,2^{n}-1]$, $bl_i$
is a block having index $i$. The \emph{content} of $\nu$ is the domino-type $d$, and the  \emph{index} of $\nu$ is the natural number
in $[0,2^{2^{n}}-1]$ whose binary code is $bit_0,\ldots,bit_{2^{n}-1}$, where $bit_i$ is the content of the block $bl_i$
for each $i\in [0,2^{n}-1]$. A row is then encoded by a finite word of the form $\{\#_2\}\cdot cell_0\cdot\ldots \cdot cell_{2^{2^{n}}-1}$,
where $cell_i$ is a cell code of index $i$ encoding the $(i+1)^{th}$ cell of the row for each $i\in [0,2^{2^{n}}-1]$.\vspace{0.1cm}

 \noindent \textbf{Construction of the monotonic $\DHSSF{\ABE}$ formula $\varphi_\Instance$.} We now construct in polynomial time
 a  monotonic $\DHSSF{\ABE}$ formula $\varphi_\Instance$ capturing the codes of the $\Instance$-tilings.
   Formula $\varphi_\Instance$ is defined as $\varphi_\Instance\DefinedAs \psi_{grid}\wedge  \psi_{col}$.
The conjunct $\psi_{grid}$ is an $\ABE$ formula capturing the traces $w$ which are codes of grids $f$ satisfying
the initialization, the row adjacency, and the acceptance requirement, while the conjunct $\psi_{col}$ enforces
the column adjacency requirement. The construction of $\psi_{grid}$ follows a pattern similar to the one exploited
for proving \EXPSPACE-completeness of the fragment $\BE$ of $\HS$~\cite{BMMPS16}.
Thus, we omit the details about the construction of $\psi_{grid}$, and we focus on the definition of
 $\psi_{col}$, where the use of the constrained versions of the modalities $\hsB$ and $\hsE$ is crucial.

 In order to ensure the column adjacency requirement, we need to ensure that for each cell code $cell$ along a row $r$ of the trace $w$ (encoding a grid), denoted by $cell'$ the cell of the row following $r$ along $w$ and having the same index as cell, it holds that $(d)_{\Up}=(d')_{\Down}$,
 where $d$ is the content of $cell$, and $d'$ is the content of $cell'$. In order to express this requirement,
we define auxiliary monotonic $\DHSSF{\ABE}$ formulas. Recall that proposition $\#_1$ marks the first position of a block, while
$\#_2$ marks the first position of a row.  
%
%
For each $p\in\Prop$, it is easy to define an $\AB$ formula $\psi_{one}(p)$ ensuring that proposition $p$ occurs exactly once in the current interval, and an $\AB$ formula $\psi_{cell}$ which holds at an interval $I$ of a grid code if $I$ corresponds to a cell code. Moreover, the following $\BE$ formula $\psi_{comp\_blocks}$  holds at an interval $I$ of a grid code iff $I$ starts at a block $bl$ and ends at a block $bl'$ such that $bl$ and $bl'$ have the same index.\vspace{0.1cm}
 
\hspace{0.2cm}$
  (\Left(\#_1)\wedge \hsE(\len{n+1}\wedge \Left(\#_1))) \wedge 
  \displaystyle{\bigwedge_{b\in \{0,1\}}\bigwedge_{i=2}^{n+1}}\hsB( \len{i}\wedge right(b)) \rightarrow \hsE(\len{n+2-i}\wedge \Left(b))
$\vspace{0.1cm}

\noindent Let $c$ be the integer  given by $c\DefinedAs  (n+1)*2^{n}+1$. Recall that a cell code has length $c$.
We now define a $\DHSF{\BE}$ formula $\psi_{comp\_cell}$ which holds at an interval $I$ of a grid code iff $I$ starts with a cell code
$cell$ and ends with a cell code $cell'$ such that (i) $cell$ and $cell'$ belong  to two adjacent row codes, and (ii)
$cell$ and $cell'$ have the same index. In order to satisfy the second requirement, we exploit the compound modality $\hsUB_{\Delta\geq -c}\hsUE_{\Delta\geq -c}$ for universally selecting the infixes $I'$ of $I$ which start  at a position of $cell$
and end at a position of $cell'$. Then, we require that for all those $I'$ which start with a block $bl$ and end with a block $bl'$ (note that $bl$ is a block of $cell$ and $bl'$ is a block of $cell'$) such that $bl$ and $bl'$ have the same index, it holds that $bl$ and $bl'$ have the same content as well.  \vspace{-0.2cm}
\[
\begin{array}{ll}
\psi_{comp\_cells}\DefinedAs & \hsB \psi_{cell}\wedge \hsE \psi_{cell} \wedge \psi_{one}(\#_2) \wedge \\
 & \hsUB_{\Delta\geq -c}\hsUE_{\Delta\geq -c} [\psi_{comp\_blocks} \rightarrow \displaystyle{\bigvee_{b\in\{0,1\}}}(\Left(b)\wedge \hsE (\len{n+1}\wedge \Left(b))]\vspace{-0.2cm}
\end{array}
\]
\noindent where $c=(n+1)*2^{n}+1$. Finally, the conjunct $\psi_{col}$ of $\varphi_\Instance$ ensuring the adjacency column requirement along a grid code is defined as follows:\vspace{0.1cm}

\hspace{0.2cm}$
\psi_{col}\DefinedAs \hsUA\hsUA \,[\psi_{comp\_cells}\,\, \longrightarrow\displaystyle{\bigvee_{d,d'\in \Delta:\,  (d)_{\Up}=(d')_{\Down}}}(\Left(d)\wedge \hsE (\psi_{cell}\wedge \Left(d\,')))\, ]
$\vspace{0.1cm}

By construction, there is a tiling of $\Instance$ iff $\varphi_{\Instance}$ is satisfiable. This concludes the proof of Theorem~\ref{theo:LowerBoundeSimpleDHS}.
\end{minipage}

\begin{minipage}{16cm}
\section{Monotonic normal form of $\MCHL_1$ formulas}\label{APP:MonotonicNormalForm}

Let $\varphi$ and $\psi$ be two $\MCHL_1$ formulas. We say that $\varphi$ and $\psi$ are \emph{strongly equivalent}, written
$\varphi\equiv\psi$, if for all traces $w$, positions $i$, and valuations $g$, $(w,i,g)\models \varphi \Leftrightarrow (w,i,g)\models \psi$. By using De Morgan's laws and the following strong equivalences, a $\MCHL_1$ formula can be easily converted in linear-time into an equivalent $\MCHL_1$ formula in $\MNF$: for $c\geq 1$,  (i) $\neg \DBinder.\,\psi \equiv \DBinder.\, \neg\psi$, (ii) $\Eventually_{>c}\psi\equiv \Always_{\leq c}\Eventually\,\psi$, (iii) $\PEventually_{>c}\psi\equiv \PEventually\top\wedge \PAlways_{\leq c}\PEventually\,\psi$, (iv) $\Always_{>c}\psi\equiv \Eventually_{\leq c}\Always\,\psi$, and (v)
$\PAlways_{>c}\psi\equiv \PEventually_{\leq c}\PAlways\,\psi$.

\section{Proof of Theorem~\ref{theo:characterizationCHLOne}}\label{APP:characterizationCHLOne}

Theorem~\ref{theo:characterizationCHLOne} directly follows from the following  Lemmata~\ref{lemma:completenessCHLOne}--\ref{lemma:correctnessCHLOne}, where $\varphi$ is a monotonic $\CHL_1$ formula in $\MNF$ whose unique variable $x$ may occur free.

\begin{lemma}[Completeness]\label{lemma:completenessCHLOne} Assume that  $(w,i,g)\models \varphi$. Then, there exists a
 fulfilling $\varphi$-sequence  $\rho=A_0,A_1,\ldots$ over the
pointed trace $(w,g(x))$ such that $\varphi\in A_i$.
\end{lemma}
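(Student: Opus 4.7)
The plan is to construct the $\varphi$-sequence $\rho=A_0,A_1,\ldots$ semantically from $(w,g)$ and then verify the fulfilling property by induction on the $\DBinder$-nesting depth of $\varphi$. For each position $j\geq 0$, I would define $A_j$ to consist of (i) every $\psi\in cl(\varphi)$ such that $(w,j,g)\models\psi$, together with (ii) every obligation in $obl(\varphi)$ whose intended local meaning holds at~$j$. Concretely, I would put $(\Eventually_{\leq c}\psi,d)\in A_j$ iff $1\leq d<c$, $(w,j+d,g)\models\psi$, and $(w,j+d',g)\not\models\psi$ for every $1\leq d'<d$; and $(\Always_{\leq c}\psi,d)\in A_j$ iff $1\leq d<c$, $\psi$ holds at every position in $[j+1,j+d]$ and fails at $j+d+1$; with symmetric definitions for $(\PEventually_{\leq c}\psi,d)$ and $(\PAlways_{\leq c}\psi,d)$ looking backwards from $j$.

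Next I would verify that each $A_j\in\Atoms(\varphi)$. The Boolean clauses are immediate from the semantics and from $\varphi$ being in $\MNF$, so $\widetilde{\psi}$ coincides with the semantic negation of $\psi$. Uniqueness of the obligation associated with a fixed $\OP_{\leq c}\psi$ follows from the uniqueness of the minimal (respectively maximal) witness chosen in (ii). The initial-atom condition on $A_0$ holds vacuously since no position precedes $0$, so neither $\PEventually$-formulas nor past obligations appear there; propositional consistency reduces to the semantics of atomic propositions, and $x\in A_j\Leftrightarrow j=g(x)$ to the semantics of the variable.

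The crucial bookkeeping is then $A_{j+1}\in\Succ_\varphi(A_j)$ for every $j\geq 0$. The unconstrained clauses are the classical \LTL\ fixpoint unfoldings. For $\Eventually_{\leq c}\psi$ at $j$ the semantics gives some $j'\in[j+1,j+c]$ with $\psi$ true at $j'$: either the minimal such $j'$ equals $j+1$ (so $\psi\in A_{j+1}$), or it is $j+d$ with $1<d\leq c$, and then from $j+1$ this witness sits at relative distance $d-1$, placing the obligation $(\Eventually_{\leq c}\psi,d-1)\in A_{j+1}$; this matches the $\Succ_\varphi$ clause exactly, including the dichotomy $d=1$ vs.\ $d>1$ in the obligation-propagation rule. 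The $\Always_{\leq c}$ case is symmetric, where the auxiliary side condition $\Eventually_{\leq 1}\psi\notin A'$ for the $d=1$ obligation precisely encodes that the recorded $j+d$ is the \emph{last} position at which $\psi$ holds consecutively. Past obligations are treated dually. Fairness is the standard argument: if $\psi\in A_i$ for infinitely many $i$ we are done; otherwise $\psi$ fails from some point on, so by the semantics $\Eventually\psi\notin A_i$ for all sufficiently large $i$.

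Finally, the fulfilling property follows by induction on the maximum $\DBinder$-nesting depth $k$ of $\varphi$: the base case $k=0$ is vacuous, and for $\DBinder.\psi\in A_i$ the construction ensures $(w,i,g)\models\DBinder.\psi$, hence $(w,i,g[x\mapsto i])\models\psi$, so the inductive hypothesis applied to $\psi$ (which has strictly smaller depth and unique free variable $x$) yields the required fulfilling $\psi$-sequence over $(w,i)$ containing $\psi$ at position $i$. The main technical obstacle throughout is the careful matching between shifted witness positions and the decrement of obligation indices in $\Succ_\varphi$, most delicately in the $\Always_{\leq c}$ and $\PAlways_{\leq c}$ clauses where the auxiliary $\Eventually_{\leq 1}\psi$ and $\PEventually_{\leq 1}\psi$ side conditions are what enforce maximality of the recorded witness.
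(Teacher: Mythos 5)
Your proposal is correct and follows essentially the same route as the paper's proof: you define each atom $A_j$ semantically (formulas of $cl(\varphi)$ true at $j$, plus obligations recording the minimal witness for $\Eventually_{\leq c}$/$\PEventually_{\leq c}$ and the maximal consecutive witness for $\Always_{\leq c}$/$\PAlways_{\leq c}$), verify the $\Succ_\varphi$ and fairness conditions, and obtain the fulfilling property by induction on the $\DBinder$-nesting depth, which matches the paper's structural induction. The only difference is that you spell out the $\Succ_\varphi$ bookkeeping that the paper dismisses as immediate "by construction", which is a faithful elaboration rather than a different argument.
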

\begin{proof} The proof is by induction on the structure of $\varphi$. For each $h\geq 0$, let $A_h$ be the subset of
$cl(\varphi)\cup obl(\varphi)$ satisfying the following requirements:
\begin{compactitem}
  \item for each $\psi\in cl(\varphi)$, $\psi\in A_h$ $\Leftrightarrow$ $(w,h,g)\models \psi$;
  \item for each $(\Eventually_{\leq c}\psi,d)\in obl(\varphi)$, $(\Eventually_{\leq c}\psi,d)\in A_h$ $\Leftrightarrow$
  there are $j>h$ such that $(w,j,g)\models \psi$ and $h+d$ is the smallest of such $j$;
  \item for each $(\PEventually_{\leq c}\psi,d)\in obl(\varphi)$, $(\PEventually_{\leq c}\psi,d)\in A_h$ $\Leftrightarrow$ $h\geq d$,
  there are $j<h$ such that $(w,j,g)\models \psi$, and $h-d$ is the greatest of such $j$;
    \item for each $(\Always_{\leq c}\psi,d)\in obl(\varphi)$, $(\Always_{\leq c}\psi,d)\in A_h$ $\Leftrightarrow$
  there are $j>h$ such that $(w,k,g)\models \psi$ for all $k\in [h+1,j]$, and $h+d$ is the greatest of such $j$;
      \item for each $(\PAlways_{\leq c}\psi,d)\in obl(\varphi)$, $(\PAlways_{\leq c}\psi,d)\in A_h$ $\Leftrightarrow$ $h\geq d$,
  there are $j<h$ such that $(w,k,g)\models \psi$ for all $k\in [j,h-1]$, and $h-d$ is the smallest of such $j$.
\end{compactitem}
By construction, it easily follows that $A_h$ is a $\varphi$-atom and the infinite sequence $\rho=A_0,A_1,\ldots$ is a $\varphi$-sequence over $(w,g(x))$ such that $\varphi\in A_i$ (recall that by hypothesis $(w,i,g)\models \varphi$). If $\varphi$ does not contain occurrences of the  modality $\DBinder$, then
$\rho$ is obviously fulfilling. Otherwise,  we apply the induction hypothesis, for concluding that $\rho$ is fulfilling in this case too.
\end{proof}

\begin{lemma}[Correctness] \label{lemma:correctnessCHLOne} Let $\rho=A_0,A_1,\ldots$ be a fulfilling $\varphi$-sequence over the
pointed trace $(w,\ell)$, and $g$ be a valuation such that $g(x)=\ell$. Then for all $i\geq 0$ and $\psi\in cl(\varphi)$, it holds that $\psi\in A_i \Leftrightarrow (w,i,g)\models \psi$.
\end{lemma}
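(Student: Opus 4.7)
The plan is to prove the lemma by a double induction: an outer induction on the nesting depth of the binder modality $\DBinder$ in $\varphi$, and at each depth an inner structural induction on $\psi \in cl(\varphi)$. The base cases are immediate: for $\top$ the claim is trivial, for propositions $p \in \Prop$ it follows from the propositional consistency clause $A_i \cap \Prop = w(i)$, and for the variable $x$ it follows from the placement conditions $x \in A_\ell$ and $x \notin A_i$ for $i \neq \ell$ combined with $g(x) = \ell$. Boolean connectives are handled directly by the atom closure clauses.

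For the unconstrained temporal modalities, the equivalence is driven by the local $\Succ_\varphi$-constraints and, in the $\Eventually$ case, by fairness. If $\Eventually\psi \in A_i$, then iterating the $\Eventually$-requirement produces either some $j > i$ with $\psi \in A_j$ (using the inner IH on $\psi$ to conclude $(w,j,g) \models \psi$), or an infinite tail in which $\Eventually\psi$ persists without $\psi$ ever appearing, which is excluded by fairness. Conversely, if $(w,i,g) \models \Eventually\psi$ with witness $j > i$, the inner IH puts $\psi \in A_j$, and the $\Eventually$-requirement propagates $\Eventually\psi$ backward through $A_{j-1}, \ldots, A_i$ by a finite reverse induction on $j - i$. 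The cases of $\PEventually, \Always, \PAlways$ are fully symmetric (fairness is not needed for the universal ones, and $\PEventually\psi$ uses the fact that $A_0$ is initial so no spurious $\PEventually\psi$ can be present without a past witness).

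For the constrained modalities, the essential step is an auxiliary claim that characterizes when an obligation belongs to an atom. For $\Eventually_{\leq c}\psi$ we prove, by induction on $d$ running from $1$ to $c-1$, that $(\Eventually_{\leq c}\psi, d) \in A_h$ if and only if $\psi \in A_{h+d}$ and $\psi \notin A_{h+k}$ for every $1 \leq k < d$; this unfolds directly from the $\Eventually_{\leq c}$-requirement in $\Succ_\varphi$. Combining this with the clause that $\Eventually_{\leq c}\psi \in A_h$ iff $\psi \in A_{h+1}$ or $(\Eventually_{\leq c}\psi, d) \in A_{h+1}$ for some $1 \leq d < c$ yields $\Eventually_{\leq c}\psi \in A_h$ iff there exists $j$ with $h < j \leq h+c$ and $\psi \in A_j$. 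The inner IH on $\psi$ then identifies this with $(w,h,g) \models \Eventually_{\leq c}\psi$. The cases of $\PEventually_{\leq c}$, $\Always_{\leq c}$, and $\PAlways_{\leq c}$ are analogous, with the only subtlety being that the $\Always_{\leq c}$- and $\PAlways_{\leq c}$-obligations are designed to track the \emph{greatest} $d$ for which $\psi$ holds throughout the interval, rather than the smallest; the unfolding direction of the auxiliary induction is thus reversed but structurally identical.

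For the binder $\DBinder.\psi$, the outer induction hypothesis is used in both directions. If $\DBinder.\psi \in A_i$, the definition of a fulfilling $\varphi$-sequence provides a fulfilling $\psi$-sequence $\rho' = A'_0, A'_1, \ldots$ over $(w,i)$ with $\psi \in A'_i$; since the binder nesting depth of $\psi$ is strictly smaller than that of $\DBinder.\psi$, the outer IH applied to $\rho'$ and the valuation $g[x \mapsto i]$ yields $(w,i,g[x \mapsto i]) \models \psi$, hence $(w,i,g) \models \DBinder.\psi$. Conversely, if $\DBinder.\psi \notin A_i$, atom closure forces $\DBinder.\widetilde{\psi} \in A_i$ (since the dual of $\DBinder.\psi$ is $\DBinder.\widetilde{\psi}$); the same argument applied to the dual supplies a fulfilling $\widetilde{\psi}$-sequence from which the outer IH yields $(w,i,g) \models \DBinder.\widetilde{\psi}$, contradicting $(w,i,g) \models \DBinder.\psi$. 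The principal obstacle is precisely this binder case: one must carefully arrange the two inductions so that the appeal to the outer IH is made on a formula of strictly smaller binder depth, and justify the reverse direction via the dual rather than by a direct construction (which is reserved for the companion Lemma~\ref{lemma:completenessCHLOne}), thereby avoiding circularity.
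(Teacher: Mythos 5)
Your proof is correct and follows essentially the same route as the paper's: a structural induction on the formula (your outer induction on binder nesting depth is just explicit bookkeeping of what the paper leaves implicit), the characterization of the constrained modalities by unfolding the obligations through $\Succ_\varphi$ (your induction on $d$ proves a claim the paper merely asserts), and the binder case handled exactly as in the paper, via the fulfilling sub-sequence for the forward direction and the dual atom $\DBinder.\widetilde{\psi}$ plus a contradiction for the converse.

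One local inaccuracy: the parenthetical claim that fairness is not needed for the universal modalities fails for the future modality. The local $\Succ_\varphi$ constraints alone admit sequences in which $\psi$ belongs to every atom while $\Always\psi$ belongs to none (equivalently, $\Eventually\widetilde{\psi}$ belongs to all of them, by atom duality), so the direction $(w,i,g)\models\Always\psi\Rightarrow\Always\psi\in A_i$ does require the fairness condition, applied to $\Eventually\widetilde{\psi}\in cl(\varphi)$; it is only for the past modality $\PAlways$ (and $\PEventually$) that the initial-atom condition on $A_0$ suffices. Since fairness is part of your definition of a $\varphi$-sequence and you already use it for $\Eventually$, this is a one-line repair, not a structural defect of the argument.
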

\begin{proof}
Fix $i\geq 0$ and $\psi\in cl(\varphi)$. The proof is by induction on the structure of $\psi$. The cases where
$\psi$ is a  proposition or $\psi=x$ directly follow from the definition of $\varphi$-sequence. The cases where the root operator of $\psi$ is either a Boolean connective or an unconstrained temporal modality easily follow from the definitions of $\varphi$ and of the function $\Succ_\varphi$, the fairness requirement in the definition of $\varphi$-sequence, and the induction hypothesis. We now consider the remaining cases:
\begin{compactitem}
  \item $\psi=\Eventually_{\leq c}\theta$ with $c\geq 1$: by the $\Eventually_{\leq c}$-requirements in the definition of $\Succ_\varphi$, it easily follows that $\psi\in A_i$ $\Leftrightarrow$ there is $j>i$ such that $j-i\leq c$ and $\theta\in A_j$. Thus, by the induction hypothesis on $\theta$, the result follows.
  \item $\psi=\PEventually_{\leq c}\theta$ with $c\geq 1$: this case is similar to the previous one. 
  \item $\psi=\Always_{\leq c}\theta$ with $c\geq 1$: by the $\Always_{\leq c}$-requirements in the definition of $\Succ_\varphi$, it easily follows that $\psi\in A_i$ $\Leftrightarrow$ there is $j>i$ such that $j-i\leq c$ and $\theta\in A_h$ for all $h\in [i+1,j]$. Thus, by the induction hypothesis on $\theta$, the result follows.
  \item $\psi=\PAlways_{\leq c}\theta$ with $c\geq 1$: this case is similar to the previous. 
  \item $\psi=\DBinder.\,\theta$. First, assume that $\DBinder.\,\theta\in  A_i$. Since $\rho=A_0,A_1,\ldots$ is fulfilling, there exists a fulfilling $\theta$-sequence $\rho'=A'_0,A'_1,\ldots$ over $(w,i)$ such that $\theta\in A'_i$. By the induction hypothesis, it follows that $(w,i,g')\models \theta$, where $g'(x)=i$. Hence, $(w,i,g)\models \DBinder.\,\theta$, and the result follows. For the converse direction, assume that $(w,i,g)\models \DBinder.\,\theta$. We need to show that $\DBinder.\,\theta\in A_i$. We assume the contrary and derive a contradiction. By definition of $\varphi$-atom,
       $\DBinder.\,\widetilde{\theta}\in A_i$. Since
      $\rho=A_1,A_2,\ldots$ is a fulfilling $\varphi$-sequence over $(w,\ell)$, there must exist
      a fulfilling $\widetilde{\theta}$-sequence $\rho'=A'_0,A'_1,\ldots$ over $(w,i)$ such that
      $\widetilde{\theta}\in A'_i$. By the induction hypothesis, it follows that
      $(w,i,g')\models \widetilde{\theta}$ where $g'(x)=i$. Hence, $(w,i,g)\models \DBinder.\,\widetilde{\theta}$
      which contradicts the hypothesis $(w,i,g)\models \DBinder.\,\theta$. 
\end{compactitem}
\end{proof}
\end{minipage}

\begin{minipage}{16cm}
\section{From monotonic $\CHL_1$ to generalized B\"{u}chi \TwoAWA}\label{APP:automataCHLOne}

In this section, we formalize the translation of monotonic $\CHL_1$ sentences into generalized B\"{u}chi $\TwoAWA$ based on the result of Theorem~\ref{theo:characterizationCHLOne}. We first recall the class of generalized B\"{u}chi $\TwoAWA$.

For a  set $X$, $\Bool^{+}(X)$   denotes the set
of \emph{positive} Boolean formulas over $X$ 
(we also allow the formulas $\True$ and
$\False$). A subset $Y$ of $X$  \emph{satisfies}
$\xi\in\Bool^{+}(X)$ iff the truth assignment that assigns $\True$
to the elements in $Y$ and $\False$ to the elements of $X\setminus
Y$ satisfies $\xi$.

A tree $T$ is a   prefix closed subset of $\Nat^{*}$.  Elements of $T$ are called nodes and $\varepsilon$ (the empty word) is the root of $T$. For $x\in T$, a child of $x$ in $T$ is a node in $T$
  of the form $x\cdot n$ for some $n\in \Nat$.  A  path of $T$ is a maximal sequence $\pi$ of nodes such that   $\pi(i)$ is a child in $T$ of $\pi(i-1)$ for all $0<i<|\pi|$.   For an alphabet $\Sigma$, a $\Sigma$-labeled   tree is a pair $\tpl{T, \Lab}$, 
   where $T$ is a tree and $\Lab:T \mapsto \Sigma$. 

A generalized B\"{u}chi \TwoAWA\  is a tuple
 $\Au=\tpl{\Sigma,Q,Q_0,\delta,F_-,\Frag}$, where
$\Sigma$ is a finite input alphabet, $Q$ is a finite set of states, $Q_0\subseteq Q$ is a set of initial states,
$\delta:Q\times \Sigma\rightarrow \Bool^{+}(\{\downarrow,\uparrow\}\times Q)$ is
a transition function, $F_-\subseteq Q$ is a \emph{backward acceptance condition},   and $\Frag=\{F_1,\ldots,F_k\}$ is
a family  of sets $F_i$ of accepting states. 
Intuitively, when  $\Au$ is in state $q$, reading the 
symbol $\sigma\in\Sigma$,  then $\Au$ chooses a set of pairs direction/state  $\{(dir_1,q_1),\ldots,(dir_k,q_k)\}$ satisfying $\delta(q,\sigma)$ and
splits in $k$ copies such that the $i^{th}$ copy moves  to the next input symbol  in state $q_i$ if $dir_i= \downarrow$,
and to the previous input symbol in state $q_i$ otherwise.

Formally, a run over an infinite word $w\in \Sigma^\omega$ is  a $Q\times (\Nat\cup\{-1\})$-labeled tree $r=\tpl{T_r,\Lab_r}$   such that the root is labeled by $(q_0,0)$ for some $q_0 \in Q_0$
and for each node $x\in T_r$ with label $(q,i)\in Q\times \Nat$ (describing a copy of $\Au$ in state $q$ which reads $w(i)$),
there is a (possibly empty) set
$H=\{(dir_1,q_1),\ldots,(dir_k,q_k\}\subseteq \{\downarrow,\uparrow\}\times Q$ satisfying $\delta(q,w(i))$ such
that $x$ has $k$ children $x_1,\ldots, x_k$, and for
 $\ell\in [1,k]$, $x_\ell$ has label $(q_\ell,i+1)$ if $dir_\ell = \downarrow$, and label $(q_\ell,i-1)$ otherwise.
  The run $r$ is \emph{accepting} if (i) for each node with label $(q,-1)$, $q\in F_-$, and (ii) for each
infinite path $x_0 x_1\ldots$ in the tree and each accepting component $F\in \Frag$, there are infinitely many $i\geq 0$ such that the state of $x_i$ is  in $F$.
The $\omega$-language  $\Lang(\Au)$ of $\Au$ is the set of infinite
words $w\in\Sigma^\omega$  such that there is an accepting run
 of $\Au$ over $w$.

 By exploiting Theorem~\ref{theo:characterizationCHLOne}, we deduce the following result.

\begin{theorem}\label{theo:automataCHLOne} Given a monotonic $\CHL_1$ sentence $\varphi$,   one can build in singly exponential time a generalized B\"{u}chi $\TwoAWA$ over $2^{\Prop}$ with $2^{O(\varphi)}$ states such that
$\Lang(\Au_\varphi)=\Lang(\varphi)$.
\end{theorem}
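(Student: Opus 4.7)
The plan is to construct $\Au_\varphi$ so that its accepting runs on an input trace $w$ correspond bijectively to fulfilling $\varphi$-sequences over $(w,0)$ containing $\varphi$ at position $0$, i.e.~to the witnesses provided by Theorem~\ref{theo:characterizationCHLOne}. The state set is the disjoint union $\bigcup_{\psi}\Atoms(\psi)$, where $\psi$ ranges over $\varphi$ together with all formulas $\theta$ such that $\DBinder.\theta$ occurs (possibly nested) inside $\varphi$, augmented with a constant number of mode bits (forward vs.~backward; main vs.~spawned). Since $|cl(\psi)|=O(|\varphi|)$ and $|obl(\psi)|=2^{O(|\varphi|)}$ for each such $\psi$, this gives $|Q|=2^{O(|\varphi|)}$. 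The initial-state set forces the opening atom $A_0$ to be an initial $\varphi$-atom, propositionally consistent with $w(0)$, containing $\varphi$, and satisfying $x\in A_0$ (so that the sentence $\varphi$ is evaluated under the valuation $g$ with $g(x)=0$).

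The transition relation acts on two layers. The \emph{main} forward-mode copy, when in state $(A_i,\mathit{fwd})$ reading $w(i)$, chooses some $A_{i+1}\in\Succ_\varphi(A_i)$ consistent with $w(i+1)$ and moves rightward carrying $(A_{i+1},\mathit{fwd})$; propositional consistency and the constraint $x\in A_j$ iff $j=0$ are enforced locally by restricting the permitted atoms in the positive-Boolean transition formula. Whenever a binder formula $\DBinder.\psi$ belongs to the atom at the current node, the automaton simultaneously spawns two \emph{secondary} copies at the same input position $i$: a backward copy that guesses an atom $A'_i\in\Atoms(\psi)$ with $\{x,\psi\}\subseteq A'_i$ consistent with $w(i)$ and then walks leftward generating predecessors via $\Succ_\psi$, and a forward copy that from the same $A'_i$ walks rightward generating successors via $\Succ_\psi$. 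Within these secondary copies, nested binder formulas trigger the same spawning mechanism over the appropriate atom set, and recursion is well-founded because the $\DBinder$-nesting depth strictly decreases at each level.

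Acceptance is split into backward and generalized B\"uchi parts. The backward acceptance set $F_-$ contains precisely the backward-mode states whose underlying atom is initial; this guarantees that a backward copy walking off position $0$ has produced a genuine initial atom $A'_0$, which is what the definition of a $\psi$-sequence requires. The generalized B\"uchi family $\Frag$ contains, for every formula $\Eventually\theta$ appearing in the closure of any relevant $\psi$, the set of states whose underlying atom either contains $\theta$ or does not contain $\Eventually\theta$; visiting each such set infinitely often on every infinite branch enforces the fairness clause in the definition of $\psi$-sequence. Correctness is a direct application of Theorem~\ref{theo:characterizationCHLOne} together with induction on $\DBinder$-nesting depth, and the construction runs in singly exponential time in $|\varphi|$.

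The main obstacle is ensuring that $\Succ_\varphi$ is expressible locally in the transition function despite the fact that $|\Atoms(\varphi)|$ is already singly exponential in $|\varphi|$ because obligations $(\OP_{\leq c}\psi,d)$ carry a binary-encoded index $d\in[1,c-1]$. The resolution is that the $\Succ$-clauses in the preamble to Theorem~\ref{theo:characterizationCHLOne} relate obligations with index $d$ in $A$ only to the obligations with index $d-1$ (and the truth of $\psi$) in $A'$, so the entire relation is captured by a Boolean combination of atomic constraints whose size matches the state-space size $2^{O(|\varphi|)}$. A secondary delicate point is the coexistence of multiple active binder-guesses on different branches of the alternating run-tree; this is handled cleanly because each spawned branch carries its own ``current subformula'' tag in its state, so the branches evolve independently and no global synchronization is needed.
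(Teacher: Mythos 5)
Your construction is essentially the paper's own: states are atoms of $\varphi$ and of the binder subformulas tagged with a direction, the main forward copy guesses the $\varphi$-sequence via $\Succ_\varphi$, each $\DBinder.\psi$ in the current atom spawns a backward and a forward copy anchored at a commonly guessed $\psi$-atom containing $x$ and $\psi$, the backward acceptance set consists of the initial atoms, the generalized B\"uchi components handle the eventualities, and correctness is reduced to Theorem~\ref{theo:characterizationCHLOne} by induction on binder nesting depth. This matches the paper's proof in all essentials, including the size analysis, so no further comment is needed.
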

\begin{proof}
  W.l.o.g.~we assume that $\varphi$ is in $\MNF$. Let $SubF(\varphi)$ be the set of subformulas of $\varphi$.
The generalized B\"{u}chi $\TwoAWA$ $\Au_\varphi=\tpl{2^{\Prop},Q,Q_0,\delta,F_-,\Frag}$ is defined as follows:
\begin{compactitem}
  \item $Q$ is given by $ (\Atoms(\varphi)\cup \bigcup_{\DBinder.\,\psi\in SubF(\varphi)}\Atoms(\psi))\times\{\downarrow,\uparrow\}$.
  \item $Q_0$  is given by $\{(A,\downarrow)\mid A\in \Atoms(\varphi), \text{ A is initial, and } \{x,\varphi\}\subseteq A\}$.
  \item For all $\psi\in \{\varphi\}\cup \{\theta \mid \DBinder.\,\theta\in SubF(\varphi)\}$, $A\in \Atoms(\psi)$, $dir\in \{\downarrow,\uparrow\}$, and
  $\sigma\in 2^{\Prop}$, $\delta((A,dir),\sigma)$ is defined as follows. If $\sigma\neq A \cap \Prop$, then $\rho((a,dir),\sigma)=\False$. Otherwise, $\delta((A,dir),\sigma)= \xi(dir,A,\psi) \wedge \xi_{ful}(A)$, where:
   \begin{itemize}
     \item $\xi(\downarrow,A,\psi)= \displaystyle{\bigvee_{A'\in \Succ_\psi(A):\, x\notin A'}} (\downarrow,(A',\downarrow))$.

     \item  $\xi(\uparrow,A,\psi)= \left\{\begin{array}{ll}
                  (\uparrow,A)             & \text{ if } A  \text{ is initial }\\
                 \displaystyle{\bigvee_{A'\in \Atoms(\psi):\, A\in Succ_\psi(A')\text{ and } x\notin A'}} (\uparrow,(A',\uparrow))            &  \text{ otherwise}
                \end{array}\right.$

     \item $\xi_{ful}(A) = \displaystyle{\bigwedge_{\DBinder.\,\theta\in A} \,\,\,\bigvee_{B\in \Atoms(\theta):\,   \{x,\theta\}\subseteq B}}(\xi(\downarrow,B,\theta)\wedge \xi(\uparrow,B,\theta))$.
   \end{itemize}

  \item $F_-$ is the set of states whose first component is an initial atom.
  \item For each $\Eventually\psi\in SubF(\varphi)\cup SubF(\widetilde{\varphi})$, $\Frag$ contains the B\"{u}chi component
  consisting of the states $(A,\downarrow)$ such that either $\psi\in A$ or $\Eventually\psi\notin A$.
\end{compactitem}
Correctness of the construction easily follows from Theorem~\ref{theo:characterizationCHLOne}. 
\end{proof}
\end{minipage}
\newpage
\end{changemargin}

\end{document}